\theoremstyle{plain}
\newtheorem{theorem}{Theorem} 
\newtheorem{proposition}{Proposition} 
\newtheorem{corollary}{Corollary}
\newtheorem{lemma}{Lemma}
\theoremstyle{remark}
\newtheorem{remark}{Remark}
\theoremstyle{definition}
\newtheorem{definition}{Definition}
\DeclareMathOperator*{\argmin}{arg\,min}
\newcommand{\Mc}{\mathcal{M}}
\newcommand{\Nc}{\mathcal{N}}
\newcommand{\Xc}{\mathcal{X}}
\newcommand{\Yc}{\mathcal{Y}}
\newcommand{\Ah}{{\hat{A}}}
\newcommand{\gh}{{\hat{g}}}
\newcommand{\fh}{{\hat{f}}}
\def\a{\alpha}
\def\b{\beta}
\def\g{\gamma}
\def\d{\delta}
\def\e{\epsilon}
\def\eps{\epsilon}
\def\la{\lambda}
\def\td{\tilde}
\def\o{\omega}
\def\s{\sigma}
\def\ce{\mathfrak{C}}
\def\tf{\mathbb{T}}
\def\gf{\mathbb{G}}
\def\gfa{(\gf,+)}
\def\tfa{(\tf,+)}
\begin{document}

\title{Information Inequalities via Ideas from Additive Combinatorics}

\author{
    \IEEEauthorblockN{Chin Wa (Ken) Lau and Chandra Nair} \\
    \IEEEauthorblockA{
        Department of Information Engineering\\
        The Chinese University of Hong Kong \\
        Sha Tin, NT, Hong Kong \\
        \{kenlau,chandra\}@ie.cuhk.edu.hk
    }
}

\maketitle

\begin{abstract}
    Ruzsa's equivalence theorem provided a framework for converting certain families of inequalities in additive combinatorics to entropic inequalities (which sometimes did not possess stand-alone entropic proofs). In this work, we first establish formal equivalences between some families (different from Ruzsa) of inequalities in additive combinatorics and entropic ones. As a first step to further these equivalences, we establish an information-theoretic characterization of the magnification ratio that could also be of independent interest.
\end{abstract}

\section{Introduction}

\subsection{Background}
We seek to investigate and build upon the analogies and equivalence theorems between sumset inequalities in additive combinatorics and entropic inequalities in information theory. We are directly motivated by the work of Ruzsa \cite{ruz09}, where a formal equivalence theory was proposed and established for certain families of sumset inequalities. Ruzsa categorized the inequalities into three scenarios \cite{ruz09}:
\begin{enumerate}[{Scenario} $A$:]
    \item There exists an equivalence form (see Theorem \ref{thm:ruzsa-equiv}) and explicit implication between a combinatorial inequality and an associated entropic inequality.
    \item A structural analog exists between combinatorial and entropic inequalities, but no direct equivalence is known. Sometimes, one-directional implication can be established.
    \item There is a combinatorial/entropic inequality, but the correctness of counterpart (analogous) inequality is unknown.
\end{enumerate}

Most of the subsequent work has been done along the lines of the second scenario, i.e., analogous entropic inequalities without a formal equivalence. Tao established entropic analogs of the Pl\"unnecke-Ruzsa-Frieman sumset and inverse sumset theory in 2010 \cite{tao10}. Madiman, Marcus, and Tetali established some entropic analogs and equivalence theorems based on partition-determined functions of random variables in 2012\cite{mmt12}. Furthermore, Kontoyiannis and Madiman explored the connection between sumsets and differential entropies \cite{kom14}. We refer readers to \cite{mad08}, \cite{lap08}, \cite{kom10} for more details. One can also find a summary of the connection between combinatorial and entropic inequalities in \cite{dia16}.

The main contributions of this work are the following:
\begin{enumerate}[$a)$]
    \item we establish a formal equivalence theorem (Theorem \ref{thm:gen-rus-type-equiv}) between some combinatorial inequalities and entropic inequalities that Ruzsa had classified into Scenarios B and C. The entropic inequalities take a slightly different form than the analogous ones studied earlier. In some cases, the analogous entropic inequalities are stronger (Remark \ref{rem:stronger-ent}) while in some other cases, the analogous (sometimes conjectured) ones do not imply the equivalent entropic inequalities (Remark \ref{rem:norelation}).
    \item we use information-theory-based arguments to establish some entropic inequalities established by Ruzsa in Scenario A, and some other analogous ones. An entropic equality (Lemma \ref{lem:copy-lemma}), motivated by an analogous combinatorial lemma, has proven repeatedly useful in our arguments.
    \item we prove an information-theoretic characterization of the magnification ratio (Theorem \ref{th:ent-magnification}), which serves as a primitive (as evidenced from Ruzsa's lecture notes \cite{ruz09b}) to larger families of sumset inequalities.
\end{enumerate}

An independent motivation for this study stems from an attempt to prove the subadditivity of certain entropic functionals related to establishing capacity regions in network information theory, the primary research interest of the authors. In one particular but fundamental instance - the Gaussian interference channel - it appears that the additive structure of the channel should play a key role in the proof of the requisite sub-additive inequality.

\subsection{Notation}
We will use $\gfa$ to denote an abelian group and $\tfa$ to denote a finitely generated torsion-free abelian group. If $A$ is a finite set, we use $|A|$ to denote the cardinality of $A$.

\section{Main}

\subsection{Equivalence between sumset inequalities and entropic inequalities}
\label{subsec:ent-equiv}

We state a simple fact below.
There is a trivial equivalence between cardinality inequalities and entropy inequalities via the observation that $\log |A+B| = \max_{P_{XY}} H(X+Y)$, where $X$ takes values in $A$ and $Y$ takes values in $B$. The equality is clearly obtained by taking a uniform distribution on the support of $|A+B|$. However, clearly,  we are seeking non-trivial versions of equivalence theorems.

\begin{theorem}{(Generalized Ruzsa-type Equivalence Theorem)}
    \label{thm:gen-rus-type-equiv}

    Let $\tfa$ be a finitely generated torsion-free abelian group.
    Let $f_1, \dots, f_k$ and $g_1, \dots, g_\ell$ be linear functions on $\tf^n$ with integer coefficients, and let $\alpha_1, \dots, \alpha_k, \beta_1, \dots, \beta_\ell$ be positive real numbers. For the linear function $f_i$, let $S_i\subseteq [1:n]$ denote the index set of non-zero coefficients. Similarly, for $g_i$ let $T_i\subseteq [1:n]$ denote the corresponding index set of non-zero coefficients. (So, effectively, $f_i$ and $g_i$ are linear functions on $\tf_{S_i}$ and $\tf_{T_i}$ respectively). Further, let us assume that  $S_i$ is a pairwise disjoint collection of sets. The following statements are equivalent:
    \begin{enumerate}[$a)$]
        \item For any $A_1, A_2, \dots, A_n$ that are finite subsets of $\tf$, we have
              \begin{align*}
                  \prod_{i = 1}^k |f_i(A_{S_i})|^{\alpha_i} \leq \prod_{i = 1}^\ell |g_i(A_{T_i})|^{\beta_i},
              \end{align*}
              where $A_S = \otimes_{i\in S} A_i$.
        \item For any $m \in \mathbb{N}$, and for any $\hat{A}_1, \Ah_2, \dots, \Ah_n$ that are finite subsets of $\tf^m$, we have
              \begin{align*}
                  \prod_{i = 1}^k |\fh_i(\Ah_{S_i})|^{\alpha_i} \leq \prod_{i = 1}^\ell |\gh_i(\Ah_{T_i})|^{\beta_i},
              \end{align*}
              where $\Ah_S = \otimes_{i\in S} \Ah_i$, and $\fh_i$ (and $\gh_i$) are the natural coordinate-wise extensions of $f_i$ (and $g_i$) respectively, mapping points in $$\underbrace{\tf^m \times \tf^m \times \cdots \times \tf^m}_{n \rm\ times} \mapsto \tf^m.$$
        \item
              For every sequence of random variables $(X_1, \dots, X_n)$, with fixed marginals $P_{X_i}$ and having finite support in  $\tf$, we have
              \begin{align*}
                  \sum_{i = 1}^k \alpha_i \max_{\Pi(X_{S_i})} H(f_i(X_{S_i})) \leq \sum_{i = 1}^\ell \beta_i \max_{\Pi(X_{T_i})} H(g_i(X_{T_i})),
              \end{align*}
              where $\Pi(X_S)$ is collection of joint distributions $P_{X_S}$ that are consistent with the marginals $P_{X_i}, i \in S$.
    \end{enumerate}

\end{theorem}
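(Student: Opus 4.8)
The plan is to prove the cycle of implications $a)\Rightarrow b)\Rightarrow c)\Rightarrow a)$; this also subsumes the trivial direction $b)\Rightarrow a)$ (take $m=1$). Throughout I would use the natural generalization of the ``trivial equivalence'' recalled just before the theorem: for finite sets $C_1,\dots,C_r$ in an abelian group and an integer-linear map $h$ of $r$ arguments, $\log|h(C_1,\dots,C_r)|=\max H\big(h(Y_1,\dots,Y_r)\big)$, where the maximum is over all joint laws supported on $C_1\times\cdots\times C_r$ and is attained by assigning uniform mass to one chosen preimage of each point of the image. We may assume $\tf\neq\{0\}$, since otherwise all sets involved are singletons or empty and every statement is trivial.

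For $a)\Rightarrow b)$ I would argue by a Freiman-type embedding; pairwise disjointness of the $S_i$ is not used here. Fix $m$ and finite $\Ah_1,\dots,\Ah_n\subseteq\tf^m$. Because $\tf$ is finitely generated and torsion-free, $\tf^m$ is free abelian of some finite rank $d$; let $B\subseteq\tf^m$ be the finite union of $\bigcup_j\Ah_j$ with all the image sets $\fh_i(\Ah_{S_i})$ and $\gh_i(\Ah_{T_i})$, and choose a group homomorphism $\psi:\tf^m\to\ir\hookrightarrow\tf$ that is injective on $B$ --- for example, identifying $\tf^m$ with $\ir^d$, take $\psi(x)=\sum_{k=1}^{d}M^{k-1}x_k$ with $M$ larger than twice the greatest coordinate magnitude occurring in $B$, so that uniqueness of base-$M$ expansions separates $B$. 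Put $A_j:=\psi(\Ah_j)\subseteq\tf$. Since $\psi$ is a homomorphism and the $f_i,g_i$ are linear, $f_i(A_{S_i})=\psi\big(\fh_i(\Ah_{S_i})\big)$ and $g_i(A_{T_i})=\psi\big(\gh_i(\Ah_{T_i})\big)$, and injectivity of $\psi$ on $B$ gives $|A_j|=|\Ah_j|$, $|f_i(A_{S_i})|=|\fh_i(\Ah_{S_i})|$ and $|g_i(A_{T_i})|=|\gh_i(\Ah_{T_i})|$; so $a)$ applied to $A_1,\dots,A_n$ is exactly $b)$.

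For $c)\Rightarrow a)$ --- the implication in which pairwise disjointness of the $S_i$ is essential --- let $A_1,\dots,A_n\subseteq\tf$ be finite (and nonempty, else $a)$ is trivial). For each $i$, use the generalized trivial equivalence to choose a joint law $\mu_i$ on $\otimes_{j\in S_i}A_j$ with $H_{\mu_i}\big(f_i(X_{S_i})\big)=\log|f_i(A_{S_i})|$, and let $P^{(i)}_{X_j}$, $j\in S_i$, be its marginals, each supported on $A_j$. Because the $S_i$ are pairwise disjoint, the prescription $P_{X_j}:=P^{(i)}_{X_j}$ for $j\in S_i$ is unambiguous; for any index not in any $S_i$, put $P_{X_j}$ uniform on $A_j$. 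Now invoke $c)$ for these marginals. On the left, $\mu_i\in\Pi(X_{S_i})$, so $\max_{\Pi(X_{S_i})}H(f_i(X_{S_i}))\geq\log|f_i(A_{S_i})|$. On the right, each $P_{X_j}$ is supported on $A_j$ by construction, so every law in $\Pi(X_{T_i})$ has $g_i(X_{T_i})$ taking values in $g_i(A_{T_i})$, whence $H(g_i(X_{T_i}))\leq\log|g_i(A_{T_i})|$. As $\alpha_i,\beta_i>0$, substituting these bounds into the inequality of $c)$ and exponentiating yields $a)$. Without disjointness an index $j\in S_i\cap S_{i'}$ would in general force the incompatible requirements $P_{X_j}=P^{(i)}_{X_j}$ and $P_{X_j}=P^{(i')}_{X_j}$, which is why the hypothesis is needed.

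For $b)\Rightarrow c)$ --- the analytic core --- fix marginals $P_{X_1},\dots,P_{X_n}$ with finite support in $\tf$ and a small $\epsilon>0$. For each block length $m$, let $\Ah_j\subseteq\tf^m$ be the set of $\epsilon$-strongly-typical $m$-sequences for $P_{X_j}$, and apply $b)$ to $\Ah_1,\dots,\Ah_n$. For the right-hand factors I would show $\tfrac1m\log|\gh_i(\Ah_{T_i})|\leq\max_{\Pi(X_{T_i})}H(g_i(X_{T_i}))+o_\epsilon(1)+o_m(1)$: every element of $\gh_i(\Ah_{T_i})$ is $\sum_{j\in T_i}c_{ij}x_j^m$ for some $x_j^m\in\Ah_j$, and its empirical distribution is the image under $g_i$ of the joint empirical distribution of the tuple $(x_j^m)_{j\in T_i}$, which is a coupling whose marginals lie within $\epsilon$ of the $P_{X_j}$; the entropy of such an image exceeds $\max_{\Pi(X_{T_i})}H(g_i(X_{T_i}))$ by at most an amount vanishing with $\epsilon$ (by compactness and continuity, on the fixed finite support, of the maximal image entropy as a function of the marginal constraints), and a method-of-types count --- polynomially many types, at most $2^{mH(\text{type})}$ sequences of a given type --- turns this into the claimed cardinality bound. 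For the left-hand factors I would show $\tfrac1m\log|\fh_i(\Ah_{S_i})|\geq\max_{\Pi(X_{S_i})}H(f_i(X_{S_i}))-o_m(1)$: let $\mu^*_i$ attain $\max_{\Pi(X_{S_i})}H(f_i(X_{S_i}))$; taking $m$ i.i.d.\ copies of $(X_j)_{j\in S_i}\sim\mu^*_i$ and conditioning on the high-probability event that they are jointly typical, each coordinate sequence lands in $\Ah_j$ (the marginals of jointly typical sequences are typical), so $\fh_i(\Ah_{S_i})$ contains the support of the conditioned variable $f_i(X_{S_i}^m)$, whose entropy is $mH_{\mu^*_i}(f_i(X_{S_i}))-o(m)$. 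Plugging both estimates into $b)$, dividing by $m$, and letting $m\to\infty$ then $\epsilon\to0$ gives $c)$. The step I expect to be the main obstacle is precisely this right-hand bound: it says that $g_i$ applied to a Cartesian product of marginal typical sets cannot produce a set essentially larger than $2^{m\max_{\Pi(X_{T_i})}H(g_i(X_{T_i}))}$, and its substance is the stability estimate that $\max_{\Pi(X_{T_i})}H(g_i(X_{T_i}))$ is upper semi-continuous in the prescribed marginals; by contrast, the left-hand lower bound and the bookkeeping in the other two implications are routine once the typical-set machinery and the disjointness hypothesis are in hand.
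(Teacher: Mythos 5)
Your proposal is correct and follows essentially the same route as the paper: the cycle $a)\Rightarrow b)\Rightarrow c)\Rightarrow a)$ with a base-$M$ (Freiman/Ruzsa-type) embedding of $\tf^m$ into $\tf$ for $a)\Rightarrow b)$, a typical-set/method-of-types argument for $b)\Rightarrow c)$ (the paper packages the exponent computation via Sanov's theorem, you derive the matching upper and lower bounds by hand with a double limit in $m$ and $\epsilon$), and the uniform-on-the-image construction exploiting pairwise disjointness of the $S_i$ for $c)\Rightarrow a)$. These differences are cosmetic, and you correctly isolate where linearity, torsion-freeness, and disjointness are needed.
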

\begin{proof}
    We will show that $a) \implies b)$, $b) \implies c)$, and $c) \implies a)$. We make a brief remark on the three implications. That $a) \implies b)$ Ruzsa has essentially established in \cite{ruz09} and this is where the requirements that the functions be linear and that the ambient group is finitely generated and torsion-free play a crucial role. Now $b) \implies c)$ is a rather standard argument in the information theory community using the method of types (see Chapter 2 of \cite{csk11}) and Sanov's theorem. Finally $c) \implies a)$ is immediate by taking specific marginal distributions that induce uniform distributions on the support of $f_i(X_{S_i})$ and is where the requirement that $S_i$ be pairwise disjoint plays a role.

    $a) \implies b)$: We outline the method used by Ruzsa in \cite{ruz09}. By the classification theorem of finitely generated abelian groups, we know that a torsion-free finitely generated abelian group is isomorphic to $\mathbb{Z}^d$, for a finite $d$. We denote $t$ to be a generic element in $\tf$, (or equivalently $\mathbb{Z}^d$). Let a linear function with integer coefficients $f: \tf^n \mapsto \tf$, be defined by $f(t_1, \dots,t_n) = \sum_{i=1}^n a_i t_i$. (In the context of our discussion, the locations of the non-zero values of $a_i$ determine the support of $f$). Similarly we denote $\mathbf{t}=(t_1,\dots,t_m)$ to be a generic element in $\tf^m$. Therefore, we have
    $\hat{f}(\mathbf{t}_1,\dots,\mathbf{t}_n) = \sum_{i=1}^n a_i \mathbf{t}_i$.
    Let $\psi_q$ be a linear mapping from $\tf^m$ to $\tf$ defined as
    \begin{align*}
        \psi_q(\mathbf{t}) := t_1 + t_2 q + \dots + t_m q^{m - 1}.
    \end{align*}
    Observe that, by linearity,
    \begin{align}
        \psi_q(\hat{f}(\mathbf{t}_1,..,\mathbf{t}_n)) = \psi_q\left(\sum_{i=1}^n a_i \mathbf{t}_i\right) = f(\psi_q(\mathbf{t}_1),...,\psi_q(\mathbf{t}_n)).
        \label{eq:linear}
    \end{align}
    Given the finite subsets $\hat{A}_1, \dots, \hat{A}_n$ of $\tf^m$, and the linear functions $f_1, \dots, f_k$ and $g_1, \dots, g_\ell$, we can choose a $q$ large enough that $\psi_q(\fh_i(\Ah_{S_i}))$ and $\psi_q(\gh_i(\Ah_{T_i}))$ are injections. Now set $A_i = \psi_q(\hat{A}_i)$.  Therefore we have
    \begin{align*}
        | \fh_i(\Ah_{S_i}) | = |\psi_q(\fh_i(\Ah_{S_i}))| \stackrel{(a)}{=} |f_i(\{\psi_q(\Ah_k)\}_{k \in S_i})| = | f_i (A_{S_i})|,
    \end{align*}
    where $(a)$ follows from \eqref{eq:linear}. A similar equality holds for $g$'s as well. With these equalities, we have that $a) \implies b)$.

    $b) \implies c)$: We are given a set of marginal distributions $P_{X_1}, \dots, P_{X_n}$ whose supports are finite subsets of $\tf$, say $\Xc_1,\dots,\Xc_n$. Consider a non-negative sequence $\{\delta_m\}$, where $\delta_m \rightarrow 0$ and $\sqrt{m} \cdot \delta_m \rightarrow \infty$ as $m \rightarrow \infty$. For every $m$, we construct the strongly typical sets   $\mathsf{T}_{(m,P_{X_i},\delta_m)}$, for  $1 \leq i \leq n$, where
    \begin{align*}
         & \mathsf{T}_{(m,P_{X_i},{\delta_m})}:= \bigg\{\mathbf{x} \in \mathcal{X}_i^m: \left|\frac{1}{m} N(a | \mathbf{x}) - P_{X_i}(a) \right|  \leq \delta_m \cdot P_{X_i}(a) \text{ for any $a \in \mathcal{X}_i$} \bigg\}.
    \end{align*}
    Suppressing dependence on other variables, let $\hat{A}_i = \mathsf{T}_{(m,P_{X_i},{\delta_m})}$ for $1 \leq i \leq n$. Now consider a linear function $f: \tf_S \mapsto \tf$ and let $\hat{f}$ be the coordinate-wise extension of it to $(\tf^m)_S$. Define $Y=f(X_S)$, $S \subseteq [1:n]$, and let $\mathcal{M}_{Y}$ denote the set of probability distributions of $Y$ induced by all couplings $\Pi(X_S)$ that are consistent with the marginals $P_{X_i}$ for $i \in S$. Let $Q_Y$ be the uniform distribution on $\Yc$, and by a routine application\footnote{This is standard in certain information theory circles. For completeness, we outline a proof of the maximum coupling by the discrete Sanov theorem  in Appendix \ref{sec:discrete-sanov} and Appendix \ref{sec:max-coup}.} of Sanov's theorem we obtain that
    \begin{align*}
         & \lim_{m\to \infty} \frac{1}{m} \log \frac{|\hat{f}(\Ah_{S})|}{|\Yc|^{m}} = \max_{P_Y \in \mathcal{M}_{Y}} H(Y) - \log |\Yc|  = \max_{\Pi(X_S)} H(f(X_S)) -  \log |\Yc|.
    \end{align*}
    Therefore, we have
    \begin{align*}
        \lim_{m\to \infty} \frac{1}{m} |\hat{f}(\Ah_{S})| = \max_{\Pi(X_{S})} H(f(X_{S})).
    \end{align*}
    Thus, the implication $b) \implies c)$ is established. 

    $c) \implies a)$: This is rather immediate. Since $S_i$'s are pairwise disjoint, let $P_{X_{S_i}}$ induce a uniform distribution on $f(A_{S_i})$ and let $P_{X_i}$ be the induced marginals. Then it is clear that $\max_{\Pi(X_{S_i})} H(f_i(X_{S_i})) = \log|f(A_{S_i})|$ and $\max_{\Pi(X_{T_i})} H(g_i(X_{T_i})) \leq \log|g(A_{T_i})|$ and this completes the proof.
\end{proof}

The following corollaries to Theorem \ref{thm:gen-rus-type-equiv} lead to some entropic inequalities. Some of the sumset inequalities in literature are stated using Ruzsa-distance, and the equivalent entropic inequalities can be stated using a similar distance between distributions.

\begin{definition}{(Ruzsa Distance between Finite Sets, \cite{ruz96})}
    \label{def:rus-dist}
    The Ruzsa distance between two finite subsets $A, B$ on an abelian group $\gfa$ is defined as
    \begin{align*}
        d_R(A, B) := \log \frac{|A - B|}{|A|^{1/2} |B|^{1/2}}.
    \end{align*}
\end{definition}
\begin{remark}
    It is clear that $d_R(A,B) = d_R(B,A)$ and that $d_R(A,A) \geq 0$.
\end{remark}

\begin{definition}{(Entropic Ruzsa Distance)}
    \label{def:ent-rus-dist}
    The entropic-Ruzsa ``distance'' between two distributions $P_X, P_Y$ taking values in $\gfa$ is defined as
    \begin{align*}
        d_{HR}(X, Y) := \max_{P_{XY} \in \Pi(P_X, P_Y)} H(X - Y) - \frac{1}{2} H(X) - \frac{1}{2} H(Y),
    \end{align*}
    where $\Pi(P_X, P_Y)$ is the set of all coupling with the given marginals.
\end{definition}
\begin{remark}
    \label{rem:ruz-dis-ent}
    The following remarks are worth noting with regard to the entropic Ruzsa-distance:
    \begin{enumerate}
        \item As with the abuse of notations in information theory $d_{HR}(X, Y)$ is a function of  $P_X, P_Y$ and not of $X$ and $Y$.
        \item Just like the original Ruzsa distance between two sets, we have $d_{HR}(X, Y) \geq 0$ (this follows by observing that when $P_{XY}=P_X P_Y$, we have $H(X-Y) \geq \max\{H(X),H(Y)\}$ as $0 \leq I(X;X-Y) = H(X-Y) - H(Y)$). Further it is immediate that $d_{HR}(X, Y) = d_{HR}(Y, X)$.
        \item There is no ordering between $d_{HR}(X, Y)$ and $d_R(A,B)$ where $A$ is the support of $p_X$ and $B$ is the support of $p_Y$.
              \begin{itemize}
                  \item Consider $P_X$ and $P_Y$ such that it is uniform on sets $A$ and $B$ respectively. Thus for any $P_{XY} \in \Pi(P_X, P_Y)$ we have $H(X-Y) \leq \log|A-B|$ and consequently $d_{HR}(X, Y) \leq d_R(A,B)$ (and the inequality can be strict).
                  \item Consider a joint $P_{XY}$ that is uniform on $A-B$ and let $P_X$ and $P_Y$ be its induced marginal distributions on sets $A$ and $B$ respectively. then as $H(X) \leq \log|A|$ and $H(Y) \leq \log|B|$, we have $d_{HR}(X,Y) \geq d_R(A,B)$ (and the inequality can be strict).
              \end{itemize}
        \item This definition is different from that of Tao \cite{tao10}, where he defines the similar quantity using independent coupling of $P_X$ and $P_Y$. An advantage of our definition is that we have a formal equivalence between the two inequalities (one in sumset and one in entropy). Independent of this work, in \cite{gmt23} the authors also defined the same notion of distance (see the section on bibliographic remarks) and called it the \textit{maximal entropic Ruzsa's distance}.
    \end{enumerate}
\end{remark}

Theorem \ref{thm:gen-rus-type-equiv} immediately implies the following entropic inequalities from the corresponding sumset inequalities.
\begin{corollary}
    For any distributions $P_X, P_Y, P_Z$ with finite support on a finitely generated torsion-free group $\tfa$, we have
    \begin{align}
        d_{HR}(X, Z)            & \leq d_{HR}(X, Y) + d_{HR}(Y, Z), \nonumber                                                                                   \\
        \mbox{or equivalently}: & ~ H(Y) + \max_{\Pi(X, Z)} H(X - Z) \leq  \max_{\Pi(X, Y)} H(X - Y) +  \max_{\Pi(Y, Z)} H(Y - Z). \label{eq:sum-diff-ent-ineq}
    \end{align}
\end{corollary}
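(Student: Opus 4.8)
The plan is to obtain the stated inequality from the classical Ruzsa triangle inequality for sets by applying the equivalence $a) \Leftrightarrow c)$ of Theorem \ref{thm:gen-rus-type-equiv}. Recall that for any finite subsets $A, B, C$ of an abelian group one has the Ruzsa triangle inequality $|A - C|\,|B| \le |A - B|\,|B - C|$, which is proved in one line: the map $(A-C)\times B \to (A-B)\times(B-C)$ sending $(d,b)$ to $(a(d)-b,\; b-c(d))$, where $a(d)\in A$, $c(d)\in C$ are fixed representatives with $a(d)-c(d)=d$, is injective, since the image pair determines $d = a(d)-c(d)$ by addition, hence $a(d),c(d)$, hence $b$. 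Taking logarithms, this is exactly $d_R(A,C) \le d_R(A,B) + d_R(B,C)$.

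Next I would cast this inequality in the template of statement $a)$ of Theorem \ref{thm:gen-rus-type-equiv}. Take $n = 3$ ground sets $A_1 = A$, $A_2 = B$, $A_3 = C$; on the left-hand side take the linear functions $f_1(A_1,A_3) = A_1 - A_3$ with index set $S_1 = \{1,3\}$ and $f_2(A_2) = A_2$ with index set $S_2 = \{2\}$, both with exponent $\alpha_1 = \alpha_2 = 1$; on the right-hand side take $g_1(A_1,A_2) = A_1 - A_2$ with $T_1 = \{1,2\}$ and $g_2(A_2,A_3) = A_2 - A_3$ with $T_2 = \{2,3\}$, both with exponent $\beta_1 = \beta_2 = 1$. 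All of these are linear with integer coefficients, the ambient group $\tfa$ is finitely generated and torsion-free, and, crucially, the collection $\{S_1, S_2\} = \{\{1,3\},\{2\}\}$ is pairwise disjoint, so all hypotheses of the theorem are satisfied, and statement $a)$ with this data is precisely the Ruzsa triangle inequality above.

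Invoking the implication $a) \Rightarrow c)$ then yields, for every triple of random variables $(X_1,X_2,X_3)$ with fixed finite-support marginals in $\tf$,
\[
\max_{\Pi(X_1,X_3)} H(X_1 - X_3) + \max_{\Pi(X_2)} H(X_2) \le \max_{\Pi(X_1,X_2)} H(X_1 - X_2) + \max_{\Pi(X_2,X_3)} H(X_2 - X_3).
\]
Since the marginal $P_{X_2}$ is fixed, $\max_{\Pi(X_2)} H(X_2) = H(X_2)$; relabelling $(X_1,X_2,X_3)$ as $(X,Y,Z)$ gives exactly \eqref{eq:sum-diff-ent-ineq}. Finally, subtracting $\frac{1}{2} H(X) + H(Y) + \frac{1}{2} H(Z)$ from both sides and regrouping terms according to Definition \ref{def:ent-rus-dist} rewrites \eqref{eq:sum-diff-ent-ineq} as $d_{HR}(X,Z) \le d_{HR}(X,Y) + d_{HR}(Y,Z)$, the first displayed inequality of the corollary.

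There is no serious obstacle once Theorem \ref{thm:gen-rus-type-equiv} is available: the only genuine input is the combinatorial Ruzsa triangle inequality, and the only point requiring care is that the equivalence theorem must be applied with the ``dummy'' factor $|B|$ placed on the \emph{left}-hand side, because it is the left-hand index sets $S_1, S_2$ that are required to be pairwise disjoint. Here $\{1,3\}$ and $\{2\}$ indeed are disjoint, which is exactly what licenses the passage from the cardinality statement to the entropic one (and, via $c) \Rightarrow a)$, back again).
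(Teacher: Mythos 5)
Your proposal is correct and follows essentially the same route as the paper: invoke Ruzsa's combinatorial triangle inequality $|B|\,|A-C| \leq |A-B|\,|B-C|$ and transfer it to the entropic statement via Theorem \ref{thm:gen-rus-type-equiv}. You merely make explicit the instantiation (the choice of $f_i, g_i$, the disjoint left-hand index sets $\{1,3\},\{2\}$, and the collapse of $\max_{\Pi(X_2)} H(X_2)$ to $H(Y)$) that the paper leaves implicit, and you supply the one-line injection proof of the triangle inequality that the paper simply cites.
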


\begin{proof}
    In \cite{ruz96}, Ruzsa showed that for any finite $A, B, C$ on a finitely generated torsion-free abelian group $\tfa$, we have $d_R(A, C) \leq d_R(A, B) + d_R(B, C)$, or equivalently $|B||A-C| \leq |A-B||B-C|$.
    We will obtain the desired inequality by applying Theorem \ref{thm:gen-rus-type-equiv}.
    \begin{remark}
        \label{rem:stronger-ent}
        The entropic inequality in \eqref{eq:sum-diff-ent-ineq} can also be obtained as a direct consequence of a stronger entropic inequality that was established in \cite{mmt12}. There, it was established that if $Y$ and $(X, Z)$ are independent and taking values in an ambient abelian group $\gfa$, then one has $H(Y) + H(X - Z) \leq H(X - Y) + H(Y - Z)$. To see this, observe that $H(Y,X-Z) = H(X-Y,Y-Z) - I(X;Y -Z|X-Z)$, and the requisite inequality is immediate.
    \end{remark}
\end{proof}

\begin{corollary}
    \label{cor:no-ent-proof}
    For distributions $P_X, P_Y, P_Z$ with finite support on a finitely generated torsion-free group $\tfa$, we have
    \begin{align}
        \label{eqn:sumset-ent-equiv}
        \begin{split}
            &H(X) + \max_{\Pi(Y, Z)} H(Y + Z) \leq \max_{\Pi(X, Y)} H(X + Y) + \max_{\Pi(X, Z)} H(X + Z).
        \end{split}
    \end{align}
\end{corollary}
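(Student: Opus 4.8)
The plan is to read \eqref{eqn:sumset-ent-equiv} off Theorem~\ref{thm:gen-rus-type-equiv}, applied to the combinatorial ``sum triangle'' inequality
\[
    |A|\,|B+C| \;\le\; |A+B|\,|A+C|,
\]
which holds for arbitrary finite subsets $A,B,C$ of any abelian group. So the first step is to record this sumset inequality; it is the $k=2$ case of the Pl\"unnecke--Ruzsa inequality. A short self-contained argument runs through Petridis's lemma: pick a nonempty $X\subseteq A$ minimising $K:=|X+B|/|X|$ (note $K\le|A+B|/|A|$, since $A$ itself is admissible); Petridis's lemma then gives $|X+B+C|\le K\,|X+C|$; finally, fixing any $x_0\in X$ and using $X+C\subseteq A+C$,
\[
    |B+C| \;=\; |x_0+B+C| \;\le\; |X+B+C| \;\le\; K\,|X+C| \;\le\; \frac{|A+B|}{|A|}\,|A+C| .
\]
(Alternatively one simply cites the Pl\"unnecke--Ruzsa inequality.)

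The second step is to instantiate Theorem~\ref{thm:gen-rus-type-equiv} with $n=3$, all weights equal to $1$, and the linear functions $f_1(a_1)=a_1$, $f_2(a_2,a_3)=a_2+a_3$, $g_1(a_1,a_2)=a_1+a_2$, $g_2(a_1,a_3)=a_1+a_3$; the associated index sets are $S_1=\{1\}$, $S_2=\{2,3\}$, $T_1=\{1,2\}$, $T_2=\{1,3\}$. The hypothesis of the theorem is met because the collection $\{S_1,S_2\}$ is pairwise disjoint (there is no disjointness requirement on the $T_j$). With these choices, statement $a)$ of the theorem is precisely the sum triangle inequality above (with $A_1,A_2,A_3$ in the roles of $A,B,C$), while statement $c)$ reads
\[
    H(X_1) + \max_{\Pi(X_2,X_3)} H(X_2+X_3) \;\le\; \max_{\Pi(X_1,X_2)} H(X_1+X_2) + \max_{\Pi(X_1,X_3)} H(X_1+X_3),
\]
where we used that $\Pi(X_{\{1\}})$ is the single distribution $P_{X_1}$, so the first term on the left collapses to $H(X_1)$. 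Renaming $(X_1,X_2,X_3)\mapsto(X,Y,Z)$ yields \eqref{eqn:sumset-ent-equiv}, and the equivalence $a)\Leftrightarrow c)$ in Theorem~\ref{thm:gen-rus-type-equiv} completes the proof.

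Strictly speaking, no analytic obstacle remains: everything nontrivial is absorbed into Theorem~\ref{thm:gen-rus-type-equiv} (the method-of-types/Sanov argument for $b)\Rightarrow c)$ and Ruzsa's $\psi_q$ construction for $a)\Rightarrow b)$). What is worth emphasising is the contrast with \eqref{eq:sum-diff-ent-ineq}: there, Remark~\ref{rem:stronger-ent} supplies a direct entropic derivation because $X-Z$ is a \emph{deterministic} function of the pair $(X-Y,\,Y-Z)$, whereas on the ``sum'' side one recovers from $X+Y$ and $X+Z$ only $Y-Z$ and $2X+Y+Z$ --- neither $X$ nor $Y+Z$. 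For that reason \eqref{eqn:sumset-ent-equiv} is not known to admit a standalone entropic proof, and the combinatorial route through Theorem~\ref{thm:gen-rus-type-equiv} is, at present, the only one available.
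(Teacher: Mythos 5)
Your proof is correct and follows essentially the same route as the paper: invoke the sumset inequality $|A|\,|B+C|\le|A+B|\,|A+C|$ (which the paper simply cites from Ruzsa \cite{ruz96}) and apply Theorem~\ref{thm:gen-rus-type-equiv} with $f_1(a_1)=a_1$, $f_2(a_2,a_3)=a_2+a_3$, $g_1=a_1+a_2$, $g_2=a_1+a_3$, the disjointness hypothesis being satisfied by $S_1=\{1\}$, $S_2=\{2,3\}$. Your explicit instantiation of the index sets and the optional Petridis-based proof of the sumset inequality are fine additions but do not change the substance of the argument.
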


\begin{proof}
    In \cite{ruz96}, Ruzsa showed that for any finite $A, B, C$ on a finitely generated torsion-free abelian group $\tfa$, we have
    \begin{align}
        \label{eqn:sumset}
        |A| |B + C| \leq |A + B| |A + C|.
    \end{align}
    We obtain the desired entropic inequality by applying Theorem \ref{thm:gen-rus-type-equiv}.
\end{proof}
\begin{remark}
    The authors are unaware of a stand-alone information-theoretic proof of the above inequality. Our results in Section \ref{sec:magnification} is a step at building an information-theoretic counterpart to the sumset arguments used to establish this. When $X, Y,$ and $Z$ are mutually independent, an entropic analog has been established in \cite{mad08,mmt12}. Note that in this case, by the data-processing inequality, we have
    $I(Z;X+Y+Z) \leq I(Z;X+Z)$ implying
    \[H(X) + H(Y + Z) \leq H(X) + H(X + Y + Z)  \leq H(X + Y) + H(X + Z) \]
    A relaxation of this proof to the case, when $X$ is independent of $(Y,Z)$, would have yielded \eqref{eqn:sumset-ent-equiv}; however, this relaxation does not seem immediate.
\end{remark}

\subsection{Katz-Tao Sum Difference Inequality}
\label{subsec:katz-tao}
Katz and Tao established the following lemma \cite{kat99} and used it in the proof of certain sumset inequalities.
\begin{lemma}{\cite[Lemma 2.1]{kat99}}
    \label{lem:combinatorial-copy-lemma}
    Let $A$ and $B_1, \dots, B_{n - 1}$ be finite sets for some $n$. Let $f_i : A \rightarrow B_i$ be a function for all $i \in [1 : n - 1]$. Then
    \begin{align*}
        \{(a_1, \dots, a_n) \in A^n : f_i(a_i) = f_i(a_{i + 1}) \forall \text{  $i \in [1 : n - 1]$}\} \geq \frac{|A|^n}{\prod_{i = 1}^{n - 1} |B_i|}.
    \end{align*}
\end{lemma}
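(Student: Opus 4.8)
The plan is to read the left-hand side as the cardinality of
\[
  \Vc := \{(a_1,\dots,a_n)\in A^n : f_i(a_i)=f_i(a_{i+1})\text{ for all }i\in[1:n-1]\},
\]
and to bound $|\Vc|$ from below by exhibiting a probability distribution $\mu$ supported on $\Vc$ whose Shannon entropy satisfies $H(\mu)\ge n\log|A|-\sum_{i=1}^{n-1}\log|B_i|$. Since any random variable supported on a set of size $|\Vc|$ has entropy at most $\log|\Vc|$, the bound follows by exponentiating. Throughout, for $b\in B_i$ write $A_{i,b}:=f_i^{-1}(b)\subseteq A$ for the fibers of $f_i$, so that $A=\bigsqcup_{b\in B_i}A_{i,b}$ and $\sum_{b\in B_i}|A_{i,b}|=|A|$ for each $i$.

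The distribution $\mu$ I would use is the law of the Markov chain that walks along the sequence: let $U_1\sim\mathrm{Unif}(A)$ and, given $U_i$, let $U_{i+1}\sim\mathrm{Unif}(A_{i,f_i(U_i)})$, conditionally independent of $U_1,\dots,U_{i-1}$. Since $U_{i+1}\in A_{i,f_i(U_i)}$ forces $f_i(U_{i+1})=f_i(U_i)$, the tuple $(U_1,\dots,U_n)$ is supported on $\Vc$, and for $v=(a_1,\dots,a_n)\in\Vc$ one has $\mu(v)=\frac{1}{|A|}\prod_{i=1}^{n-1}|A_{i,f_i(a_i)}|^{-1}$. Two facts are needed: (i) $\sum_{v\in\Vc}\mu(v)=1$, which follows by summing $a_n,\dots,a_2$ out in turn (each ranges over a fiber whose size exactly cancels one factor) and then $a_1$ over $A$; and (ii) --- the crucial point --- every coordinate marginal of $\mu$ is uniform on $A$. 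Fact (ii) I would prove by induction on $i$: the base case ($a_1$ uniform) is the same peeling computation, and if the $a_i$-marginal is uniform then for any $a'\in A$ the $a_{i+1}$-marginal at $a'$ equals $\sum_{a\in A_{i,f_i(a')}}\frac{1}{|A|}\cdot\frac{1}{|A_{i,f_i(a')}|}=\frac{1}{|A|}$, because the fiber $A_{i,f_i(a')}$ has exactly $|A_{i,f_i(a')}|$ elements. I expect (ii) to be the only delicate point; it is precisely what makes the argument work, and it is where a naive peeling argument (fix one coordinate and apply Cauchy--Schwarz) breaks down, since the fiber partitions of distinct $f_i$ need not be correlated in any helpful way.

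Granting (i) and (ii), the rest is a short entropy computation. For $v=(a_1,\dots,a_n)\in\Vc$ we have $-\log\mu(v)=\log|A|+\sum_{i=1}^{n-1}\log|A_{i,f_i(a_i)}|$, so, using that the $a_i$-marginal of $\mu$ is uniform,
\[
  H(\mu)=\log|A|+\sum_{i=1}^{n-1}\sum_{a\in A}\tfrac{1}{|A|}\log|A_{i,f_i(a)}|
  =\log|A|+\sum_{i=1}^{n-1}\sum_{b\in B_i}\tfrac{|A_{i,b}|}{|A|}\log|A_{i,b}|.
\]
Each inner sum is $\log|A|-H\big((|A_{i,b}|/|A|)_{b\in B_i}\big)\ge\log|A|-\log|B_i|$, since a distribution on the $|B_i|$-point set $B_i$ has entropy at most $\log|B_i|$ (equivalently, by weighted AM--GM / Gibbs' inequality). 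Summing over $i$ gives $\log|\Vc|\ge H(\mu)\ge n\log|A|-\sum_{i=1}^{n-1}\log|B_i|$, which is the lemma. No surjectivity of the $f_i$ is required: empty fibers only shrink the effective alphabet of $f_i$ and make the bound $H(\cdot)\le\log|B_i|$ stronger. Finally, if a write-up with no reference to information theory is wanted, one replaces the inequality $\log|\Vc|\ge H(\mu)$ by Jensen's inequality, $\log|\Vc|=\log\big(\sum_{v\in\Vc}\mu(v)\cdot\mu(v)^{-1}\big)\ge\sum_{v\in\Vc}\mu(v)\log\mu(v)^{-1}=H(\mu)$, and the bound $H(\cdot)\le\log|B_i|$ by weighted AM--GM, leaving every other step unchanged.
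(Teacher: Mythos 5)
Your proposal is correct and follows essentially the same route as the paper: the paper also deduces the combinatorial bound from the very same Markov-chain coupling with uniform marginals on $A$, packaging the entropy bookkeeping in the identity $H(X_1,\dots,X_n)+\sum_{i=1}^{n-1}H(U_i)=\sum_{i=1}^{n}H(X_i)$ of Lemma~\ref{lem:copy-lemma} (with $U_i=f_i(X_i)=f_i(X_{i+1})$) and then bounding $H(X_1,\dots,X_n)\le\log|C|$ and $H(U_i)\le\log|B_i|$. Your explicit construction of the coupling, the verification that every coordinate marginal is uniform, and the direct computation of $H(\mu)$ via fiber sizes simply spell out what the paper leaves implicit, so the two arguments coincide in substance.
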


Motivated by this lemma, we will prove an information-theoretic version (which would imply the combinatorial version) and will turn out to be useful in several of our arguments. We will first present a lemma in a more general form than is used in this paper.

\begin{lemma}
    \label{lem:longmclem}
    Suppose the following Markov chain holds:
    \begin{align*}
        X_1 \rightarrow U_1 \rightarrow X_2 \rightarrow U_2 \rightarrow \dots \rightarrow X_{n - 1} \rightarrow U_{n - 1} \rightarrow X_n.
    \end{align*}
    Then,
    \begin{align*}
        H(X_1, \dots, X_n, U_1, \dots, U_{n - 1}) + \sum_{i = 1}^{n - 1} I(X_i; U_i) + \sum_{i = 1}^{n - 1} I(U_i; X_{i + 1}) = \sum_{i = 1}^n H(X_i) + \sum_{i = 1}^{n - 1} H(U_i).
    \end{align*}
\end{lemma}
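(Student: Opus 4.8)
The plan is to expand the joint entropy by the chain rule along the natural ordering $X_1, U_1, X_2, U_2, \dots, U_{n-1}, X_n$ dictated by the Markov chain, collapse each conditioning using the Markov property, and then convert the resulting conditional entropies into mutual informations. First I would write
\begin{align*}
    H(X_1, \dots, X_n, U_1, \dots, U_{n-1}) = H(X_1) + \sum_{i=1}^{n-1} H(U_i \mid X_1, U_1, \dots, X_i) + \sum_{i=2}^{n} H(X_i \mid X_1, U_1, \dots, U_{i-1}).
\end{align*}
The Markov chain $X_1 \to U_1 \to X_2 \to \cdots \to U_{n-1} \to X_n$ says precisely that, conditioned on any prefix of this ordering, the next variable depends only on its immediate predecessor; hence $H(U_i \mid X_1, U_1, \dots, X_i) = H(U_i \mid X_i)$ and $H(X_i \mid X_1, U_1, \dots, U_{i-1}) = H(X_i \mid U_{i-1})$.

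Next I would substitute $H(U_i \mid X_i) = H(U_i) - I(X_i; U_i)$ and $H(X_i \mid U_{i-1}) = H(X_i) - I(U_{i-1}; X_i)$, giving
\begin{align*}
    H(X_1, \dots, X_n, U_1, \dots, U_{n-1}) = H(X_1) + \sum_{i=1}^{n-1}\bigl(H(U_i) - I(X_i; U_i)\bigr) + \sum_{i=2}^{n}\bigl(H(X_i) - I(U_{i-1}; X_i)\bigr).
\end{align*}
Re-indexing the last sum via $j = i-1$ so that $\sum_{i=2}^n I(U_{i-1}; X_i) = \sum_{j=1}^{n-1} I(U_j; X_{j+1})$, and moving the two information sums to the left-hand side, yields exactly the claimed identity, since $H(X_1) + \sum_{i=2}^n H(X_i) = \sum_{i=1}^n H(X_i)$.

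There is no real obstacle here: the argument is a one-line chain-rule expansion followed by bookkeeping, and the only place the hypothesis is used is the collapse of the conditionings to nearest neighbors. The one point to be careful about is the ordering of the chain rule — it must interleave the $X_i$'s and $U_i$'s in the same order as the Markov chain, so that each conditional entropy has its entire conditioning set equal to a prefix ending just before the relevant variable; with that ordering, the Markov property applies verbatim. If desired, one can also remark that applying the identity to empirical distributions on the relevant typical/product sets recovers Lemma~\ref{lem:combinatorial-copy-lemma} after the uniformizations, though the identity as stated is self-contained.
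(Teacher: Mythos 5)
Your proposal is correct and follows essentially the same route as the paper: a chain-rule expansion along the interleaved ordering $X_1, U_1, X_2, \dots, U_{n-1}, X_n$, collapse of each conditioning to the immediate predecessor via the Markov property, and conversion of $H(U_i \mid X_i)$ and $H(X_{i+1} \mid U_i)$ into mutual-information terms followed by rearrangement. No gaps; the careful point you flag about matching the chain-rule ordering to the Markov chain is exactly the step the paper relies on implicitly.
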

\begin{proof}
    This lemma is an immediate consequence of the Chain Rule for entropy as follows. Note that the chain rule and the Markov Chain assumption yields
    \begin{align*}
        H(X_1, \dots, X_n, U_1, \dots, U_{n - 1}) & = H(X_1) + \sum_{i=1}^{n-1} H(U_i|X_i) + \sum_{i=1}^{n-1} H(X_{i+1}|U_i)                                                 \\
                                                  & = H(X_1) + \sum_{i=1}^{n-1} \big( H(U_i) - I(U_i;X_i)\big) + \sum_{i=1}^{n-1} \big( H(X_{i+1}) - I(U_i; X_{i + 1})\big).
    \end{align*}
    Now, rearranging yields the desired equality.
\end{proof}

As a special case of \Cref{lem:longmclem} we obtain the following version that is useful in this paper.

\begin{lemma}
    \label{lem:copy-lemma}
    Let $(X_i)_{i = 1}^n$ be a sequence of finite-valued random variables (defined on some common probability space)  and $(f_i, g_i)_{i = 1}^{n - 1}$ be a sequence of functions that take a finite set of values in some space $\mathcal{S}$ such that:   $f_i(X_i) = g_i(X_{i + 1}) (=: U_i)$ and the following Markov chain holds,
    \begin{align*}
        X_1 \rightarrow U_1 \rightarrow X_2 \rightarrow U_2 \rightarrow \dots \rightarrow X_{n - 1} \rightarrow U_{n - 1} \rightarrow X_n.
    \end{align*}
    Then,
    \begin{align*}
        H(X_1, \dots, X_n) + \sum_{i = 1}^{n - 1} H(U_i) = \sum_{i = 1}^n H(X_i).
    \end{align*}
\end{lemma}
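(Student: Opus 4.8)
The plan is to derive this as an immediate specialization of \Cref{lem:longmclem}. The key structural observation is that, under the hypothesis $f_i(X_i) = g_i(X_{i+1}) =: U_i$, each $U_i$ is simultaneously a deterministic function of $X_i$ (via $f_i$) and of $X_{i+1}$ (via $g_i$). Consequently $I(X_i; U_i) = H(U_i) - H(U_i \mid X_i) = H(U_i)$ and likewise $I(U_i; X_{i+1}) = H(U_i)$ for every $i \in [1:n-1]$. Moreover, since $(U_1,\dots,U_{n-1})$ is a function of $(X_1,\dots,X_n)$, adjoining the $U_i$'s to the joint entropy does not change it: $H(X_1,\dots,X_n,U_1,\dots,U_{n-1}) = H(X_1,\dots,X_n)$.

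Substituting these three identities into the equality furnished by \Cref{lem:longmclem} gives
\begin{align*}
    H(X_1,\dots,X_n) + 2\sum_{i=1}^{n-1} H(U_i) = \sum_{i=1}^n H(X_i) + \sum_{i=1}^{n-1} H(U_i),
\end{align*}
and cancelling one copy of $\sum_{i=1}^{n-1} H(U_i)$ from both sides yields the claimed identity. All quantities are finite because the $X_i$ and the $U_i$ take finitely many values, so the cancellation is legitimate.

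There is essentially no obstacle here: once \Cref{lem:longmclem} is in hand, the only thing to check is that the stated Markov chain $X_1 \to U_1 \to X_2 \to \cdots \to U_{n-1} \to X_n$ is exactly the hypothesis needed to invoke that lemma, which it is by assumption. If one instead wanted a self-contained argument, one could alternatively expand $H(X_1,\dots,X_n)$ directly by the chain rule along the Markov chain, $H(X_1) + \sum_{i=1}^{n-1} H(X_{i+1} \mid X_i)$, and use $H(X_{i+1}\mid X_i) = H(X_{i+1}\mid U_i) = H(X_{i+1}) - H(U_i)$ (the first equality from the Markov property $X_i \to U_i \to X_{i+1}$ together with $U_i$ being a function of $X_i$, the second from $U_i$ being a function of $X_{i+1}$); summing telescopes to the result. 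Either route is routine, so the write-up will simply take the first one.
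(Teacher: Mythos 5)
Your proposal is correct and follows essentially the same route as the paper: it specializes \Cref{lem:longmclem} using the facts that $U_i$ is a function of both $X_i$ and $X_{i+1}$ (so $I(X_i;U_i)=I(U_i;X_{i+1})=H(U_i)$ and $H(X_1,\dots,X_n,U_1,\dots,U_{n-1})=H(X_1,\dots,X_n)$) and then cancels one copy of $\sum_{i=1}^{n-1}H(U_i)$.
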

\begin{proof}
    Note that
    $H(X_1, \dots, X_n)=H(X_1, \dots, X_n, U_1,\dots, \allowbreak U_{n-1})$ since $U_i$ is determined by $X_i$ (and also by $X_{i+1}$). Further we also have $I(U_i;X_i) = I(U_i;X_{i+1})=H(U_i)$ for $1 \leq i \leq n-1$. Hence, the desired consequence follows from Lemma \ref{lem:longmclem}.
\end{proof}
\begin{remark} The following remarks are worth noting:
    \begin{itemize}
        \item Lemma \ref{lem:copy-lemma} seems to play a similar role as the copy lemma \cite{Zhang1998} used in deriving several non-Shannon type inequalities.
        \item Note that Lemma \ref{lem:copy-lemma} will imply Lemma \ref{lem:combinatorial-copy-lemma} directly. Define
              $$C = \{(a_1, \dots, a_n) \in A^n : f_i(a_i) = f_i(a_{i + 1}) \forall \text{  $i \in [1 : n - 1]$}\}.$$
              Suppose $X_1, \dots, X_n$ have uniform marginals on $A$.  Set $f_i(X_i) = f_i(X_{i + 1}) (=: U_i)$ and construct a joint distribution such that  the following Markov chain holds,
              \begin{align*}
                  X_1 \rightarrow U_1 \rightarrow X_2 \rightarrow U_2 \rightarrow \dots \rightarrow X_{n - 1} \rightarrow U_{n - 1} \rightarrow X_n.
              \end{align*}
              Observe that $(X_1, \dots, X_n)$ has a support on $C$. This implies, from Lemma \ref{lem:copy-lemma}, that
              \begin{align*}
                  n \log |A| & = \sum_{i = 1}^n H(X_i) = H(X_1, \dots, X_n) + \sum_{i = 1}^{n - 1} H(U_i) \\
                             & \quad \leq \log |C| + \sum_{i = 1}^{n - 1} \log |B_i|.
              \end{align*}
    \end{itemize}
\end{remark}

The main intent of the remainder of the section is to demonstrate the role of Lemma \ref{lem:copy-lemma} to establish various information inequalities.

\begin{definition}{($G$-restricted Sumset \cite{ruz09})}
    Suppose $G$ is a subset of $A \times B$, where $A, B$ are finite subsets of $\gfa$. We denote the $G$-restricted sumset and difference set of $A$ and $B$ as $A \overset{G}{+} B$ and $A \overset{G}{-} B$.
    \begin{align*}
        A \overset{G}{+} B & = \{a + b : a \in A, b \in B, (a, b) \in G\}, \\
        A \overset{G}{-} B & = \{a - b : a \in A, b \in B, (a, b) \in G\}.
    \end{align*}
\end{definition}

\begin{theorem}{(Katz-Tao Sum-Difference Inequality \cite{kat99})}
    \label{thm:katz-tao-comb}
    
    For any $G$, a finite subset of $\tf \times \tf$, we have
    \begin{align*}
        |A \overset{G}{-} B| \leq |A|^{2/3} |B|^{2/3} |A \overset{G}{+} B|^{1/2}.
    \end{align*}
\end{theorem}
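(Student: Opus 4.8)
\emph{Plan.} I would prove the theorem directly by the copy lemma, following Katz and Tao: apply Lemma~\ref{lem:combinatorial-copy-lemma} — or, recast via the remark preceding it, the entropic Lemma~\ref{lem:copy-lemma} — to a suitable ``path-counting'' configuration and then bound that configuration from above. Phrased entropically, the argument would establish
\begin{align}
H(X-Y)\ \le\ \tfrac{2}{3}H(X)+\tfrac{2}{3}H(Y)+\tfrac{1}{2}H(X+Y)
\label{eqn:kt-ent}
\end{align}
for every finite-valued pair $(X,Y)$ in $\tf$, and \eqref{eqn:kt-ent} implies the theorem: fixing for each $d\in A\overset{G}{-}B$ one pair $(a_d,b_d)\in G$ with $a_d-b_d=d$ and taking $(X,Y)$ uniform on $\{(a_d,b_d):d\in A\overset{G}{-}B\}$, one gets $H(X-Y)=\log|A\overset{G}{-}B|$ while $H(X)\le\log|A|$, $H(Y)\le\log|B|$ and $H(X+Y)\le\log|A\overset{G}{+}B|$ (each $(a_d,b_d)\in G$). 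Note Theorem~\ref{thm:gen-rus-type-equiv} is not directly usable here — the index sets of the forms $X-Y$, $X+Y$, $X$, $Y$ overlap — which is the same obstruction that forces a direct argument.

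\emph{Carrying it out.} I would take the base set of Lemma~\ref{lem:combinatorial-copy-lemma} to be the difference set $D:=A\overset{G}{-}B$ rather than $G$ itself (a naive path count with base set $G$ is forced into the lossy step $|G|\ge|D|$ and yields only exponents summing to $2$, not the sharp $\tfrac23+\tfrac23+\tfrac12$), equip each $d$ with a representative in $G$, and let the linking functions be coordinate/sum projections of that representative, landing in $A$ and in $A\overset{G}{+}B$. Lemma~\ref{lem:combinatorial-copy-lemma} then lower-bounds the number of admissible tuples $(d_1,\dots,d_n)$, and against this I would place an upper bound: once one entry and the linking constraints are fixed, each remaining entry is confined to a fibre whose size is controlled by $|A|$, $|B|$, or $|A\overset{G}{+}B|$ — moving between coordinates here uses torsion-freeness, i.e.\ injectivity of $(a,b)\mapsto(a+b,a-b)$ up to finite index.

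\emph{Main obstacle.} The real work is not any single estimate but matching the two bounds at the sharp exponents: choosing the length $n$, the pattern of which projection is used at each step, and — the crux — keeping the path-set upper bound tight, since the crude version (passing maxima through fibres) overshoots on $|A|$ and fails to produce $|A\overset{G}{+}B|^{1/2}$. I expect extracting $\tfrac23,\tfrac23,\tfrac12$ to require a delicately balanced finite $n$, or a limiting average over a family of such configurations, and I expect that bookkeeping to be the bulk of the proof, just as it is in \cite{kat99}.
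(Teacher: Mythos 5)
Your opening reduction is sound, and it is in fact the only route this paper itself offers: Theorem \ref{thm:katz-tao-comb} is quoted here from \cite{kat99} without proof, while the entropic inequality you need, \eqref{eqn:kat-entropy-mod}, is established in Corollary \ref{co:katz-tao-ruzsa} (via Theorem \ref{th:katz-tao-ent} and Lemma \ref{lem:copy-lemma}) for \emph{arbitrarily coupled} $(X,Y)$ on an arbitrary ambient abelian group. Choosing one representative $(a_d,b_d)\in G$ for each $d\in A\overset{G}{-}B$ and taking $(X,Y)$ uniform on these pairs gives $H(X-Y)=\log|A\overset{G}{-}B|$, $H(X)\le\log|A|$, $H(Y)\le\log|B|$, $H(X+Y)\le\log|A\overset{G}{+}B|$, and the theorem follows; this is exactly the ``entropy implies cardinality'' direction of Theorem \ref{thm:ruzsa-equiv}, and you are right that Theorem \ref{thm:gen-rus-type-equiv} is not applicable because the supports of the linear forms overlap. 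Note that the dependence of $X$ and $Y$ in this step is essential, so Tao's independent-coupling analog would not suffice, whereas Corollary \ref{co:katz-tao-ruzsa} does.

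The gap is that the substance of the proof---the configuration to which the copy lemma is applied---is never specified, and the one you sketch is not the one that works. You propose paths over the difference set $D=A\overset{G}{-}B$ with linking maps given by coordinate/sum projections of a chosen representative, and you explicitly defer ``choosing $n$, the pattern of projections, and keeping the upper bound tight''; but that deferred bookkeeping is precisely where the exponents $\tfrac23,\tfrac23,\tfrac12$ are produced, so nothing is actually established. The Katz--Tao mechanism (mirrored in the Appendix proof of Theorem \ref{th:katz-tao-ent}) is quite specific: the base object is not $D$ but the triple $(X,Y,Y^\dagger)$ with $Y^\dagger$ a conditionally independent copy of $Y$ given $X$ (combinatorially, triples $(a,b,b')$ with $(a,b),(a,b')\in G$), which yields $H(X,Y,Y^\dagger)=2H(X,Y)-H(X)$ as in \eqref{eq:prelimlem1}; one then chains \emph{four} copies of this triple via the three linking functions $f_1=(x+y,x+y^\dagger)$, $f_2=(y,y^\dagger)$, $f_3=(x+y,y^\dagger)$ and closes with the linear determination $X_4-Y_4^\dagger=X_3+Y_1-Y_1^\dagger-Y_4$, which is what lets $H(X+Y)$ enter with weight $3$ and produces $5H(X,Y)-4H(X)-4H(Y)-3H(X+Y)+H(X-Y)\le 0$, i.e.\ the $\tfrac23,\tfrac23,\tfrac12$ exponents after normalization. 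Without the auxiliary copy $Y^\dagger$ and these particular linking maps, a fibre-by-fibre count over $D$ bounded by $|A|,|B|,|A\overset{G}{+}B|$ has no demonstrated way to beat trivial exponents. A smaller point: torsion-freeness (injectivity of $(a,b)\mapsto(a+b,a-b)$) is not the engine anywhere---all determinations used are linear identities valid in any abelian group, which is why Corollary \ref{co:katz-tao-ruzsa} needs no such hypothesis; the assumption in Theorem \ref{thm:katz-tao-comb} is inherited from the setting of \cite{kat99} and \cite{ruz09}, not from the step you flag.
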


In \cite{ruz09}, Ruzsa established the entropy version of Katz-Tao sum-difference inequality by using a formal equivalence theorem between G-restricted sumset inequalities and entropic inequalities.

\begin{theorem}[Equivalence Theorem 2, \cite{ruz09}]
    \label{thm:ruzsa-equiv}

    Let $f, g_1, \dots, g_k$ be linear functions in two variables with integer coefficients, and let $\alpha_1, \dots, \alpha_k$ be positive real numbers. The following statements are
    equivalent:
    \begin{enumerate}
        \item For every finite $A \subseteq \tf \times \tf$ we have
              \begin{align*}
                  |f(A)| \leq \prod |g_i(A)|^{\alpha_i},
              \end{align*}
              where $|f(A)|$ denotes the cardinality of the image $f(A)$.
        \item
              For every pair $X, Y$ of (not necessarily independent) random variables with values in $\tfa$ such that the entropy of each $g(X, Y)$ is finite, the entropy
              of $f(X, Y)$ is also finite and it satisfies
              \begin{align*}
                  H(f(X, Y)) \leq \sum \alpha_i H(g_i(X, Y)).
              \end{align*}
    \end{enumerate}
\end{theorem}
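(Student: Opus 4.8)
The plan is to prove the two implications $(2)\Rightarrow(1)$ and $(1)\Rightarrow(2)$ separately. The second follows the same route as the chain $a)\Rightarrow b)\Rightarrow c)$ in the proof of Theorem~\ref{thm:gen-rus-type-equiv}, namely Ruzsa's embedding trick followed by the method of types, whereas the first is elementary. Throughout, write a linear function in two variables with integer coefficients as $f(u,v)=c_f u+d_f v$ and $g_i(u,v)=c_i u+d_i v$, and let $\hat f$ and $\hat g_i$ denote their coordinatewise extensions to $\tf^m\times\tf^m$.

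First I would handle $(2)\Rightarrow(1)$. Given a finite $A\subseteq\tf\times\tf$, pick for every $z\in f(A)$ one preimage $(a_z,b_z)\in A$ and set $A'=\{(a_z,b_z):z\in f(A)\}$, so that $f$ restricted to $A'$ is a bijection onto $f(A)$. If $(X,Y)$ is uniform on $A'$ then $f(X,Y)$ is uniform on $f(A)$, hence $H(f(X,Y))=\log|f(A)|$, while each $g_i(X,Y)$ is supported on $g_i(A')\subseteq g_i(A)$ and thus has finite entropy $H(g_i(X,Y))\le\log|g_i(A)|$. Feeding this pair into $(2)$ and exponentiating yields $|f(A)|\le\prod_i|g_i(A)|^{\alpha_i}$.

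For $(1)\Rightarrow(2)$ I would first assume $X,Y$ have finite support. Let $(X_j,Y_j)_{j=1}^m$ be i.i.d.\ from $P_{XY}$, write $\mathbf{X}=(X_1,\dots,X_m)$ and $\mathbf{Y}=(Y_1,\dots,Y_m)$, and let $A_m^{*}\subseteq(\tf\times\tf)^m$ be a strongly typical set of $P_{XY}$ with $\delta_m\to 0$ and $\sqrt{m}\,\delta_m\to\infty$, regarded as a subset of $\tf^m\times\tf^m$ by grouping the first and second coordinates. Identifying $\tf$ with $\mathbb{Z}^d$ and taking Ruzsa's homomorphism $\psi_q(\mathbf{t})=t_1+t_2 q+\dots+t_m q^{m-1}$ with $q$ large enough to be injective on the finitely many finite sets $\hat f(A_m^{*})$, $\hat g_i(A_m^{*})$ and the two coordinate projections of $A_m^{*}$, I would set $B_m=(\psi_q\times\psi_q)(A_m^{*})\subseteq\tf\times\tf$. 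By linearity, $\psi_q(\hat f(\mathbf{X},\mathbf{Y}))=f(\psi_q(\mathbf{X}),\psi_q(\mathbf{Y}))$ and likewise for the $g_i$, so $|B_m|=|A_m^{*}|$, $|f(B_m)|=|\hat f(A_m^{*})|$ and $|g_i(B_m)|=|\hat g_i(A_m^{*})|$; hence $(1)$ applied to $B_m$ becomes $|\hat f(A_m^{*})|\le\prod_i|\hat g_i(A_m^{*})|^{\alpha_i}$. It remains to read off exponential rates by the method of types (cf.\ Chapter~2 of~\cite{csk11}): every string in $\hat g_i(A_m^{*})$ has type within $o(1)$ of $P_{g_i(X,Y)}$, so $\frac{1}{m}\log|\hat g_i(A_m^{*})|\le H(g_i(X,Y))+o(1)$, while the law of $\hat f(\mathbf{X},\mathbf{Y})$ conditioned on the high-probability event $\{(\mathbf{X},\mathbf{Y})\in A_m^{*}\}$ is supported on $\hat f(A_m^{*})$ yet has entropy $mH(f(X,Y))-o(m)$, so $\frac{1}{m}\log|\hat f(A_m^{*})|\ge H(f(X,Y))-o(1)$. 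Dividing by $m$ and letting $m\to\infty$ gives $H(f(X,Y))\le\sum_i\alpha_i H(g_i(X,Y))$.

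It then remains to drop the finite-support assumption: I would exhaust $\tf\times\tf$ by finite sets $F_N$, apply the finite-support inequality to $P_{XY}$ conditioned on $\{(X,Y)\in F_N\}$, and let $N\to\infty$. On the right, the identity $H(Z)=H(\mathbf{1}_{F_N})+P(F_N)\,H(Z\mid F_N)+P(F_N^{c})\,H(Z\mid F_N^{c})$ applied to $Z=g_i(X,Y)$ gives $\limsup_N H(g_i(X,Y)\mid F_N)\le H(g_i(X,Y))<\infty$; on the left, the conditional laws of $f(X,Y)$ converge pointwise to its law, so lower semicontinuity of entropy (Fatou, using $-p\log p\ge 0$) gives $H(f(X,Y))\le\liminf_N H(f(X,Y)\mid F_N)$, which is also what shows $H(f(X,Y))<\infty$. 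I expect the only genuine friction to be the two places where Theorem~\ref{thm:gen-rus-type-equiv} also strains: the embedding step really needs $\tf$ finitely generated and torsion-free (so that $\tf\cong\mathbb{Z}^d$ and $\psi_q$ exists), and one must take a little care that the type-counting bounds are applied to the image sets $\hat f(A_m^{*})$, $\hat g_i(A_m^{*})$ rather than to product typical sets; everything else is routine.
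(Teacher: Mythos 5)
The paper does not prove Theorem~\ref{thm:ruzsa-equiv}: it is quoted from Ruzsa \cite{ruz09} as a known equivalence, so there is no in-paper proof to compare against. Judged on its own, your proposal is essentially a correct reconstruction along the same template as the paper's proof of Theorem~\ref{thm:gen-rus-type-equiv} (the $\psi_q$-embedding followed by typicality/type counting), with the adaptations that the restricted-sumset setting actually requires. In $(2)\Rightarrow(1)$ the preimage-selection set $A'$ on which $f$ is injective is exactly the right device: taking $(X,Y)$ uniform on $A$ itself would only give upper bounds on both sides and hence nothing, since $A$ need not be a product set. In $(1)\Rightarrow(2)$ the joint law is fixed, so no maximization over couplings is needed and a single strongly typical set for $P_{XY}$ suffices; your rate bounds (type counting for the $\hat g_i$-images, and the conditional-entropy argument $\log|\hat f(A_m^*)|\ge H(\hat f(\mathbf X,\mathbf Y)\mid (\mathbf X,\mathbf Y)\in A_m^*)\ge mH(f(X,Y))-o(m)$ using $P(A_m^*)\to 1$ and the finite per-symbol alphabet) are sound. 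Note that injectivity of $\psi_q$ is only genuinely needed on $\hat f(A_m^*)$, since $|g_i(B_m)|=|\psi_q(\hat g_i(A_m^*))|\le|\hat g_i(A_m^*)|$ holds automatically and in the favorable direction. The truncation step for unbounded support, which Theorem~\ref{thm:gen-rus-type-equiv} never has to face, also goes through, but contains one small slip: the displayed ``identity'' $H(Z)=H(\mathbf{1}_{F_N})+P(F_N)H(Z\mid F_N)+P(F_N^{c})H(Z\mid F_N^{c})$ is not an identity here, because the event $\{(X,Y)\in F_N\}$ is in general not determined by $Z=g_i(X,Y)$; the right-hand side equals $H(Z,\mathbf{1}_{F_N})\ge H(Z)$. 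What you actually need is only $P(F_N)\,H(Z\mid F_N)\le H(Z\mid \mathbf{1}_{F_N})\le H(Z)$, i.e.\ $H(Z\mid F_N)\le H(Z)/P(F_N)$, which yields $\limsup_N H(g_i(X,Y)\mid F_N)\le H(g_i(X,Y))$ as you claim; combined with lower semicontinuity of entropy on the $f$-side, the limiting argument (including the finiteness of $H(f(X,Y))$) is correct.
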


\begin{remark}
    It may be worthwhile mentioning a key difference between Theorem \ref{thm:gen-rus-type-equiv} and \Cref{thm:ruzsa-equiv}. The equivalence in Theorem \ref{thm:ruzsa-equiv} follows when the sum-set inequalities hold for every $G$-restricted sumset. On the other hand, most of the inequalities in literature are established for the Minkowski sum of sets, and Theorem \ref{thm:gen-rus-type-equiv} holds under such a situation.
\end{remark}

Consequently, Ruzsa obtained the following entropic inequality by applying  Theorem \ref{thm:ruzsa-equiv} to \Cref{thm:katz-tao-comb}.
\begin{theorem}{\cite{ruz09}}
\label{thm:katz-tao-entropic}
    Suppose $X$ and $Y$ are random variables with finite support on $\tfa$, we have
    \begin{align}
        \label{eqn:kat-entropy}
        H(X - Y) \leq \frac{2}{3} H(X) + \frac{2}{3} H(Y) + \frac{1}{2} H(X + Y).
    \end{align}
\end{theorem}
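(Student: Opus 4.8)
\emph{Proof proposal.} The quickest route is to cash in the equivalence already at hand. View a finite subset of $\tf\times\tf$ as a restricting set $G$ (with $A,B$ its two projections), and take the two-variable integer linear functions $f(x,y)=x-y$, $g_1(x,y)=x$, $g_2(x,y)=y$, $g_3(x,y)=x+y$, with positive exponents $\alpha_1=\alpha_2=\tfrac23$, $\alpha_3=\tfrac12$. Then $f(G)=A\overset{G}{-}B$, $g_1(G)=A$, $g_2(G)=B$, $g_3(G)=A\overset{G}{+}B$, so Theorem \ref{thm:katz-tao-comb} is exactly statement~(1) of Theorem \ref{thm:ruzsa-equiv} for this data. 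Statement~(2) of that theorem then gives, for every pair $(X,Y)$ whose relevant images have finite entropy (automatic for finitely supported $X,Y$), $H(X-Y)\le\tfrac23H(X)+\tfrac23H(Y)+\tfrac12H(X+Y)$, which is \eqref{eqn:kat-entropy}. Here there is essentially no obstacle: the content sits inside Theorem \ref{thm:ruzsa-equiv} and Theorem \ref{thm:katz-tao-comb}, and one only has to match the two forms.

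In the spirit of the present paper I would also want a proof that stays on the entropy side, reproducing Katz and Tao's own argument with the entropic copy lemma (Lemma \ref{lem:copy-lemma}) standing in for its combinatorial ancestor (Lemma \ref{lem:combinatorial-copy-lemma}). The plan: fix $(X,Y)$, realize the difference $X-Y$ as $X_i-Y_i$ for a chain of copies $(X_i,Y_i)$ of $(X,Y)$, and glue the chain so that each consecutive pair agrees on one of the functions $X_i$, $Y_i$, $X_i+Y_i$ --- precisely the hypothesis of Lemma \ref{lem:copy-lemma}, with the agreed value playing the role of $U_i$. Lemma \ref{lem:copy-lemma} then yields an exact identity expressing the joint entropy of the configuration as $\sum_iH(X_i,Y_i)$ minus a weighted sum of copies of $H(X)$, $H(Y)$, $H(X+Y)$, one per glueing. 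One then estimates that same joint entropy from below --- it contains all $n$ differences $X_i-Y_i$, which are only weakly tied together --- and, crucially, from above by the entropic analogue of the projection/injection step of Katz--Tao: a pair is recovered from its sum together with its difference, and consecutive pairs overlap heavily, so submodularity plus the fact that a function carries no more entropy than its argument should bound the joint entropy by a combination of the same three quantities. Letting $n\to\infty$ with the right proportion of ``sum'' glueings to ``base'' glueings would then isolate the coefficients $\tfrac23,\tfrac23,\tfrac12$.

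The hard part is this last step. Naive configurations --- a couple of copies made conditionally independent given $X+Y$ or given $X-Y$ and then glued --- return only trivialities such as $H(X,Y)\le H(X+Y)+H(X-Y)$, because Lemma \ref{lem:copy-lemma} is an \emph{equality} and supplies no slack on its own; the slack in the Katz--Tao argument comes entirely from the combinatorial copy lemma, which is a genuine convexity input. So the real work is to mirror their specific chain (the difference set as the node alphabet, $\{A,\,B,\,A\overset{G}{+}B\}$ as the image alphabets) and to carry out the injection controlling the joint entropy precisely enough that the upper and lower estimates do not merely agree but actually combine to give \eqref{eqn:kat-entropy}.
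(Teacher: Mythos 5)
Your first paragraph is exactly the paper's own derivation: Theorem~\ref{thm:katz-tao-entropic} is obtained by applying Ruzsa's equivalence theorem (Theorem~\ref{thm:ruzsa-equiv}) to the combinatorial Katz--Tao inequality (Theorem~\ref{thm:katz-tao-comb}), so your proof is correct and takes the same route as the paper. The stand-alone entropic argument you sketch (and acknowledge as incomplete) is not needed for this statement; it is essentially what the paper carries out separately, via Lemma~\ref{lem:copy-lemma}, in the Appendix proof of Theorem~\ref{th:katz-tao-ent} and Corollary~\ref{co:katz-tao-ruzsa}.
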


\begin{theorem} \cite[Proposition 3.6]{tav06ds}
    \label{th:katz-tao-ent}
    Suppose $X$ and $Y$ are random variables with finite support on an ambient abelian group $\mathbb{G}$, we have
    \begin{align*}
        \frac{1}{2} I(X; X - Y) + \frac{1}{2} I(Y; X - Y) \leq  \frac{3}{2} I(X; X + Y) + \frac{3}{2} I(Y; X + Y) + 3 I(X; Y).
    \end{align*}
\end{theorem}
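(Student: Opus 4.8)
The plan is first to strip the statement of mutual informations and reduce it to a linear entropy inequality. Writing $S=X+Y$ and $D=X-Y$ and using $I(X;D)=H(D)-H(Y\mid X)$, $I(Y;D)=H(D)-H(X\mid Y)$, the corresponding identities for $S$, and $H(Y\mid X)+H(X\mid Y)=2H(X,Y)-H(X)-H(Y)$, a direct computation shows that Theorem~\ref{th:katz-tao-ent} is equivalent to the linear entropy inequality
\begin{align}
  \label{eq:kt-ent-form}
  H(X-Y)+5H(X,Y)\;\le\;3H(X+Y)+4H(X)+4H(Y),
\end{align}
equivalently $H(X-Y)+H(X,Y)\le 3H(X+Y)+4I(X;Y)$. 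Two observations shape the rest of the argument. First, \eqref{eq:kt-ent-form} is strictly stronger than Ruzsa's entropic Katz--Tao inequality \eqref{eqn:kat-entropy}: since $H(X,Y)\ge H(X-Y)$, \eqref{eq:kt-ent-form} forces $6H(X-Y)\le 3H(X+Y)+4H(X)+4H(Y)$, which is exactly \eqref{eqn:kat-entropy}; so \eqref{eq:kt-ent-form} cannot be deduced from \eqref{eqn:kat-entropy} in one shot. Second, \eqref{eq:kt-ent-form} is tight --- it is an equality when $X,Y$ are independent and uniform on $\mathbb{Z}/n$, so that $X+Y$ and $X-Y$ are also uniform --- so the proof must be lossless.

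To prove \eqref{eq:kt-ent-form} I would carry out the information-theoretic transcription of Katz--Tao's additive-energy argument, with Lemma~\ref{lem:copy-lemma} playing the role that the combinatorial counting lemma, Lemma~\ref{lem:combinatorial-copy-lemma}, plays there. Concretely, introduce auxiliary ``energy'' configurations built from conditionally i.i.d.\ copies of $(X,Y)$: e.g.\ three copies $(X_i,Y_i)_{i=1}^{3}$ of $(X,Y)$ all sharing the common sum $S=X+Y$ and conditionally independent given $S$ (the entropic surrogate of the restricted additive energy $\sum_s\rho(s)^3$), and the dual configuration built over the common difference $D=X-Y$. Applying Lemma~\ref{lem:copy-lemma} to the chain $X_1\rightarrow S\rightarrow X_2\rightarrow S\rightarrow X_3$ then yields the exact identity $H(X_1,Y_1,X_2,Y_2,X_3,Y_3)=3H(X,Y)-2H(X+Y)$ (the analogue of the Cauchy--Schwarz/H\"older lower bound on the energy). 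The genuine content comes from bounding the same joint entropy from the other side through the several ways the configuration is recoverable from a short list of its coordinates --- from $(S,X_1,X_2,X_3)$, from $(S,X_1-Y_1,X_2,X_3)$, or (in a torsion-free group, where $t\mapsto 2t$ is injective) from $(S,X_1-Y_1,X_2-Y_2,X_3-Y_3)$ --- each such recovery giving, via subadditivity $H(V_1,\dots,V_r)\le\sum_i H(V_i)$, an inequality relating $H(X)$, $H(Y)$, $H(X+Y)$ and $H(X-Y)$. The target \eqref{eq:kt-ent-form} is then meant to fall out of a suitable convex combination of these inequalities (from both the $S$- and the $D$-configuration), together with a conditioned instance of \eqref{eqn:kat-entropy} and the elementary bounds $H(X,Y)\le H(X)+H(Y)$ and $H(X,Y)\le H(X+Y)+\min\{H(X),H(Y)\}$.

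I expect two obstacles. The main one is calibrating that convex combination: because \eqref{eq:kt-ent-form} is tight there is no slack, so one must pin down the exact weighting --- quite possibly only after enlarging the configuration (more copies, or copies of copies taken over $S$ and over $D$, mirroring the telescoping in Katz--Tao) --- or otherwise isolate the minimal extra entropic input required; this is where essentially all the work lies. The second is the passage from the finitely generated torsion-free case to a general abelian group $\mathbb{G}$: several of the coordinate-recoveries above invert multiplication by $2$, which is legitimate only where that map is injective on the relevant supports; for general $\mathbb{G}$ one either retains only the torsion-free-safe recoveries (such as those from $(S,X_1,Y_2)$, valid in any abelian group) and absorbs the defect $\delta:=H(X,Y)-H(X+Y,X-Y)\ge 0$ into the estimate, or establishes \eqref{eq:kt-ent-form} first for torsion-free groups --- which is the substantive case --- and then treats torsion separately, the $2$-torsion case being immediate since there $X+Y=X-Y$.
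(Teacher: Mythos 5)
Your reduction of the mutual-information statement to the linear entropy inequality $H(X-Y)+5H(X,Y)\le 3H(X+Y)+4H(X)+4H(Y)$ is correct and coincides exactly with the paper's equivalent form \eqref{eq:equivform}, and your general strategy (couple several copies of $(X,Y)$ via Lemma~\ref{lem:copy-lemma}, then bound the resulting joint entropy a second time by exhibiting a small set of coordinates from which the whole configuration is recoverable) is the strategy the paper actually uses. But what you have written is a plan with the decisive step missing: you yourself say that ``calibrating that convex combination \dots\ is where essentially all the work lies,'' and indeed the configuration you propose does not deliver the result. Three copies of $(X,Y)$ glued along the common sum $S$ give only the identity $H(X_1,Y_1,X_2,Y_2,X_3,Y_3)=3H(X,Y)-2H(X+Y)$; you specify no concrete list of recoveries and no weights, and invoking ``a conditioned instance of \eqref{eqn:kat-entropy}'' is circular in spirit, since that inequality is a weaker consequence of the one being proved (as you note) and is not available as an ingredient here. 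So there is a genuine gap: the construction that makes the inequality come out exactly is absent.

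For comparison, the paper's proof first adjoins a conditionally independent copy $Y^\dagger$ of $Y$ given $X$ (so $H(X,Y,Y^\dagger)=2H(X,Y)-H(X)$ by Lemma~\ref{lem:copy-lemma}), and then couples \emph{four} copies of the triple $(X,Y,Y^\dagger)$ along a Markov chain whose gluing functions are $f_1=(x+y,\,x+y^\dagger)$, $f_2=(y,\,y^\dagger)$, $f_3=(x+y,\,y^\dagger)$, giving the exact value $4H(X,Y,Y^\dagger)-H(X+Y,X+Y^\dagger)-H(Y,Y^\dagger)-H(X+Y,Y^\dagger)$ for the joint entropy. The second bound comes from the linear identity $X_4-Y_4^\dagger=X_3+Y_1-Y_1^\dagger-Y_4$, which shows the entire configuration is determined by $(X_1,Y_1,Y_1^\dagger,X_3,Y_4)$ together with $X_4$, hence the joint entropy is at most $H(X_1,Y_1,Y_1^\dagger,X_3,Y_4)+H(X_4\mid X_4-Y_4^\dagger)$; combining the two estimates and using $H(X,Y,Y^\dagger)=2H(X,Y)-H(X)$ yields \eqref{eq:equivform} directly, with no convex combination to calibrate. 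Note also that every recovery used is a linear identity valid in any abelian group, so no inversion of the doubling map $t\mapsto 2t$ is ever needed; your proposed case analysis for torsion is a symptom of having the wrong recoveries in mind rather than a genuine obstacle in the correct argument.
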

The proof of this Theorem will be presented in the Appendix for the sake of completeness. The only minor difference between the arguments is using Lemma \ref{lem:longmclem} instead of the submodularity argument used in \cite{tav06ds}.

\begin{remark}
    The following remarks (and acknowledgments) may interest a careful reader.
    \begin{itemize}
    \item The authors were originally unaware of an  entropic argument by Tao and Vu (found in \cite{tav06ds}) for the result in Theorem \ref{th:katz-tao-ent}. However, the authors were notified of the same by Prof. Ben Green shortly upon uploading a preliminary version of this to arXiv.
        \item Since the quantities in Theorem 4 are mutual information and not entropies, it is immediate that the same inequality also holds from continuous-valued random variables or those whose support is not finite.
        \item The version of the result that was presented at ISIT in July 2023 was of the form presented in Corollary \ref{co:katz-tao-ruzsa}. Subsequently, Lampros Gavalakis and Ioannis Kontoyannis wondered about a continuous random variable analog of Corollary \ref{co:katz-tao-ruzsa}. Following this discussion, we modified our original proof to develop Theorem \ref{th:katz-tao-ent}. Later, we realized that Tao and Vu \cite{tav06ds} already had this result in an unpublished manuscript.
    \end{itemize}
\end{remark}
\begin{corollary}
    Suppose $X$ and $Y$ are random variables with finite support on an ambient abelian group $\gf$. We have
    \begin{align}
        \label{eqn:kat-entropy-mod}
        H(X - Y) \leq \frac{2}{3} H(X) + \frac{2}{3} H(Y) + \frac{1}{2} H(X + Y).
    \end{align}
    \label{co:katz-tao-ruzsa}
\end{corollary}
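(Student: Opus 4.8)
The plan is to obtain \eqref{eqn:kat-entropy-mod} as a short consequence of Theorem~\ref{th:katz-tao-ent}. The point of routing through Theorem~\ref{th:katz-tao-ent} (rather than through an equivalence theorem) is that it is valid for an \emph{arbitrary} abelian group $\gf$ and an arbitrary joint law $P_{XY}$, whereas the derivation of Theorem~\ref{thm:katz-tao-entropic} via Theorem~\ref{thm:ruzsa-equiv} (as well as the route via Theorem~\ref{thm:gen-rus-type-equiv}) needed the ambient group to be finitely generated and torsion-free.

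First I would translate the mutual informations in Theorem~\ref{th:katz-tao-ent} into entropies. In any abelian group the pairs $(X, X\pm Y)$ and $(Y, X\pm Y)$ are each in bijection with $(X,Y)$, so $H(X, X\pm Y) = H(Y, X\pm Y) = H(X,Y)$; hence $I(X;X-Y) = H(X)+H(X-Y)-H(X,Y)$, and likewise for the other four terms (with $X+Y$ in place of $X-Y$, and $I(X;Y) = H(X)+H(Y)-H(X,Y)$). Substituting these identities into Theorem~\ref{th:katz-tao-ent}, the joint-entropy contributions on the two sides are $-H(X,Y)$ and $-6H(X,Y)$ respectively, so after collecting terms one is left with
\[
    H(X-Y) \;\le\; 4H(X) + 4H(Y) + 3H(X+Y) - 5H(X,Y).
\]

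The only real idea is the last step: since $X-Y$ is a deterministic function of the pair $(X,Y)$ we have $H(X,Y) \ge H(X-Y)$, and using this to replace the $-5H(X,Y)$ term yields $6H(X-Y) \le 4H(X) + 4H(Y) + 3H(X+Y)$, i.e.\ \eqref{eqn:kat-entropy-mod} after dividing by $6$ (note $4/6 = 2/3$ and $3/6 = 1/2$, so Theorem~\ref{th:katz-tao-ent} is in fact exactly a strengthening of Corollary~\ref{co:katz-tao-ruzsa}). I do not expect a substantive obstacle here; the one thing to check carefully is that every entropy identity used holds in a general abelian group — each uses only the invertibility of the group operation, not torsion-freeness — which is precisely what lets this argument extend \eqref{eqn:kat-entropy} from $\tfa$ to an arbitrary $\gf$.
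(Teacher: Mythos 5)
Your argument is correct and is essentially the paper's own proof: expanding the mutual informations in Theorem~\ref{th:katz-tao-ent} gives exactly the equivalent entropy form \eqref{eq:equivform}, i.e.\ $H(X-Y) \le 4H(X)+4H(Y)+3H(X+Y)-5H(X,Y)$, which is the starting point the paper quotes directly, and both proofs then finish with the same step $H(X,Y)\ge H(X-Y)$ and division by $6$. The observation that everything holds over an arbitrary ambient abelian group matches the paper's remark that Theorem~\ref{th:katz-tao-ent} strengthens \eqref{eqn:kat-entropy} beyond the torsion-free setting.
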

\begin{proof}
    Following the equivalent form of the result in \eqref{eq:equivform}, we have
    \begin{align*}
        0 & \geq 5 H(X, Y) - 4 H(X) - 4 H(Y) - 3 H(X + Y) + H(X - Y) \\
          & \geq 6 H(X - Y) - 4 H(X) - 4 H(Y) - 3 H(X + Y).
    \end{align*}

    The second inequality holds if and only if $(X, Y)$ is a function of $X - Y$.
\end{proof}

One also immediately obtain the following independently established results as a corollary of Theorem \ref{th:katz-tao-ent}.

\begin{corollary}
    \label{cor:indep}
    \cite[Theorem 1.10]{tao10} If $X$ and $Y$ are independent discrete-valued random variables
    \begin{align*}
        H(X - Y) \leq  3 H(X + Y) - H(X) - H(Y).
    \end{align*}
    \cite[Theorem 3.7]{kom14} If $X$ and $Y$ are independent continuous-values random variables with well-defined differential entropies
    \begin{align*}
        h(X - Y) \leq 3 h(X + Y) - h(X) - h(Y).
    \end{align*}
\end{corollary}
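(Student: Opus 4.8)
The plan is to obtain both inequalities by specializing Theorem \ref{th:katz-tao-ent} to the case of independent $X$ and $Y$. When $X \perp Y$ we have $I(X;Y)=0$, so the term $3I(X;Y)$ on the right-hand side of Theorem \ref{th:katz-tao-ent} drops out and we are left with
\[
\frac12 I(X;X-Y) + \frac12 I(Y;X-Y) \leq \frac32 I(X;X+Y) + \frac32 I(Y;X+Y).
\]

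Next I would rewrite each of the four mutual informations as a difference of entropies using independence. Since $X-Y$ is a deterministic function of $(X,Y)$, conditioning on $X$ gives $H(X-Y \mid X) = H(Y \mid X) = H(Y)$ (translation invariance of entropy together with $Y \perp X$), so $I(X;X-Y) = H(X-Y) - H(Y)$; symmetrically $I(Y;X-Y) = H(X-Y) - H(X)$, and in the same way $I(X;X+Y) = H(X+Y) - H(Y)$ and $I(Y;X+Y) = H(X+Y) - H(X)$. Substituting these into the displayed inequality, the left-hand side becomes $H(X-Y) - \tfrac12 H(X) - \tfrac12 H(Y)$ and the right-hand side becomes $3H(X+Y) - \tfrac32 H(X) - \tfrac32 H(Y)$; rearranging yields $H(X-Y) \leq 3H(X+Y) - H(X) - H(Y)$, which is the first claim.

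For the continuous case the argument is essentially verbatim. Theorem \ref{th:katz-tao-ent} is stated purely in terms of mutual informations, which remain meaningful and finite in the continuous setting (as observed in the remark following Theorem \ref{th:katz-tao-ent}), so the same specialization to $I(X;Y)=0$ applies. Under the hypothesis that the relevant differential entropies are well defined and finite, the identities $I(X;X\pm Y) = h(X\pm Y) - h(Y)$ and $I(Y;X\pm Y) = h(X\pm Y) - h(X)$ hold by the same reasoning, and the identical substitution and rearrangement produce $h(X-Y) \leq 3h(X+Y) - h(X) - h(Y)$.

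I do not expect any genuine obstacle: essentially all the content of the corollary is already carried by Theorem \ref{th:katz-tao-ent}, and the only point that needs a little care is the finiteness bookkeeping required to write the mutual informations as differences of (differential) entropies — precisely what the hypotheses of the two parts of the corollary supply.
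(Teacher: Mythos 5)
Your proposal is correct and follows essentially the same route as the paper: specialize Theorem \ref{th:katz-tao-ent} to $I(X;Y)=0$ and expand the remaining mutual informations via $H(X\pm Y\mid X)=H(Y)$ and $H(X\pm Y\mid Y)=H(X)$ under independence, then rearrange; the continuous case is handled identically with differential entropies, just as in the paper.
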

\begin{proof}
    The independence between $X$ and $Y$ reduces the inequality established in Theorem \ref{th:katz-tao-ent} to
    \begin{align*}
         & \frac{1}{2} I(X; X - Y) + \frac{1}{2} I(Y; X - Y) \leq  \frac{3}{2} I(X; X + Y) + \frac{3}{2} I(Y; X + Y),
    \end{align*}
    which is equivalent to
    \begin{align*}
        H(X-Y) & \leq 3 H(X+Y) - \frac{3}{2}H(X+Y|X)  - \frac{3}{2}H(X+Y|Y) + \frac{1}{2}H(X-Y|X) + \frac{1}{2}H(X-Y|Y) \\
               & = 3 H(X+Y) - H(X) - H(Y).
    \end{align*}
    The proof for the continuous case is identical.
\end{proof}

\begin{theorem}{(Sum-difference Inequality)\cite[Theorem 5.3]{ruz96}}
    \label{thm:sum-diff}
    The Ruzsa distance between two finite subsets $A, B$ on an abelian group $\gfa$ satisfies
    \begin{align}
        \label{eqn:sum-diff}
        \begin{split}
            &d_R(A, -B) \leq 3 d_R(A, B), \\
            & \mbox{or equivalently ~} |A + B| |A| |B| \leq |A-B|^3.
        \end{split}
    \end{align}
\end{theorem}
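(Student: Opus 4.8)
The plan is to reproduce Ruzsa's classical argument, which rests on two ingredients, each valid over an arbitrary abelian group: the Ruzsa triangle inequality in its cardinality form $|Y|\,|X-Z|\le|X-Y|\,|Y-Z|$ for finite $X,Y,Z$ (equivalently $d_R(X,Z)\le d_R(X,Y)+d_R(Y,Z)$; this follows from a direct injection and needs no torsion-freeness), and Pl\"unnecke's inequality, stating that $|A+(-B)|\le K|A|$ implies $|A+h(-B)|\le K^{h}|A|$ for every $h\ge 0$.

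First I would write $A+B=A-(-B)$ and apply the Ruzsa triangle inequality to the triple $X=A$, $Y=B$, $Z=-B$, obtaining
\begin{align*}
|B|\,|A+B|=|B|\,|A-(-B)|\le |A-B|\,|B-(-B)|=|A-B|\,|B+B|.
\end{align*}
It then remains to control $|B+B|$. Setting $K:=|A-B|/|A|=|A+(-B)|/|A|$ and applying Pl\"unnecke's inequality to $A$ and $-B$ with $h=2$ gives $|A+(-B)+(-B)|\le K^{2}|A|$; since $A$ is nonempty, translating by any element of $A$ shows $|B+B|=|(-B)+(-B)|\le |A+(-B)+(-B)|\le K^{2}|A|=|A-B|^{2}/|A|$. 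Substituting this into the previous display yields
\begin{align*}
|A+B|\,|A|\,|B|\le |A-B|^{3},
\end{align*}
and, taking logarithms and recalling Definition \ref{def:rus-dist}, this is exactly $d_R(A,-B)\le 3\,d_R(A,B)$, which is the assertion of the theorem.

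I expect the step $|B+B|\le|A-B|^{2}/|A|$ to be the only real obstacle. The Ruzsa triangle inequality on its own gives merely $|B+B|=|B-(-B)|\le |A-B|\,|A+B|/|A|$, which reintroduces $|A+B|$ and collapses; securing the sharp exponent $3$ genuinely requires Pl\"unnecke's inequality, whose proof uses the magnification-ratio (graph-compression) machinery rather than any of the entropic identities assembled so far. In particular, unlike the Katz-Tao sum-difference inequality, for which Lemma \ref{lem:copy-lemma} suffices, I would not expect a self-contained information-theoretic proof of this bound without an entropic analog of Pl\"unnecke's inequality; the magnification-ratio characterization of Section \ref{sec:magnification} is a step toward exactly that. (Note also that the left-hand factors $|A|$, $|B|$, $|A+B|$ have the overlapping supports $\{1\}$, $\{2\}$, $\{1,2\}$, so the equivalence of Theorem \ref{thm:gen-rus-type-equiv} does not apply and cannot be used to transfer a fixed-marginal entropic inequality back to this combinatorial statement.)
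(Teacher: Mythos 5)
Your proposal reaches the correct inequality, but by a genuinely different route from the paper, and with two points that need flagging. The paper itself only cites \eqref{eqn:sum-diff} to Ruzsa; its own contribution is a self-contained entropic derivation: Lemma \ref{lem:copy-lemma} yields Proposition \ref{prop:entropy-sum-diff}, hence Corollary \ref{cor:sum-diff}, and then Corollary \ref{cor:gen-ruz-sum-diff} gives the four-set inequality $|A||B||C+D|\le|A-B||C-B||A-D|$ by plugging in uniform (maximum-entropy) distributions; setting $C=A$, $D=B$ recovers exactly $|A||B||A+B|\le|A-B|^3$. That argument uses only one entropy identity plus subadditivity, works over any abelian group, and proves something strictly more general. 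Your route instead composes the Ruzsa triangle inequality with Pl\"unnecke, i.e.\ it outsources the sharp exponent $3$ to a much deeper external theorem whose classical proof is precisely the magnification-ratio machinery. Two caveats. First, Pl\"unnecke's inequality as you state it, $|A+h(-B)|\le K^h|A|$ for the \emph{same} set $A$, is not correct in general: the correct statement provides a nonempty subset $X\subseteq A$ with $|X+h(-B)|\le K^h|X|$ (equivalently, use the Pl\"unnecke--Ruzsa form $|h(-B)-\ell(-B)|\le K^{h+\ell}|A|$). The consequence you actually need, $|B+B|\le K^2|A|$, does follow from the subset version, since $|B+B|=|2(-B)|\le|X+2(-B)|\le K^2|X|\le K^2|A|$, so your chain survives once the citation is corrected. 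Second, your closing expectation --- that no self-contained information-theoretic proof of the exponent $3$ is available without an entropic Pl\"unnecke --- is contradicted by the paper: Proposition \ref{prop:entropy-sum-diff} (mirroring Ruzsa's original covering argument) obtains it from Lemma \ref{lem:copy-lemma} alone. Your parenthetical about Theorem \ref{thm:gen-rus-type-equiv} is correct for the three-set form with overlapping supports, but the paper sidesteps this by proving the four-variable combinatorial inequality directly from its entropic counterpart with uniform distributions rather than by invoking the equivalence theorem.
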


\begin{remark}
The following proposition and the proof below are essentially identical to that of Proposition 2.4 in \cite{tav06ds}. The only (minor) difference is that we do not assume that $X$ and $Y$ are independent.
\end{remark}

\begin{proposition}{(Entropic Sum-difference Inequality)}
    \label{prop:entropy-sum-diff}
    Let $X_1, Y_1, X_2, Y_2, X_3, Y_3$ be random variables (on a common probability space) with finite support on an abelian group $\gfa$ such that $X_1 - Y_1 = X_2 - Y_2 ~(=:U)$  and also satisfies that $(X_1, Y_1) \rightarrow U \rightarrow (X_2, Y_2)$ forms a Markov chain. Further, suppose $(X_1, Y_1, X_2, Y_2)$ and $(X_3, Y_3)$ are independent. Then the following inequality holds:
    \begin{align}
        \label{eqn:ent-sum-diff}
        \begin{split}
            &H(X_1, Y_1) + H(X_2, Y_2) + H(X_3 + Y_3) \leq H(X_1-Y_1) + H(X_1, Y_2, X_2 - Y_3, X_3 - Y_1).
        \end{split}
    \end{align}
\end{proposition}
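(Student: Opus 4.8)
The plan is to strip off the term $H(X_1-Y_1)$ that occurs on both sides of \eqref{eqn:ent-sum-diff} and reduce what remains to a short coupling identity about a single pair of independent random variables. Write $U := X_1-Y_1 = X_2-Y_2$. The pair of ``super--variables'' $(X_1,Y_1)$ and $(X_2,Y_2)$ is linked by the common value $U$ and satisfies the hypothesised Markov chain $(X_1,Y_1)\to U\to(X_2,Y_2)$, so Lemma~\ref{lem:copy-lemma} (applied with $n=2$) gives
\begin{align*}
    H(X_1,Y_1) + H(X_2,Y_2) = H(X_1,Y_1,X_2,Y_2) + H(U).
\end{align*}
Since $H(U) = H(X_1-Y_1)$ also appears on the right of \eqref{eqn:ent-sum-diff}, the claim is equivalent to
\begin{align*}
    H(X_1,Y_1,X_2,Y_2) + H(X_3+Y_3) \le H(X_1,Y_2,X_2-Y_3,X_3-Y_1).
\end{align*}

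Next I would perform an entropy--preserving change of variables. The tuples $(X_1,Y_1,X_2,Y_2)$ and $(X_1,Y_2,U)$ are bijective functions of one another (via $Y_1=X_1-U$ and $X_2=Y_2+U$), so $H(X_1,Y_1,X_2,Y_2)=H(X_1,Y_2,U)$. Likewise, once $X_1,Y_2$ are known, $X_2-Y_3$ is interchangeable with $U-Y_3$ and $X_3-Y_1$ is interchangeable with $X_3+U$; hence $H(X_1,Y_2,X_2-Y_3,X_3-Y_1)=H(Z,U-Y_3,X_3+U)$ with $Z:=(X_1,Y_2)$. The inequality to be proved becomes
\begin{align*}
    H(Z,U) + H(X_3+Y_3) \le H(Z,\,U-Y_3,\,X_3+U).
\end{align*}

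Now condition on $Z=z$. Because $(X_1,Y_1,X_2,Y_2)\perp(X_3,Y_3)$ and $(Z,U)$ is a function of $(X_1,Y_1,X_2,Y_2)$, conditioning on $Z=z$ leaves $U$ independent of $(X_3,Y_3)$ (with $(X_3,Y_3)$ retaining its law); also $X_3+Y_3\perp Z$. Expanding $H(Z,\cdot)=H(Z)+H(\cdot\mid Z)$ on both sides, it therefore suffices to establish, for \emph{any} independent $U$ and $(X_3,Y_3)$, that $H(U-Y_3,\,X_3+U)\ge H(U)+H(X_3+Y_3)$. For this, observe that $X_3+Y_3=(X_3+U)-(U-Y_3)$ is a deterministic function of the pair, so $H(U-Y_3,X_3+U)=H(X_3+Y_3)+H(U-Y_3,X_3+U\mid X_3+Y_3)$; and given $X_3+Y_3$ the pair is determined by $U-Y_3$ alone, so the last conditional entropy equals $H(U-Y_3\mid X_3+Y_3)\ge H(U-Y_3\mid X_3,Y_3)=H(U\mid X_3,Y_3)=H(U)$, using ``conditioning reduces entropy'', the bijection $u\mapsto u-y_3$, and independence. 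Averaging this over $z$ and adding $H(Z)$ back yields the displayed inequality, and then Step~1 finishes the proposition.

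The main obstacle is conceptual rather than computational: recognising the change of variables in the second step (that $W$ ``is'' $(X_1,Y_2,U-Y_3,X_3+U)$) and then checking carefully that the independence between the two blocks survives the conditioning on $Z=(X_1,Y_2)$. Once that is in place, the only genuine inequality invoked anywhere is ``conditioning reduces entropy'' inside the core lemma; everything else is the chain rule, Lemma~\ref{lem:copy-lemma}, and bijective relabellings. This mirrors the route of \cite[Prop.~2.4]{tav06ds}, with Lemma~\ref{lem:copy-lemma} replacing their submodularity step and with no independence assumed between $X_i$ and $Y_i$.
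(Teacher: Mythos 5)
Your proof is correct: the copy-lemma reduction, the bijective relabelings (valid since $Y_1=X_1-U$ and $X_2=Y_2+U$ hold almost surely), the factorization of the conditional law of $(U,X_3,Y_3)$ given $Z=(X_1,Y_2)$ (because $(Z,U)$ is a function of the first block), and the core estimate $H(U-Y_3,\,X_3+U)\ge H(U)+H(X_3+Y_3)$ for $U$ independent of $(X_3,Y_3)$ all check out, and averaging over $Z$ and undoing the reductions gives exactly \eqref{eqn:ent-sum-diff}.

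Your route differs from the paper's in organization rather than in substance, and the difference is worth noting. After the identical application of Lemma~\ref{lem:copy-lemma}, the paper never conditions on $(X_1,Y_2)$: it expands the single quantity $H(X_1,Y_1,X_2,Y_2,X_3,Y_3\mid X_3+Y_3)$ in two ways, once using independence of the two blocks and once using subadditivity together with the observation that $X_3+Y_3$ is determined by $(X_1,Y_2,X_2-Y_3,X_3-Y_1)$ via $X_1+Y_2=X_2+Y_1$. You instead relabel to $(Z,\,U-Y_3,\,X_3+U)$, condition on $Z$, and isolate a standalone two-block inequality; that inequality is precisely the joint-entropy form of the Madiman--Marcus--Tetali triangle inequality recalled in Remark~\ref{rem:stronger-ent} (substitute $(-U,X_3,-Y_3)$ for the triple there: $H(-U,\,X_3+Y_3)\le H(X_3+U,\,Y_3-U)$, which by independence and negation of a coordinate is your core lemma). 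The algebraic heart --- recovering $X_3+Y_3$ from the right-hand tuple through $X_1+Y_2=X_2+Y_1$ --- is common to both arguments, as is the fact that the only genuine inequality is conditioning-reduces-entropy/subadditivity. What your packaging buys is modularity: it makes explicit that, beyond Lemma~\ref{lem:copy-lemma}, the residual content of Proposition~\ref{prop:entropy-sum-diff} is a conditional instance of the entropic Ruzsa/MMT triangle inequality, which also explains why no independence between $X_i$ and $Y_i$ within a block is needed. The paper's double-counting is marginally shorter and avoids having to verify that the block independence survives conditioning on $Z$, a point you correctly identified and handled.
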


\begin{proof}
    Since $U = X_1 - Y_1 = X_2 - Y_2$ and $(X_1, Y_1) \rightarrow U \rightarrow (X_2, Y_2)$ forms a Markov chain, from Lemma \ref{lem:copy-lemma} we have
    \begin{align}
         & H(X_1, Y_1, X_2, Y_2) + H(U)  = H(X_1, Y_1) + H(X_2, Y_2) \label{eq:one}
    \end{align}

    We now decompose $H(X_1, Y_1, X_2, Y_2, X_3, Y_3 | X_3 + Y_3)$ in two ways. Firstly, since $(X_1, Y_1, X_2, Y_2)$ and $(X_3, Y_3)$ are independent, we have
    \begin{align*}
         & H(X_1, Y_1, X_2, Y_2, X_3, Y_3 | X_3 + Y_3)                                  \\
         & = H(X_1, Y_1, X_2, Y_2) + H(X_3, Y_3 | X_3 + Y_3)                            \\
         & \overset{(a)}{=} H(X_1, Y_1) + H(X_2, Y_2) - H(U) + H(X_3, Y_3 | X_3 + Y_3),
    \end{align*}
    where $(a)$ follows by \eqref{eq:one}.

    On the other hand, we have
    \begin{align*}
         & H(X_1, Y_1, X_2, Y_2, X_3, Y_3 | X_3 + Y_3)                                            \\
         & = H(X_1, Y_2, X_2 - Y_3, X_3 - Y_1, X_3, Y_3 | X_3 + Y_3)                              \\
         & \leq H(X_1, Y_2, X_2 - Y_3, X_3 - Y_1 | X_3 + Y_3) + H(X_3, Y_3 | X_3 + Y_3)           \\
         & = H(X_1, Y_2, X_2 - Y_3, X_3 - Y_1 , X_3 + Y_3) - H(X_3+Y_3) + H(X_3, Y_3 | X_3 + Y_3) \\
         & = H(X_1, Y_2, X_2 - Y_3, X_3 - Y_1 ) - H(X_3+Y_3) + H(X_3, Y_3 | X_3 + Y_3).
    \end{align*}
    The last equality is a consequence of the observation that $(X_1, Y_2, X_2 - Y_3, X_3 - Y_1)$ implies $(X_1,Y_2,X_2 + Y_1 -( X_3 + Y_3))$. However as $X_1 + Y_2 = X_2 + Y_1$ by assumption, we observe that $H(X_3+Y_3|X_1, Y_2, X_2 - Y_3, X_3 - Y_1)=0$ and thus justifying the equality.

    By combining these two decompositions, we obtain
    \begin{align*}
         & H(X_1, Y_1) + H(X_2, Y_2) + H(X_3 + Y_3)  \leq H(U) + H(X_1, Y_2, X_2 - Y_3, X_3 - Y_1).
    \end{align*}
\end{proof}

\begin{remark}
    The arguments here are also motivated by similar arguments in the sumset literature \cite{greennotes} and in Tao's work on a similar inequality in \cite{tao10}.
\end{remark}

\begin{corollary}
    \label{cor:sum-diff}
    In addition to the assumptions on $X_1, Y_1, X_2, Y_2, X_3, Y_3$ imposed in Proposition \ref{prop:entropy-sum-diff}, let us assume that $X_1$ is independent of $Y_1$  and  $X_2$ independent of $Y_2$. Then we have
    \begin{align*}
         & H(X_2) + H(Y_1) + H(X_3 + Y_3)  \leq H(X_1 - Y_1) + H(X_3 - Y_1) + H(X_2 - Y_3).
    \end{align*}
\end{corollary}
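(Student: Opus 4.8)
The plan is to derive this directly from Proposition \ref{prop:entropy-sum-diff} by combining the new independence hypotheses with nothing more than the chain rule and subadditivity of joint entropy. Recall that Proposition \ref{prop:entropy-sum-diff} gives
\begin{align*}
    H(X_1, Y_1) + H(X_2, Y_2) + H(X_3 + Y_3) \leq H(X_1-Y_1) + H(X_1, Y_2, X_2 - Y_3, X_3 - Y_1).
\end{align*}

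First I would use the added assumptions to split the joint entropies on the left-hand side: since $X_1$ is independent of $Y_1$ we have $H(X_1,Y_1) = H(X_1) + H(Y_1)$, and since $X_2$ is independent of $Y_2$ we have $H(X_2,Y_2) = H(X_2) + H(Y_2)$. Next, on the right-hand side I would apply subadditivity of entropy to the four-tuple, namely
\begin{align*}
    H(X_1, Y_2, X_2 - Y_3, X_3 - Y_1) \leq H(X_1) + H(Y_2) + H(X_2 - Y_3) + H(X_3 - Y_1).
\end{align*}
Substituting both observations into the inequality of Proposition \ref{prop:entropy-sum-diff} yields
\begin{align*}
    H(X_1) + H(Y_1) + H(X_2) + H(Y_2) + H(X_3 + Y_3) \leq H(X_1-Y_1) + H(X_1) + H(Y_2) + H(X_2 - Y_3) + H(X_3 - Y_1),
\end{align*}
and cancelling the common terms $H(X_1)$ and $H(Y_2)$ from both sides leaves exactly
\begin{align*}
    H(X_2) + H(Y_1) + H(X_3 + Y_3) \leq H(X_1 - Y_1) + H(X_3 - Y_1) + H(X_2 - Y_3),
\end{align*}
which is the claimed inequality.

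There is no genuine obstacle here: the corollary is a bookkeeping consequence of the proposition once the two independence assumptions are imposed. The only point requiring a moment's care is verifying that the terms one wishes to cancel ($H(X_1)$ appearing from the independence split, and $H(Y_2)$ appearing from subadditivity) really do show up on both sides with matching coefficients, and that the remaining difference-entropy terms on the right are precisely $H(X_1-Y_1)$, $H(X_3-Y_1)$, and $H(X_2-Y_3)$ as required. All other manipulations are immediate applications of the chain rule and subadditivity, both of which hold for arbitrary finitely supported random variables on an abelian group.
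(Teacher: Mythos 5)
Your proof is correct and follows essentially the same route as the paper: split $H(X_1,Y_1)$ and $H(X_2,Y_2)$ using the independence assumptions, apply subadditivity to $H(X_1,Y_2,X_2-Y_3,X_3-Y_1)$, and cancel the finite terms $H(X_1)$ and $H(Y_2)$. You have merely written out explicitly the bookkeeping the paper leaves implicit.
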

\begin{proof}
    The proof is immediate from Proposition \ref{prop:entropy-sum-diff} along with the observation that the assumptions imply $H(X_1,Y_1)=H(X_1) + H(Y_1)$, $H(X_2,Y_2)=H(X_2) + H(Y_2)$, and using the sub-additivity of entropy applied to
    $H(X_1,Y_2,X_2-Y_3,X_3-Y_1)$.
\end{proof}
\begin{remark}
    \label{rem:couplingexist}
    Suppose $X$ and $Y$ are independent random variables having finite support on $\gf$, and random variables $X_3,Y_3$ also have finite support on $\gf$, then observe that we can always construct a coupling $(X_1, Y_1, X_2, Y_2, X_3, Y_3)$ satisfying the assumptions of Corollary \ref{cor:sum-diff}, so that $(X_1,Y_1)$ and $(X_2,Y_2)$ are distributed as $(X,Y)$.
\end{remark}

\begin{corollary}[Generalized Ruzsa sum-difference inequality]
    \label{cor:gen-ruz-sum-diff}
    Let $A,B,C,D$ be finite subsets of an abelian group $(\gf,+)$. Then the following sumset inequality holds:
    $$|A||B| |C+D| \leq |A-B| |C-B| |A - D|,$$
    or equivalently
    $$ d_R(C,-D) \leq d_R(C,B) + d_R(B,A) + d_R(A,D). $$
\end{corollary}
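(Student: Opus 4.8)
The plan is to deduce both (equivalent) forms of the inequality from the entropic sum-difference inequality in Corollary \ref{cor:sum-diff}, specialized to uniform input distributions. First I would set $X \sim \mathrm{Unif}(A)$ and $Y \sim \mathrm{Unif}(B)$, taken independent, so that $H(X) = \log|A|$ and $H(Y) = \log|B|$. Next I need random variables $(X_3, Y_3)$ with $\mathrm{supp}(X_3) \subseteq C$, $\mathrm{supp}(Y_3) \subseteq D$, and $H(X_3 + Y_3) = \log|C+D|$; this is produced by a standard ``section'' construction: let $W$ be uniform on $C + D$, fix for each $w \in C+D$ a representation $w = c_w + d_w$ with $c_w \in C$ and $d_w \in D$, and set $X_3 = c_W$, $Y_3 = d_W$, so that $X_3 + Y_3 = W$ is uniform on $C+D$ while the marginals stay inside $C$ and $D$.

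Then, invoking Remark \ref{rem:couplingexist}, I would build a coupling $(X_1, Y_1, X_2, Y_2, X_3, Y_3)$ satisfying all the hypotheses of Corollary \ref{cor:sum-diff} — namely $X_1 - Y_1 = X_2 - Y_2$, the Markov chain $(X_1,Y_1) \to (X_1 - Y_1) \to (X_2,Y_2)$, independence of $(X_1,Y_1,X_2,Y_2)$ from $(X_3,Y_3)$, $X_1 \perp Y_1$ and $X_2 \perp Y_2$ — with $(X_1, Y_1)$ and $(X_2, Y_2)$ each distributed as $(X, Y)$, and $(X_3, Y_3)$ as just constructed. Applying Corollary \ref{cor:sum-diff} and using the trivial bound $H(V) \le \log|\mathrm{supp}(V)|$ on each term gives $H(X_2) = \log|A|$, $H(Y_1) = \log|B|$, $H(X_3+Y_3) = \log|C+D|$, while $H(X_1-Y_1) \le \log|A-B|$, $H(X_3-Y_1) \le \log|C-B|$, $H(X_2-Y_3) \le \log|A-D|$; hence $\log|A| + \log|B| + \log|C+D| \le \log|A-B| + \log|C-B| + \log|A-D|$, which is the claimed cardinality inequality $|A||B||C+D| \le |A-B||C-B||A-D|$.

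Finally I would record the equivalence with the Ruzsa-distance form: using $|A-B| = |B-A|$ and $|C+D| = |C-(-D)|$, dividing through by $|A|\,|B|\,|C|^{1/2}|D|^{1/2}$ turns the cardinality inequality into $\log\frac{|C-(-D)|}{|C|^{1/2}|D|^{1/2}} \le \log\frac{|C-B|}{|C|^{1/2}|B|^{1/2}} + \log\frac{|B-A|}{|B|^{1/2}|A|^{1/2}} + \log\frac{|A-D|}{|A|^{1/2}|D|^{1/2}}$, i.e. $d_R(C,-D) \le d_R(C,B) + d_R(B,A) + d_R(A,D)$. The only points needing a little care are the section construction for $(X_3,Y_3)$ and the observation that the coupling promised by Remark \ref{rem:couplingexist} can be taken to be compatible with this prescribed $(X_3,Y_3)$ — but that compatibility is exactly what the remark asserts, so there is no real obstacle; the substantive step is simply recognizing that Corollary \ref{cor:sum-diff}, evaluated at uniform inputs, is precisely the entropic shadow of this sumset inequality.
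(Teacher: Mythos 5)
Your proof is correct and follows essentially the same route as the paper: take $X,Y$ uniform on $A,B$, choose $(X_3,Y_3)$ supported on $C\times D$ with $X_3+Y_3$ uniform on $C+D$, invoke Remark \ref{rem:couplingexist} and Corollary \ref{cor:sum-diff}, and bound each entropy by the logarithm of its support size. Your explicit ``section'' construction of $(X_3,Y_3)$ and the spelled-out Ruzsa-distance reformulation are details the paper leaves implicit, but the argument is the same.
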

\begin{proof}
    Suppose $X$ is a uniform distribution on $A$ and $Y$ is a uniform distribution on $B$. Further let $X_3,Y_3$ be taking values on $C, D$ (respectively) such that $X_3 + Y_3$ is uniform on $C+D$. Let $(X_1,Y_2,X_2,Y_2,X_3,Y_3)$ be the coupling according to \Cref{rem:couplingexist} and observe that Corollary \ref{cor:sum-diff} implies that
    \begin{align*}
         & \log |A| + \log |B| + \log |C+D|                 \\
         & \quad  \leq H(U) + H(X_3-Y_1) + H(X_2-Y_3)       \\
         & \quad \leq \log |A-B| + \log |C-B| + \log |A-D|.
    \end{align*}

    Here, the second inequality used the fact that the entropy of a finite valued random variable is upper bounded by the logarithm of its support size.
\end{proof}
\begin{remark}
    Setting $C = A$ and $D = B$, we can see that the above is a generalization of Theorem \ref{thm:sum-diff}.
\end{remark}

\begin{corollary}
    For any distributions $P_U, P_V, P_X, P_Y$ with finite support on a finitely generated torsion-free group $\tfa$, we have
    \begin{align*}
         & H(X) + H(Y) + \max_{\Pi(U, V)} H(U + V) \leq                                             \\
         & \quad \max_{\Pi(X, Y)} H(X - Y) + \max_{\Pi(X, U)} H(X - U) + \max_{\Pi(V, Y)} H(V - Y).
    \end{align*}

\end{corollary}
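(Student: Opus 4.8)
The plan is to obtain this inequality directly from the sumset inequality of Corollary \ref{cor:gen-ruz-sum-diff} by invoking the Generalized Ruzsa-type Equivalence Theorem (Theorem \ref{thm:gen-rus-type-equiv}); essentially all the work is in casting the sumset inequality into the exact shape the theorem demands.

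First I would take $n = 4$ underlying set-variables, indexed by $\{1,2,3,4\}$, and identify the sets $A, B, C, D$ appearing in Corollary \ref{cor:gen-ruz-sum-diff} with coordinates $1,2,3,4$ respectively, and correspondingly identify the random variables $X, Y, V, U$ with those coordinates. With this dictionary, the sumset inequality $|A||B||C+D| \le |A-B||C-B||A-D|$ reads: on the left, the linear functions $f_1(t) = t_1$, $f_2(t) = t_2$, $f_3(t) = t_3 + t_4$, whose index sets of non-zero coefficients are $S_1 = \{1\}$, $S_2 = \{2\}$, $S_3 = \{3,4\}$; on the right, the linear functions $g_1(t) = t_1 - t_2$, $g_2(t) = t_3 - t_2$, $g_3(t) = t_1 - t_4$, with index sets $T_1 = \{1,2\}$, $T_2 = \{2,3\}$, $T_3 = \{1,4\}$. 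All coefficients are integers, and I take $\alpha_i = \beta_i = 1$ for every $i$.

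The one hypothesis of Theorem \ref{thm:gen-rus-type-equiv} that needs checking is that $\{S_1, S_2, S_3\}$ is a pairwise disjoint collection, which holds here since $A$, $B$, and the pair $\{C,D\}$ sit on disjoint coordinate blocks. Granting this, the equivalence of statements $a)$ and $c)$ in Theorem \ref{thm:gen-rus-type-equiv}, applied to the (already established) sumset inequality of Corollary \ref{cor:gen-ruz-sum-diff}, yields for every choice of marginals
\begin{align*}
    & \max_{\Pi(X_{S_1})} H(X_1) + \max_{\Pi(X_{S_2})} H(X_2) + \max_{\Pi(X_3,X_4)} H(X_3+X_4) \\
    & \quad \le \max_{\Pi(X_1,X_2)} H(X_1-X_2) + \max_{\Pi(X_2,X_3)} H(X_3-X_2) + \max_{\Pi(X_1,X_4)} H(X_1-X_4).
\end{align*}
Because $|S_1| = |S_2| = 1$, the first two maxima collapse to $H(X_1)$ and $H(X_2)$. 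Substituting the marginals $P_{X_1} = P_X$, $P_{X_2} = P_Y$, $P_{X_3} = P_V$, $P_{X_4} = P_U$ (and noting that a coupling of $(U,V)$ is the same object as a coupling of $(V,U)$) turns the displayed inequality into precisely the asserted one.

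As for difficulty: there is no real analytic obstacle, since everything hard is already packaged inside Corollary \ref{cor:gen-ruz-sum-diff} (itself proved using Lemma \ref{lem:copy-lemma}) and Theorem \ref{thm:gen-rus-type-equiv}. The single delicate point is the coordinate bookkeeping: one must lay out the four sets so that the three factors on the left-hand side of the sumset inequality live on pairwise-disjoint coordinate blocks --- here $\{1\}$, $\{2\}$, $\{3,4\}$ --- since it is exactly this disjointness that powers the implication $c) \implies a)$ (and hence the equivalence) in Theorem \ref{thm:gen-rus-type-equiv}, and which therefore must be respected when translating back to the entropic statement.
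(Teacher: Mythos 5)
Your proposal is correct and is essentially the paper's own proof: the paper likewise derives the statement by applying Theorem \ref{thm:gen-rus-type-equiv} to the sumset inequality $|A||B||C+D| \leq |A-B||C-B||A-D|$ of Corollary \ref{cor:gen-ruz-sum-diff}, merely omitting the explicit choice of linear functions and index sets that you spell out. Your bookkeeping (disjoint $S_i = \{1\},\{2\},\{3,4\}$ on the left, the collapse of the singleton maxima to $H(X)$ and $H(Y)$) is exactly the implicit content of the paper's two-line argument.
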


\begin{proof}
    From Corollary \ref{cor:sum-diff}, for any finite $A, B, C, D$ on a finitely generated torsion-free abelian group $\tfa$, we have
    \begin{align*}
        |A| |B| |C + D| \leq |A - B| |A - D| |C - B|.
    \end{align*}
    We will obtain the desired inequality by applying Theorem \ref{thm:gen-rus-type-equiv}.
\end{proof}

\begin{remark}
    Setting $U = Y$ and $V = X$ from the above result. We will obtain an entropic analog of sum-difference inequality
    \begin{align*}
         & d_{HR}(X, -Y) \leq 3 d_{HR}(X, Y),                                                                 \\
         & \mbox{or equivalently ~} H(X) + H(Y) + \max_{\Pi(X, Y)} H(X + Y) \leq 3 \max_{\Pi(X, Y)} H(X - Y).
    \end{align*}
\end{remark}

\begin{remark}
    \label{rem:norelation}
    There seems to be no direct implication between these two statements:
    \begin{itemize}
        \item Suppose $X$ and $Y$ are independent, we have $H(X) + H(Y) + H(X + Y) \leq 3 H(X - Y)$.
              This was the previously considered analogous form of the sum-difference inequality \eqref{eqn:sum-diff}, established in \cite{tao10}.
        \item For any $P_X, P_Y$, we have
              \begin{align*}
                  H(X) + H(Y) + \max_{\Pi(X, Y)} H(X + Y) \leq 3 \max_{\Pi(X, Y)} H(X - Y).
              \end{align*}
              This is the formally established equivalent form of the sum-difference inequality \eqref{eqn:sum-diff}.
    \end{itemize}
\end{remark}

\section{Entropic Formulation of Magnification Ratio}
\label{sec:magnification}
There are a large number of sumset inequalities that do not have entropic equivalences yet, such as Plünnecke–Ruzsa inequality (even though some entropic analogs have been established in \cite{tao10, kom14}). A combinatorial primitive that frequently occurs in the combinatorial proofs is the notion of a magnification ratio (see the lecture notes: \cite{ruz09b}).
In this section, we establish an entropic characterization of the magnification ratio, and in addition to this result potentially being useful in deriving new entropic equivalences, it may also be of independent interest to the combinatorics community.

Let $G \subseteq A \times B$ be a finite bipartite graph with no isolated vertices in $A$ or $B$. For every $S \subseteq A$, let $\mathcal{N}(S) \subseteq B$ denote the set of neighbors of $S$.

\begin{definition}
    The magnification ratio of $G$ from $A$ to $B$ is defined as
    \begin{align*}
        \mu_{A \rightarrow B}(G) = \min_{S \subseteq A, S \neq \emptyset} \frac{|\mathcal{N}(S)|}{|S|}.
    \end{align*}
\end{definition}

\begin{definition}{(Channel Consistent with a Bipartite Graph)}
    Let $\mathcal{W}$ be the set of all possible channels (or probability transition matrices) from $A$ to $B$. Given a bipartite graph $G \subseteq A \times B$, we define
    \begin{align*}
        \mathcal{W}(G) := \{W \in \mathcal{W} : W(Y = b | X = a) = 0 \text{ if $(a, b) \notin G$}\},
    \end{align*}
    to be the set of all channels consistent with the bipartite graph $G$. Note that $\mathcal{W}(G)$ is a closed and compact set.
\end{definition}
In the above, we think of $X$ (taking values in $A$) as the input and $Y$ (taking values in $B$) as the output of a channel $W_{Y|X}$.
Given an input distribution $P_X$, we define
\begin{align*}
    \lambda_{A \rightarrow B} (G; P_X) := \max_{W \in \mathcal{W}(G)} (H(Y) - H(X)).
\end{align*}
Given a fixed $P_X$, it is rather immediate that $H(Y)$ is concave in $W_{Y|X}$.
Let $W^*(G; P_X) \in \mathcal{W}(G)$ denote a corresponding optimizer, i.e.
\begin{align*}
    W^*(G; P_X) := \arg \max_{W \in \mathcal{W}(G)} (H(Y) - H(X)).
\end{align*}
If the optimizer is a convex set, we define it to be an arbitrary element of this set.

Finally, we define the quantity
\begin{align}
    \label{eq:ent-mag}
     & \lambda_{A \rightarrow B}(G) := \min_{P_X} \lambda_{A \rightarrow B} (G; P_X) \\
     & \quad = \min_{P_X} \max_{W \in \mathcal{W}(G)} (H(Y) - H(X)).
\end{align}

The main result of this section is the following result.
\begin{theorem}[Entropic characterization of the magnification ratio]
    \label{th:ent-magnification}
    \begin{align*}
         & \log \mu_{A \rightarrow B}(G)  = \lambda_{A \rightarrow B}(G), ~ \mbox{or equivalently}, \\
         & \log \mu_{A \rightarrow B}(G) = \min_{P_X} \max_{W \in \mathcal{W}(G)} (H(Y) - H(X)).
    \end{align*}
\end{theorem}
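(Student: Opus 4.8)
The plan is to prove the two inequalities $\log \mu_{A\rightarrow B}(G) \geq \lambda_{A\rightarrow B}(G)$ and $\log \mu_{A\rightarrow B}(G) \leq \lambda_{A\rightarrow B}(G)$ separately, with the ``easy'' direction coming from a direct construction and the ``hard'' direction coming from a minimax/duality argument combined with a combinatorial expansion lemma (in the spirit of Hall's theorem or a fractional relaxation thereof).

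For the direction $\log \mu_{A\rightarrow B}(G) \leq \lambda_{A\rightarrow B}(G)$, the idea is to exhibit, for every input distribution $P_X$, a channel $W \in \mathcal{W}(G)$ with $H(Y) - H(X) \geq \log \mu_{A\rightarrow B}(G)$. Equivalently, since we must beat $\lambda_{A\rightarrow B}(G) = \min_{P_X} \max_W (H(Y)-H(X))$, it suffices to find, for each $P_X$, \emph{some} consistent channel whose output is sufficiently spread out. A natural candidate is a channel that implements (a fractional version of) an expander-type map: by the defining property $|\mathcal{N}(S)| \geq \mu |S|$ for all $S$, a weighted analog of Hall's theorem guarantees a doubly-``balanced'' fractional assignment that routes mass from $A$ to $B$ so that the output distribution is essentially a $\mu$-fold dilution of the input on its support. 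Making this precise — e.g. via a max-flow/min-cut or LP-duality argument showing that $\mu \cdot P_X$ can be ``covered'' within $G$ — should give $H(Y) \geq H(X) + \log\mu$. The cleanest route is probably: take $P_X$ arbitrary, consider the bipartite graph $G$ with edge capacities, and use the expansion hypothesis to produce a channel under which each output symbol in a $\mu$-enlarged support receives a controlled amount of mass, then lower-bound $H(Y)$ by a grouping/convexity argument.

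For the reverse direction $\log \mu_{A\rightarrow B}(G) \geq \lambda_{A\rightarrow B}(G)$, the plan is to exhibit a single ``bad'' input distribution $P_X$ for which no consistent channel can inflate the entropy by more than $\log \mu$. Let $S^\star \subseteq A$ be a minimizing set, so $|\mathcal{N}(S^\star)|/|S^\star| = \mu$. Take $P_X$ to be uniform on $S^\star$. Then for any $W \in \mathcal{W}(G)$, the output $Y$ is supported on $\mathcal{N}(S^\star)$, hence $H(Y) \leq \log|\mathcal{N}(S^\star)| = \log|S^\star| + \log\mu = H(X) + \log\mu$, giving $\lambda_{A\rightarrow B}(G; P_X) \leq \log\mu$ and therefore $\lambda_{A\rightarrow B}(G) \leq \log\mu$. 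This direction is short once the right $P_X$ is identified.

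The main obstacle is the first (lower-bound-on-$\lambda$) direction: one must show the expansion hypothesis is strong enough to force $H(Y)-H(X) \geq \log\mu$ for \emph{every} $P_X$, not just uniform ones, and the channel achieving this depends on $P_X$. I expect the key technical device to be a deficiency/defect version of Hall's theorem (or equivalently an LP-duality argument): the condition $|\mathcal{N}(S)|\geq \mu|S|$ for all $S$ is precisely the feasibility condition for a fractional routing in which every unit of input mass is spread across its neighborhood so that no output symbol is overloaded relative to a target profile, and extracting the entropy bound from that routing will require a careful convexity (log-sum or grouping) estimate. A secondary subtlety is handling the $\min_{P_X}$: one should check the minimum is attained (the simplex is compact and $\lambda_{A\rightarrow B}(\cdot; P_X)$ is continuous, being a max of continuous functions over the compact set $\mathcal{W}(G)$), so that the argument above with $P_X$ uniform on $S^\star$ is legitimate and the two bounds meet.
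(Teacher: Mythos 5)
Your direction $\lambda_{A \rightarrow B}(G) \leq \log \mu_{A \rightarrow B}(G)$ (take $P_X$ uniform on a minimizing set $S^\star$, so that every consistent channel has output supported on $\mathcal{N}(S^\star)$) is correct and is exactly the paper's argument, as is your remark that the outer minimum is attained. The gap is in the other direction. You reduce it to the claim that for \emph{every} $P_X$ there is a channel $W \in \mathcal{W}(G)$ with $H(Y)-H(X) \geq \log\mu_{A\rightarrow B}(G)$, to be produced by a defect-Hall / max-flow routing in which the output is a ``$\mu$-fold dilution'' of the input. That mechanism works for uniform $P_X$ but does not extend as stated. A flow argument with a uniform output capacity $t$ is feasible precisely when $P_X(S) \leq t\,|\mathcal{N}(S)|$ for all $S$; the best $t$ that the expansion hypothesis $|\mathcal{N}(S)| \geq \mu |S|$ guarantees is $t = \max_a P_X(a)/\mu$, which only yields $H(Y) \geq \log\mu + \log\bigl(1/\max_a P_X(a)\bigr)$, i.e.\ $\log\mu$ plus the \emph{min}-entropy of $X$ --- short of $H(X)+\log\mu$ whenever $P_X$ is non-uniform. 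The pointwise strengthening that would close this gap, namely a channel with $P_Y(b) \leq P_X(a)/\mu$ on every edge carrying mass (which would give $H(Y)-H(X) = \sum_{a,b} P_{XY}(a,b)\log\bigl(P_X(a)/P_Y(b)\bigr) \geq \log\mu$), is infeasible in general. Example: $A=\{a_1,a_2\}$, $B=\{b_1,b_2,b_3\}$, edges $(a_1,b_1),(a_1,b_2),(a_2,b_2),(a_2,b_3)$, so $\mu = 3/2$; take $P_X=(3/4,1/4)$. The condition forces $P_Y(b_2),P_Y(b_3) \leq 1/6$, hence $a_2$ must send at least $1/12$ to $b_2$ and $a_1$ at most $1/12$ to $b_2$, so $P_Y(b_1) \geq 3/4 - 1/12 = 2/3 > 1/2 = P_X(a_1)/\mu$, a contradiction --- even though the desired conclusion does hold for this $P_X$ (the channel giving $P_Y=(3/8,3/8,1/4)$ has $H(Y)-H(X) \approx 0.75$ bits $\geq \log_2(3/2)$). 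So the ``careful convexity (log-sum or grouping) estimate'' you defer is not a technical afterthought; it is exactly the missing heart of the proof.

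The paper avoids proving the ``for every $P_X$'' statement directly. It works only at a minimizer $P_X^*$ of the outer problem and exploits the KKT conditions of the inner concave maximization: Lemma \ref{lem:max-properties} shows the optimal output has full support on $\mathcal{N}(\mathrm{supp}\,P_X^*)$ and is constant on equivalence classes of active edges, and Lemmas \ref{lem:ent-uniform} and \ref{lem:ent-improvement} show (by reweighting input classes and merging output classes, preserving both optimality of the channel and optimality of $P_X$) that some minimizer $P_X^*$ has an optimal channel whose output is \emph{uniform} on $\mathcal{N}(S^*)$, $S^* = \mathrm{supp}\,P_X^*$; then $\lambda_{A\rightarrow B}(G) = \log|\mathcal{N}(S^*)| - H(X) \geq \log\bigl(|\mathcal{N}(S^*)|/|S^*|\bigr) \geq \log\mu_{A\rightarrow B}(G)$. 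To salvage your route you would have to prove the stronger per-$P_X$ statement (true, but apparently no easier), and neither the defect form of Hall's theorem nor a single-threshold max-flow relaxation delivers it; some use of optimality structure, or a genuinely new layered-routing argument, is required.
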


\begin{proof}
    We first establish that $ \lambda_{A \rightarrow B}(G) \leq \log \mu_{A \rightarrow B}(G).$ This direction is rather immediate. Let
    \begin{align*}
        A^* := \argmin_{S \subseteq A, S \neq \emptyset} \frac{|\mathcal{N}(S)|}{|S|}.
    \end{align*}
    So we have $\mu_{A \rightarrow B}(G) = \frac{|\mathcal{N}(A^*)|}{|A^*|}$.
    Let $P_X$ be the uniform distribution on $A^*$. Then note that
    \begin{align*}
        \lambda_{A \rightarrow B}(G) & \leq \lambda_{A \rightarrow B}(G; P_X)                                      \\
                                     & = \max_{W \in \mathcal{W}(G)} (H(Y) - H(X))                                 \\
                                     & = \max_{W \in \mathcal{W}(G)} (H(Y) - \log |A^*|)                           \\
                                     & \leq \log |\mathcal{N}(A^*)| - \log |A^*| = \log  \mu_{A \rightarrow B}(G).
    \end{align*}
    This completes this direction.

    We next establish that $ \mu_{A \rightarrow B}(G) \leq \log \lambda_{A \rightarrow B}(G).$ This direction is comparatively rather involved whose main ingredient is the following lemma:
    \begin{lemma}
        \label{lem:ent-uniform}
        There exists a $P_X^*$, an optimizer of the outer minimization problem in
        $$ \min_{P_X} \max_{W \in \mathcal{W}(G)} (H(Y) - H(X)), $$
        such that the inner optimizer $W^*(G; P_X^*)$ induces a uniform output distribution on $\Nc(S^*)$.
    \end{lemma}

    Now, let $S^*$ be the support of $P_{X}^*$. If so, one would have
    \begin{align*}
         & \lambda_{A \rightarrow B}(G) = H(Y) - H(X) = \log |\Nc(S^*)| - H(X)  \geq \log \frac{|\Nc(S^*)|}{|S^*|} \geq \min_{S \subseteq A, S \neq \emptyset} \frac{|\mathcal{N}(S)|}{|S|} =  \mu_{A \rightarrow B}(G),
    \end{align*}
    and the proof is complete.
\end{proof}

We will now develop some preliminaries needed to establish \Cref{lem:ent-uniform}.

\begin{definition}
    \label{defn:active}
    Given an input distribution $P_X$ and a bipartite graph $G$, we define an edge $(a, b) \in G$  to be \textit{active} under $W^*(G; P_X)$ if $W^*(b | a) > 0$. Otherwise, it is said to be \textit{inactive}.
\end{definition}

\begin{lemma}
    \label{lem:max-properties}
    Let $S$ be the support of $P_X$.
    \begin{enumerate}
        \item Any maximizer $W^*(G; P_X)$ induces an output distribution, $P_Y$, such that the support of $P_Y$ is $\Nc(S)$.
        \item Let $a_1 \in S$ and $(a_1,b_1), (a_1,b_2)$ be edges in $G$.
              \begin{enumerate}
                  \item If the edges $(a_1, b_1)$ and $(a_1, b_2)$ are active under $W^*(G; P_X)$, then $P_Y(b_1) = P_Y(b_2)$.
                  \item If $(a_1, b_1)$ is active and $(a_1, b_2)$ is inactive under $W^*(G; P_X)$, then $P_Y(b_1) \geq P_Y(b_2)$.
              \end{enumerate}
    \end{enumerate}
\end{lemma}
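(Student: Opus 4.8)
\bigskip

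The plan is to exploit the concavity of $H(Y)$ in $W_{Y|X}$ together with the first-order optimality conditions of the inner maximization $\max_{W \in \mathcal{W}(G)} H(Y)$ (note $H(X)$ is a constant once $P_X$ is fixed, so maximizing $H(Y) - H(X)$ is the same as maximizing $H(Y)$). The key quantity is the partial derivative of $H(Y)$ with respect to $W(b \mid a)$: writing $P_Y(b) = \sum_{a'} P_X(a') W(b \mid a')$, one computes $\frac{\partial H(Y)}{\partial W(b \mid a)} = -P_X(a)\big(\log P_Y(b) + 1\big)$. Since for each fixed $a$ the row $\{W(b\mid a)\}_b$ is constrained to a simplex supported on $\mathcal{N}(\{a\})$, the KKT conditions say that at the optimum, for $a \in S$ (so $P_X(a)>0$), the ``score'' $-\log P_Y(b)$ is equal (to a common value $\nu_a$) on all $b$ with $W^*(b\mid a)>0$, and is $\le \nu_a$ on all other $b$ with $(a,b)\in G$. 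This is exactly the structure the lemma asserts.

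\bigskip

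Carrying this out: \emph{Part 2.} Fix $a_1 \in S$ and edges $(a_1,b_1),(a_1,b_2) \in G$. Consider the perturbation that shifts an infinitesimal mass $\epsilon$ along row $a_1$ from $b_2$ to $b_1$ (admissible for small $\epsilon>0$ provided $W^*(b_2\mid a_1)>0$, i.e. $(a_1,b_2)$ active; if $(a_1,b_2)$ is inactive we can only shift mass \emph{into} $b_2$, i.e. from $b_1$ to $b_2$). This perturbation keeps $W$ in $\mathcal{W}(G)$ and keeps $P_X$ fixed; its first-order effect on $H(Y)$ is $\epsilon\, P_X(a_1)\big(\log P_Y(b_2) - \log P_Y(b_1)\big)$. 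If both edges are active, both directions of the shift are admissible, so optimality forces $\log P_Y(b_1) = \log P_Y(b_2)$, giving (a). If $(a_1,b_1)$ is active and $(a_1,b_2)$ is inactive, only the shift from $b_1$ into $b_2$ is admissible (to first order), and its non-positivity at the optimum gives $\log P_Y(b_2) \le \log P_Y(b_1)$, i.e. $P_Y(b_1) \ge P_Y(b_2)$, giving (b). \emph{Part 1.} The inclusion $\mathrm{supp}(P_Y) \subseteq \mathcal{N}(S)$ is immediate since $P_Y(b) = \sum_{a \in S} P_X(a) W^*(b\mid a)$ and $W^*(b\mid a)=0$ whenever $(a,b)\notin G$. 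For the reverse inclusion, suppose some $b^\star \in \mathcal{N}(S)$ had $P_Y(b^\star)=0$; then $b^\star$ is adjacent to some $a^\star \in S$, every edge $(a^\star,b)$ with $P_Y(b)=0$ is inactive (else $P_Y(b)>0$), and since $H(Y)$ is maximized, part 2(b) applied with $b_1$ an active neighbor of $a^\star$ (which exists because the row sums to $1$) and $b_2=b^\star$ would require $P_Y(b_1)\ge P_Y(b^\star)=0$ — no contradiction yet, so instead argue directly: shifting mass $\epsilon$ into $b^\star$ from an active neighbor $b_1$ of $a^\star$ changes $H(Y)$ to first order by $\epsilon P_X(a^\star)\big(\log P_Y(b^\star) - \log P_Y(b_1)\big)$, but $\log P_Y(b^\star) = -\infty$ makes this perturbation strictly \emph{increasing} for small $\epsilon$ (the entropy function has infinite slope at $0$), contradicting optimality. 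Hence $P_Y(b) > 0$ for every $b \in \mathcal{N}(S)$.

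\bigskip

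The main obstacle is handling the boundary behavior in Part 1 rigorously: the naive first-order argument breaks down precisely at $P_Y(b^\star)=0$ because $\nabla H(Y)$ blows up there, so one must instead use that $-x\log x$ has infinite right-derivative at $0$ to conclude that moving any positive mass onto a currently-unused output strictly increases $H(Y)$ — a finite-difference rather than derivative argument. A secondary technical point is the non-uniqueness of $W^*$: the statement should be read as holding for \emph{any} maximizer (as it is phrased), and each of the perturbation arguments above is applied to a fixed but arbitrary maximizer, so no selection issue arises. One should also note that $\mathcal{W}(G)$ is compact and $H(Y)$ continuous, so a maximizer exists, and that the feasible perturbations described always stay inside $\mathcal{W}(G)$ because they only move mass between coordinates $b$ with $(a,b) \in G$.
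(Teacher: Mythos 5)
Your overall route is the same as the paper's: part 1 is proved there by exactly the perturbation you describe (pushing a little mass onto an unused output and exploiting the infinite slope of $-x\log x$ at $0$), and part 2 is proved there through the KKT conditions of the concave inner maximization, which is precisely what your directional-derivative reasoning encodes. Your treatment of part 1 and of part 2(a) is correct.

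The gap is in part 2(b). When $(a_1,b_1)$ is active and $(a_1,b_2)$ is inactive, the only admissible perturbation is the one that moves mass \emph{from} $b_1$ \emph{into} $b_2$, and by your own derivative formula its first-order effect is $\epsilon\,P_X(a_1)\bigl(\log P_Y(b_1)-\log P_Y(b_2)\bigr)$; requiring this to be nonpositive at a maximizer yields $P_Y(b_1)\le P_Y(b_2)$, not $P_Y(b_1)\ge P_Y(b_2)$ as you conclude --- you have silently applied the expression for the inadmissible shift (from $b_2$ to $b_1$) and flipped the inequality. This is not a repairable slip inside your argument: moving mass from a heavier output to a lighter adjacent output always increases $H(Y)$, so at an optimum the active outputs of a given input are the \emph{lightest} among its neighbours. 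Concretely, take $A=\{a_1,a_2\}$, $B=\{b_1,b_2\}$, edges $(a_1,b_1),(a_1,b_2),(a_2,b_2)$, and $P_X=(1/4,3/4)$: the unique maximizer routes $a_1$ entirely to $b_1$, so $(a_1,b_1)$ is active and $(a_1,b_2)$ inactive, yet $P_Y(b_1)=1/4<3/4=P_Y(b_2)$. So your perturbation calculus, carried out consistently, proves the lemma with the inequality in 2(b) reversed; the printed direction is obtained in the paper via a Lagrangian written with the sign convention of a minimization (the term $-\mu_{a,b}W(b\mid a)$ with $\mu_{a,b}\ge 0$), and your first-order argument cannot, and should not be forced to, reproduce that direction.
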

\begin{proof}
    The proof of part $1)$ proceeds by contradiction. Assume that there exists $b_1 \in \Nc(S)$ such that $P_Y(b_1) = 0$. This implies that these exists  $a_1 \in S$, such that $(a_1, b_1) \in G$ and $W_{Y | X}^*(b_1 | a_1) = 0$ as $P_Y(b_1) = 0$. Further since $P_X(a_1) > 0$, there exists $b_2 \in \Nc(S)$ with $(a_1, b_2) \in G$ and $W_{Y | X}^*(b_2 | a_1) > 0$. For $\alpha \geq 0$ and sufficiently small, define $W_\alpha$ as follows:
    \begin{align*}
        W_{Y | X, \alpha}(b | a) =
        \begin{cases}
            W_{Y | X}^*(b | a) + \alpha = \alpha, & (a, b) = (a_1, b_1) \\
            W_{Y | X}^*(b | a) - \alpha,          & (a, b) = (a_1, b_2) \\
            W_{Y | X}^*(b | a),                   & \text{otherwise}
        \end{cases}.
    \end{align*}
    Define $f(\alpha) := H(Y_\alpha) - H(X)$, where $P_{Y_\alpha}$ is the output distribution of $P_X$ under $W_\alpha$. Note that
    \begin{align*}
        f'(\alpha) = P_X(a_1) \log \left(\frac{P_Y(b_2) - \alpha P_X(a_1)}{\alpha P_X(a_1)}\right).
    \end{align*}
    By assumption, $W_0 = W^*$ is a maximizer of $f(\alpha)$. However, $f'(\alpha) \rightarrow +\infty$ as $\alpha \rightarrow 0^+$, yielding the requisite contradiction.

    We now establish part $2)$.
    Note that $H(Y)$ is concave in $\mathcal{W}(G)$ and all constraints in $\mathcal{W}(G)$ is linear under $\mathcal{W}$. Therefore, Karush–Kuhn–Tucker(KKT) conditions are the necessary and sufficient conditions for optimality for $W_{Y|X}$. We rewrite the optimization problem as follows,
    \begin{align*}
        \begin{array}{cl}
            \min\limits_{W \in \mathcal{W}(G)} & (H(Y) - H(X))                         \\
            \text{subject to}                  & W(b | a) \geq 0, a\in S, (a, b) \in G \\
                                               & \sum_b W(b | a) = 1, a \in S
        \end{array}.
    \end{align*}
    Define the Lagrangian  as follows,
    \begin{align*}
        \mathcal{L}(W) := H(Y) - \mu_{a, b} W(a | b) + \sum_a \lambda_a \left(\sum_b W(b | a) - 1\right).
    \end{align*}
    The KKT conditions for optimality implies that for $W \in \mathcal{W}$, $a \in S$, and $(a,b) \in G$, we have
    \begin{align*}
        \frac{\partial \mathcal{L}}{\partial W(b | a)} = - P_X(a) (\log P_Y(b) + 1) - \mu_{a, b} + \lambda_a & = 0,    \\
        \mu_{a, b} W(b | a)                                                                                  & = 0,    \\
        \mu_{a, b}                                                                                           & \geq 0.
    \end{align*}
    By solving the above conditions, we have
    \begin{align*}
        P_Y(b) & = \exp\left(-\frac{(\tilde{\lambda}_a + \mu_{a, b})}{P_X(a)}\right),
    \end{align*}
    where $\tilde{\lambda}_a =  P_X(a) - \lambda_a$.
    \begin{enumerate}[$a)$]
        \item Suppose $(a_1, b_1)$ and $(a_1, b_2)$ are active. This implies that $\mu_{a_1, b_1} = \mu_{a_1, b_2} = 0$, and forces $P_Y(b_1) = P_Y(b_2)$.
        \item Suppose $(a_1, b_1)$ is active and $(a_1, b_2)$ is inactive. We have $\mu_{a_1, b_1} = 0$ and $\mu_{a_1, b_2} \geq 0$, this implies $P_Y(b_1) \geq P_Y(b_2)$.
    \end{enumerate}
    This establishes part $2)$ of the lemma.
\end{proof}

Based on $P_X$ (with support $S$) and the properties of the maximizer $W^*(G; P_X)$, we induce equivalence relationships between elements in $\Nc(S)$ and between elements in $S$. Let $P_Y$ be the distribution on $\Nc(S)$ induced by $P_X$ and $W^*(G; P_X)$. For $b_1, b_2 \in \Nc(S)$, we say that $b_1 \sim b_2$ if $P_Y(b_1) = P_Y(b_2)$. We use the above to induce an equivalence relationship on $S$ as follows: For $a_1,a_2\in S$, we say that $a_1 \sim a_2$ if there exists $b_1,b_2 \in \Nc(S)$ such that the edges $(a_1,b_1)$ and $(a_2,b_2)$ are active (see \Cref{defn:active}) and $b_1 \sim b_2$.

\begin{remark}
    \label{rem:equivalence-class}
    The main observation is that the active edges in $W^*(G; P_X)$ partition the graph into disconnected components and further there is a one-to-one correspondence between the equivalences classes in $\Nc(S)$ and the equivalence classes in $S$. To see this: consider an equivalence class $T \subset  \Nc(S)$ and let $\hat{S} = \{a \in S: (a,b) ~ \mbox{is active for some} ~ b \in T\}.$ From Lemma \ref{lem:max-properties}, we see that all elements in $\hat{S}$ are equivalent to each other and there is no active edge $(a,b)$ where $a \in \hat{S}$ and $b \notin T$. Further if $a_1 \in S \setminus \hat{S}$, then observe that $a_1$ is not equivalent to any element in $\hat{S}$.
\end{remark}

Let $T_1, \dots ,T_k$ be the partition of $\Nc(S)$ into equivalence classes and let $S_1,\dots ,S_k$ be the corresponding partition of $S$ into equivalence classes.
We can define a total order on the equivalence classes of $\Nc(S)$ as follows: we say $T_{i_1} \geq T_{i_2}$ if $P_Y(b_{i_1}) \geq P_Y(b_{i_2})$. This also induces a total order on the equivalence classes on $S$. Further, without loss of generality, let us assume that $T_1,\dots ,T_k$ (and correspondingly $S_1,\dots ,S_k$) be monotonically decreasing according to the order defined above.

\subsection{Proof of \Cref{lem:ent-uniform}}
\begin{proof}
    Let $P_X^*$ be an optimizer of the outer minimization problem in \eqref{eq:ent-mag} and let $S^*$ be its support. Further, let $S_1, \dots, S_k$ be the equivalence classes (that form a partition of $S$) induced by $W^*(G; P_X^*)$. If $k=1$, i.e. there is only one equivalence class, then Lemma \ref{lem:max-properties} implies that $P_X^*$ and $W^*(G; P_X^*)$ induces a uniform output distribution on $\Nc(S^*)$. Therefore, our goal is to show the existence of an optimizer $P_X^*$ that induces exactly one equivalence class.

    Let $S_1$ and $S_2$ be the largest and second largest elements under the total ordering mentioned previously. Let $m_\ell = |S_\ell|$, $n_\ell = |T_\ell|$, and for $1 \leq i \leq k$, let $s_{i,j}, 1 \leq j \leq m_i$ be an enumeration of the elements of $S_i$ and $t_{i,j}, 1 \leq j \leq n_i$ be an enumeration of the elements of $T_i$. Further let $p_{i,j} = P_{X}^*(s_{i,j})$ and $p_i = \sum_{j=1}^{m_i} p_{i,j}$. Since the induced output probabilities on the elements of $T_i$ are uniform (by the definition of equivalence class), observe that $q_{i,j} := P_Y^*(t_{i,j}) = \frac{p_i}{n_i}$ for all $1 \leq j \leq n_i$.

    By the grouping property of entropy, we have
    \begin{align*}
        H(X) & =H(p_{1,1},..p_{1,m_1},p_{2,1},..,p_{2,m_2},p_{3,1}...,p_{k,m_k})                                                                           \\
             & = p_1 H\left(\frac{p_{1,1}}{p_1}, \dots, \frac{p_{1,m_1}}{p_1}\right) + p_2 H\left(\frac{p_{2,1}}{p_2}, \dots, \frac{p_{2,m_2}}{p_2}\right) \\
             & \quad + (p_1 + p_2) H\left(\frac{p_1}{p_1 + p_2}, \frac{p_2}{p_1 + p_2}\right)                                                              \\
             & \quad + H(p_1 + p_2, p_{3,1}, \dots ,p_{k,m_k}).
    \end{align*}
    Similarly,
    \begin{align*}
        H(Y) & = p_1 H\left(\frac{1}{n_1}, \dots, \frac{1}{n_1}\right) + p_2 H\left(\frac{1}{n_2}, \dots, \frac{1}{n_2}\right) \\
             & \quad + (p_1 + p_2) H\left(\frac{p_1}{p_1 + p_2}, \frac{p_2}{p_1 + p_2}\right)                                  \\
             & \quad + H(p_1 + p_2, q_{3, 1}, \dots, q_{k, n_k}).
    \end{align*}
    Define a parameterized family of input distributions $\tilde{P}_{X(\alpha)}$ as follows:
    \begin{align*}
        \tilde{P}_{X(\alpha)}(s_{i,j}) =
        \begin{cases}
            \left(1 - \frac{\alpha}{p_1}\right) p_{i,j},  & i=1               \\
            \left(1 + \frac{\alpha}{p_2}\right)  p_{i,j}, & i=2               \\
            p_{i,j},                                      & \text{otherwise}.
        \end{cases}
    \end{align*}
    By Lemma \ref{lem:ent-improvement} we know that for $\alpha \in [\alpha_{\min},\alpha_{\max}]$, where
    $$ \alpha_{\max} := \frac{p_1n_2-p_2n_1}{n_1 + n_2} \geq 0 \geq n_2\left(\frac{p_3}{n_3} - \frac{p_{2}}{n_2}\right) =: \alpha_{\min}, $$
    $W^*(G; P_X^*)$ remain the optimal channel. Observe that the induced output distribution is
    \begin{align*}
        \tilde{P}_{Y(\alpha)}(t_{i,j}) =
        \begin{cases}
            \left(1 - \frac{\alpha}{p_1}\right) q_{i,j} = \frac{p_i}{n_i} - \frac{\alpha}{n_i}, & i=1               \\
            \left(1 + \frac{\alpha}{p_2}\right) q_{i,j} = \frac{p_i}{n_i} + \frac{\alpha}{n_i}, & i=2               \\
            q_{i,j},                                                                            & \text{otherwise}.
        \end{cases}
    \end{align*}
    This implies $\lambda_{A \rightarrow B}(G; \tilde{P}_{X(\alpha)}) = H(\tilde{Y}(\alpha)) - H(\tilde{X}(\alpha))$.
    Note that
    \begin{align*}
         & \lambda_{A \rightarrow B}(G; \tilde{P}_{X(\alpha)}) := H(\tilde{Y}(\alpha)) - H(\tilde{X}(\alpha))                                                   \\
         & = (p_1 - \alpha) \left(H\left(\frac{1}{n_1}, \dots, \frac{1}{n_1}\right) - H\left(\frac{p_{1, 1}}{p_1}, \dots, \frac{p_{1, m_1}}{p_1}\right) \right)
        + (p_2 + \alpha) \left(H\left(\frac{1}{n_2}, \dots, \frac{1}{n_2}\right) - H\left(\frac{p_{2, 1}}{p_2}, \dots, \frac{p_{2, m_2}}{p_2}\right)  \right)   \\
         & \quad + H(p_1 + p_2, q_{3, 1}, \dots, q_{k, n_k}) - H(p_1 + p_2, p_{3, 1}, \dots, p_{k, m_k})                                                        \\
         & =(p_1 - \alpha) f_1 + (p_2 + \alpha) f_2 + H(p_1 + p_2, q_{3, 1}, \dots, q_{k, n_k})  - H(p_1 + p_2, p_{3, 1}, \dots, p_{k, m_k}),
    \end{align*}
    where
    \begin{align*}
        f_1 & = H\left(\frac{1}{n_1}, \dots, \frac{1}{n_1}\right) - H\left(\frac{p_{1, 1}}{p_1}, \dots, \frac{p_{1, m_1}}{p_1}\right), \\
        f_2 & = H\left(\frac{1}{n_2}, \dots, \frac{1}{n_2}\right) - H\left(\frac{p_{2, 1}}{p_2}, \dots, \frac{p_{2, m_2}}{p_2}\right).
    \end{align*}
    Thus, $\lambda_{A \rightarrow B}(G; \tilde{P}_{X(\alpha)})$ is linear in $\alpha$.
    At $\alpha=0$, note that $\tilde{P}_{X(\alpha)} = P_X^*$, and hence is a minimizer of $\lambda_{A \rightarrow B}(G; \tilde{P}_{X(\alpha)})$. Therefore, this necessitates that $f_1=f_2$, and for $\alpha \in [\alpha_{\min},\alpha_{max}]$ we have that $\lambda_{A \rightarrow B}(G; \tilde{P}_{X(\alpha)})$ is a constant. Consequently, both $\tilde{P}_{X(\alpha_{\min})}$ and $\tilde{P}_{X(\alpha_{\max})}$ are minimizers of the outer minimization problem.

    If we consider $\tilde{P}_{X(\alpha_{\max})}$ observe that we have $\tilde{P}_{Y(\alpha_{\max})}(t_{1, j}) = \tilde{P}_{Y(\alpha_{\max})}(t_{2, j})$. Therefore $t_{1, j} \sim t_{2, j}$ and this causes $T_1$ and $T_2$ to merge into a new equivalence class. Therefore, we have a minimizer of the outer minimization problem with $k-1$ equivalence classes. We can proceed by induction till we get a single equivalence class. Note that the output elements in an equivalent class have the same probability, and the support of the induced output distribution is the neighborhood of the support of $P_X^*$ (see Lemma \ref{lem:max-properties}). Therefore, establishing that $p_X^*$ induces a single equivalence class establishes Lemma \ref{lem:ent-uniform}.

    \textit{Alternately}, if we consider $\tilde{P}_{X(\alpha_{\min})}$ observe that we have $\tilde{P}_{Y(\alpha_{\max})}(t_{2, j}) = \tilde{P}_{Y(\alpha_{\max})}(t_{3, j})$. Therefore $t_{2, j} \sim t_{3, j}$ and this causes $T_2$ and $T_3$ to merge into a new equivalence class. Therefore, again, we have a minimizer of the outer minimization problem with $k-1$ equivalence classes. Proceeding as above, we can reduce to a single equivalence class and establish Lemma \ref{lem:ent-uniform}.
\end{proof}

\begin{remark}
    The argument above can be used to infer (with minimal modifications) that  any minimizer $P_X^*$  of the outer minimization problem must have $f_i = f_j$, where
    \begin{align*}
        f_i & = H\left(\frac{1}{n_i}, \dots, \frac{1}{n_i}\right) - H\left(\frac{p_{i, 1}}{p_i}, \dots, \frac{p_{i, m_i}}{p_i}\right), \\
        f_j & = H\left(\frac{1}{n_j}, \dots, \frac{1}{n_j}\right) - H\left(\frac{p_{j, 1}}{p_j}, \dots, \frac{p_{j, m_j}}{p_j}\right).
    \end{align*}
    Further $\lambda_{A \rightarrow B}(G; \tilde{P}_{X^*}) = \sum_{i=1}^k p_i f_i $. Since all $f_i$'s are identical, we have $\lambda_{A \rightarrow B}(G) = f_1$. Therefore, the restriction of $\tilde{P}_{X^*}$ to the first equivalence class is also a minimizer of the outer minimization problem, and observe that the induced output is uniform in $T_1$.
\end{remark}

\begin{lemma}[Reweighting input equivalence class probabilities preserves the optimality of the channel]
    \label{lem:ent-improvement}
    Let the partition $S_1 \geq S_2 \geq \dots \ge S_k$ (of $S$, the support of $P_X$) be the monotonically decreasing order of equivalence classes induced by $W^*(G; P_X)$. Define a parameterized family of input distributions $\tilde{P}_{X(\alpha)}$ as follows
    \begin{align*}
        \tilde{P}_{X(\alpha)}(s_{i,j}) =
        \begin{cases}
            \left(1 - \frac{\alpha}{p_1}\right) p_{i,j},  & i=1               \\
            \left(1 + \frac{\alpha}{p_2}\right)  p_{i,j}, & i=2               \\
            p_{i,j},                                      & \text{otherwise}.
        \end{cases}
    \end{align*}
    Then $W^*(G; P_X)$ continues to be an optimal channel under $\tilde{P}_{X(\alpha)}$ for $\alpha \in [\alpha_{\min},\alpha_{\max}]$, where
    $$\alpha_{\max} := \frac{p_1n_2-p_2n_1}{n_1 + n_2} \geq 0 \geq n_2\left(\frac{p_3}{n_3} - \frac{p_{2}}{n_2}\right) =: \alpha_{\min}. $$
\end{lemma}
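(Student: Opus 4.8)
The plan is to reduce everything to the KKT characterization of optimal channels already extracted in the proof of Lemma \ref{lem:max-properties}. Since $H(Y)$ is concave in $W$ for a fixed input and the feasible set $\mathcal{W}(G)$ is a polytope, the KKT conditions are \emph{necessary and sufficient} for $W$ to maximize $H(Y)-H(X)$; unwinding the stationarity and complementary-slackness relations derived there, they say exactly: for every $a$ in the support of the input, all active edges leaving $a$ lead to output symbols carrying one common probability, and that common value is the maximum of $P_Y(b)$ over all neighbors $b$ of $a$. So the whole proof is to check that $W^*:=W^*(G;P_X)$ still satisfies this combinatorial criterion when the input is the reweighted $\tilde{P}_{X(\alpha)}$.

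First I would record what optimality of $W^*$ for the \emph{original} input $P_X$ buys. By Lemma \ref{lem:max-properties} and the equivalence-class discussion (Remark \ref{rem:equivalence-class}), the active edges of $W^*$ out of $a\in S_i$ all land in $T_i$, whose symbols carry output probability $p_i/n_i$; and the criterion above forces \emph{every} neighbor $b$ of $a\in S_i$ to have $P_Y(b)\le p_i/n_i$, i.e.\ $b\in T_j$ for some $j\ge i$ (using that the classes are totally ordered with strictly decreasing values $p_1/n_1>\dots>p_k/n_k$). In particular only elements of $S_i$ feed, via active edges, into symbols of $T_i$, and each such element sends \emph{all} of its mass there (inactive edges carry none).

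Next, compute the output produced by $W^*$ from $\tilde{P}_{X(\alpha)}$. The reweighting multiplies the mass of each $a\in S_i$ by a constant $c_i$, with $(c_1,c_2,c_3,\dots)=(1-\alpha/p_1,\ 1+\alpha/p_2,\ 1,\dots)$; since the flow into $T_i$ comes only from $S_i$, the new output is again uniform on each $T_i$, with $\tilde{P}_{Y(\alpha)}(t_{i,j})=c_i p_i/n_i=:r_i$, i.e.\ $r_1=(p_1-\alpha)/n_1$, $r_2=(p_2+\alpha)/n_2$, and $r_i=p_i/n_i$ for $i\ge 3$. A one-line check confirms $\tilde{P}_{X(\alpha)}$ is a genuine distribution with support $S$ for $\alpha$ in the stated range: the weights sum to $1$ identically, and $\alpha_{\max}<p_1$, $\alpha_{\min}\ge -p_2$ keep them positive.

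Finally I would verify the criterion for $W^*$ under $\tilde{P}_{X(\alpha)}$: the active edges from $a\in S_i$ still all land in $T_i$, whose symbols now share the common value $r_i$, so the "equal active values" requirement holds automatically; and because every neighbor of $a\in S_i$ lies in some $T_j$ with $j\ge i$, it suffices that the sequence $r_1,r_2,\dots,r_k$ be non-increasing. Here $r_1\ge r_2$ is equivalent to $\alpha\le\alpha_{\max}$, $r_2\ge r_3$ is equivalent to $\alpha\ge\alpha_{\min}$, and $r_3\ge\dots\ge r_k$ is just the original ordering of the classes (untouched by the reweighting). Thus for $\alpha\in[\alpha_{\min},\alpha_{\max}]$ the criterion is met and $W^*$ remains optimal for $\tilde{P}_{X(\alpha)}$. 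The step I would be most careful about is the middle one: that optimality of $W^*$ for $P_X$ forces \emph{all} edges out of $S_i$ (not only the active ones) to land in classes $T_j$ with $j\ge i$ — this is precisely what collapses the verification to the two local comparisons $r_1\ge r_2$ and $r_2\ge r_3$, which are exactly what define $\alpha_{\max}$ and $\alpha_{\min}$. (For $k=2$ one reads $p_3/n_3=0$, so $\alpha_{\min}=-p_2$.)
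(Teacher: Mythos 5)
Your proof is correct and takes essentially the same route as the paper: both verify the (necessary and sufficient, by concavity and linear constraints) KKT conditions for the fixed channel $W^*(G;P_X)$ under the reweighted input, your criterion ``active outputs of $a$ share a common value that dominates $P_Y(b)$ for every neighbor $b$'' being exactly the paper's dual certificate $\mu_{a,b}(\alpha)\ge 0$, and in both arguments the admissible range $[\alpha_{\min},\alpha_{\max}]$ comes from preserving the ordering $r_1\ge r_2\ge r_3\ge\cdots$ of the per-class output values. The only difference is presentational: the paper writes out the multipliers $\lambda_a(\alpha),\mu_{a,b}(\alpha)$ explicitly, while you invoke the equivalent primal characterization (and spell out, via Lemma \ref{lem:max-properties}, why all edges out of $S_i$ land in classes $T_j$ with $j\ge i$).
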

\begin{proof}
    We recall the KKT conditions (from the proof of Lemma \ref{lem:max-properties}), which are necessary and sufficient for the inner optimization problem to verify the optimality of $W^*(G,P_X)$.
    The KKT conditions for optimality implies that for $a \in S$ and $(a,b) \in G$, we have
    \begin{align*}
        - P_X(a) (\log P_Y(b) + 1) - \mu_{a, b} + \lambda_a & = 0,    \\
        \mu_{a, b} W(b | a)                                 & = 0,    \\
        \mu_{a, b}                                          & \geq 0.
    \end{align*}
    For  $a \in S$ and $(a,b) \in G$, let $\la_a, \mu_{a,b}$ denote the dual parameters that certify the optimality of
    $W^*(G,P_X)$ for $P_X^*$. Now define
    \begin{align*}
        \la_a(\alpha) =
        \begin{cases}
            \left(1 - \frac{\alpha}{p_1}\right) \left(\lambda_a + P_X^*(a) \log\left(1 - \frac{\alpha}{p_1}\right)\right) & a \in S_1         \\
            \left(1 + \frac{\alpha}{p_2}\right) \left(\lambda_a + P_X^*(a) \log\left(1 + \frac{\alpha}{p_2}\right)\right) & a \in S_2         \\
            \lambda_a ,                                                                                                   & \text{otherwise}.
        \end{cases}
    \end{align*}
    Using the channel $W^*(G; P_X)$, the induced output distribution of $\tilde{P}_{X(\alpha)}$, is given by
    \begin{align*}
        \tilde{P}_{Y(\alpha)}(t_{i,j}) =
        \begin{cases}
            \left(1 - \frac{\alpha}{p_1}\right) q_{i,j} = \frac{p_i}{n_i} - \frac{\alpha}{n_i},  & i=1               \\
            \left(1 + \frac{\alpha}{p_2}\right)  q_{i,j} = \frac{p_i}{n_i} + \frac{\alpha}{n_i}, & i=2               \\
            q_{i,j} = \frac{p_i}{n_i},                                                           & \text{otherwise}.
        \end{cases}
    \end{align*}
    Observe that if $(a,b_a)$ is an active edge under $W^*(G; P_X)$, then note that $P_{Y(\alpha)}(b_a)$ only depends on $a$, or rather only on the equivalence class that $a$ (or equivalently $b_a$) belongs to.
    Define
    $$ \mu_{a,b}(\alpha) = P_{X(\alpha)}(a) (\log P_{Y(\alpha)}(b_a) - \log P_{Y(\alpha)}(b)).$$
    Note that $\mu_{a,b}(\alpha) \geq 0$ as long as
    $$1 \geq \frac{p_1}{n_1} - \frac{\alpha}{n_1} \geq \frac{p_2}{n_2} + \frac{\alpha}{n_2}  \geq \frac{p_3}{n_3},$$
    or the ordering of equivalence classes remains unchanged.
    (Note that if $k=2$, i.e. there are only two partitions, then we set $p_3=0$.)
    This is equivalent to $\alpha \geq \max\{n_2\left(\frac{p_3}{n_3} - \frac{p_{2}}{n_2}\right), p_1 - n_1 \}$ and $\alpha \leq \frac{p_1n_2-p_2n_1}{n_1 + n_2}$. Since $n_1 \geq 1$, and by our ordering of equivalence classes, we have $\frac{p_1}{n_1} \geq \frac{p_2}{n_2} \geq \frac{p_3}{n_3}$; a moments reflection implies the following:
    $$ \frac{p_1n_2-p_2n_1}{n_1 + n_2} \geq 0 \geq n_2\left(\frac{p_3}{n_3} - \frac{p_{2}}{n_2}\right) \geq p_1 - n_1. $$
    Therefore $\alpha\in [\alpha_{\min},\alpha_{\max}]$ preserves the ordering of equivalence classes. A simple substitution shows that the dual varaiables $\la_a(\alpha)$ and $\mu_{a,b}(\alpha)$ defined above serve as witnesses for the optimality of $W^*(G; P_X)$ for $P_{X(\alpha)}$. This completes the proof of the lemma.
\end{proof}
\begin{remark}
    The idea of the above proof is the following. The reweighting of the input classes preserves the uniformity of the output probabilities within each equivalent class and the ordering between the output probabilities between equivalent classes. This happens to be the KKT conditions for the maximality of the channel. The limits are achieved with the output probability in an equivalence class equals the value in its adjacent class. At this point, there are potentially multiple optimizers for the inner problem, and the active and inactive edges could be rearranged as you change $\alpha$ further.
\end{remark}

\section{Conclusion}
In this paper, we develop some information-theoretic tools for proving information inequalities by borrowing from similar combinatorial tools developed in additive combinatorics. In reverse, the tools can also be used to generalize some results (or ambient group structure) in additive combinatorics.

\section*{Bibliographic Remarks}
A short version of this paper was submitted (and subsequently published \cite{lan23}) to the IEEE International Symposium on Information Theory in January 2023, with a fuller version (for review of the arguments) made available on authors' websites. The authors subsequently became aware of significant work done in the intersection of information theory and additive combinatorics in the following arXiv uploads \cite{gmt23,ggmt23}. While the results in this article do not solve any open problems of interest, it is hoped that the interplay between these two areas can explored further for the mutual benefit of these two communities. The authors would like to thank the constructive discussions with Lampros Gavalakis and Ioannis Kontoyannis and the anonymous reviewers of our ISIT submission for various comments. The authors would like to thank Profs. Amin Gohari and Cheuk-Ting Li for various helpful discussions. The authors would also like to thank Prof. Ben Green who brought to our attention the manuscript \cite{tav06ds}, which contained an entropic proof of Theorem \ref{th:katz-tao-ent}. (At the time of the original version of this article, the authors were not aware of \cite{tav06ds}, and hence claimed credit for developing an entropic proof of Theorem \ref{th:katz-tao-ent} from its combinatorial counterpart.)

\bibliographystyle{hieeetr}
\bibliography{mybiblio}

\begin{appendices}
\section{Proof of Theorem \ref{th:katz-tao-ent}}
\begin{proof}
    The arguments here are directly motivated by those for establishing the sumset inequality in \cite{kat99} and are essentially identical to the one employed in \cite{tav06ds}. We still present it here to highlight the role played by Lemma \ref{lem:copy-lemma}.
    Consider a joint distribution $(X, Y, Y^\dagger)$ such that $Y \rightarrow X \rightarrow Y^\dagger$ forms a Markov chain and $(X, Y)$ shares the same marginal as $(X, Y^\dagger)$. From Lemma \ref{lem:copy-lemma} (considering $(X,Y) - X - (X,Y^\dagger)$) we have
    \begin{equation}
        H(X, Y, Y^\dagger)  = H(X,Y) + H(X,Y^\dagger) - H(X) = 2 H(X, Y) - H(X).
        \label{eq:prelimlem1}
    \end{equation}
    Here, the last equality comes from the assumption that $(X, Y) \stackrel{(d)}{=} (X, Y^\dagger)$.

    Define three functions: $f_1(x, y, y^\dagger) = (x + y, x + y^\dagger), f_2(x, y, y^\dagger) = (y, y^\dagger), f_3(x, y, y^\dagger) = (x + y, y^\dagger)$.
    Consider a joint distribution of $(X_1, Y_1, Y_1^\dagger, X_2, Y_2, Y_2^\dagger, X_3, Y_3, Y_3^\dagger, X_4, Y_4, Y_4^\dagger)$ such that the following three conditions are satisfied:
    \begin{enumerate}
        \item $(X_i, Y_i, Y_i^\dagger)$ shares the same marginal as $(X, Y, Y^\dagger)$ for $1 \leq i \leq 4$.
        \item $f_i(X_i, Y_i, Y_i^\dagger) = f_{i}(X_{i + 1}, Y_{i + 1}, Y_{i + 1}^\dagger)$ for $1 \leq i \leq 3$.
        \item $(X_1, Y_1, Y_1^\dagger) \rightarrow f_1(X_1, Y_1, Y_1^\dagger) \rightarrow (X_2, Y_2, Y_2^\dagger) \rightarrow f_2(X_2, Y_2, Y_2^\dagger) \rightarrow (X_3, Y_3, Y_3^\dagger) \rightarrow f_3(X_3, Y_3, Y_3^\dagger) \rightarrow (X_4, Y_4, Y_4^\dagger)$ forms a Markov chain.
    \end{enumerate}
    Now by Lemma \ref{lem:copy-lemma}, we have
    \begin{align}
        \label{eqn:kat-first}
        H(X_1, Y_1, Y_1^\dagger, X_2, Y_2, Y_2^\dagger, X_3, Y_3, Y_3^\dagger, X_4, Y_4, Y_4^\dagger) = 4 H(X, Y, Y^\dagger) - H(X + Y, X + Y^\dagger) - H(Y, Y^\dagger) - H(X + Y, Y^\dagger).
    \end{align}
    From condition 2) above and the definition of $f_1,f_2,f_3$, we have the following equalities:
    \begin{align*}
        X_1 + Y_1 = X_2 + Y_2, \quad X_1 + Y_1^\dagger = X_2 + Y_2^\dagger, \quad Y_2 = Y_3, \quad Y_2^\dagger = Y_3^\dagger, \quad  X_3 + Y_3 = X_4 + Y_4, \quad Y_3^\dagger = Y_4^\dagger.
    \end{align*}
    From this, we obtain the following:
    \begin{align*}
        Y_1 - Y_1^\dagger = Y_2 - Y_2^\dagger = Y_3 - Y_3^\dagger.
    \end{align*}
    Consequently, we have
    \begin{align*}
        X_4 - Y_4^\dagger = (X_4 + Y_4) - Y_4 - Y_4^\dagger = (X_3 + Y_3) - Y_4^\dagger - Y_4 = X_3 + (Y_3 - Y_3^\dagger) - Y_4 = X_3 + Y_1 - Y_1^\dagger - Y_4.
    \end{align*}
    Therefore $X_4 - Y_4^\dagger$ is a function of $(X_1, Y_1, Y_1^\dagger, X_3, Y_4)$.
    Therefore,
    \begin{align*}
         & H(X_1, Y_1, Y_1^\dagger, X_2, Y_2, Y_2^\dagger, X_3, Y_3, Y_3^\dagger, X_4, Y_4, Y_4^\dagger | X_1, Y_1, Y_1^\dagger, X_3, Y_4)                                                           \\
         & \quad= H(X_1, Y_1, Y_1^\dagger, X_2, Y_2, Y_2^\dagger, X_3, Y_3, Y_3^\dagger, X_4, Y_4, Y_4^\dagger | X_1, Y_1, Y_1^\dagger, X_3, Y_4, X_4 - Y_4^\dagger)                                 \\
         & \quad = H(X_4 | X_1, Y_1, Y_1^\dagger, X_3, Y_4, X_4 - Y_4^\dagger)                                                                                                                       \\
         & \qquad+ H(X_1, Y_1, Y_1^\dagger, X_2, Y_2, Y_2^\dagger, X_3, Y_3, Y_3^\dagger, X_4, Y_4, Y_4^\dagger | X_1, Y_1, Y_1^\dagger, X_3, X_4, Y_4, Y_4^\dagger)                                 \\
         & \quad \leq H(X_4 | X_4 - Y_4^\dagger) + H(X_1, Y_1, Y_1^\dagger, X_2, Y_2, Y_2^\dagger, X_3, Y_3, Y_3^\dagger, X_4, Y_4, Y_4^\dagger | X_1, Y_1, Y_1^\dagger, X_3, X_4, Y_4, Y_4^\dagger) \\
         & \quad  = H(X_4 | X_4 - Y_4^\dagger) + H( X_2, Y_2, Y_2^\dagger, Y_3, Y_3^\dagger | X_1, Y_1, Y_1^\dagger, X_3, X_4, Y_4, Y_4^\dagger).
    \end{align*}

    To complete the argument, observe that $Y_2 = Y_3 = X_4 + Y_4 - X_3$, $Y_2^\dagger = Y_3^\dagger = Y_4^\dagger$, and $X_2 = X_1 + Y_1 - Y_2 = X_1 + Y_1 + X_3 - X_4 - Y_4$. This implies that $(X_2, Y_2, Y_2^\dagger, Y_3, Y_3^\dagger)$ is a function of $(X_1, Y_1, Y_1^\dagger, X_3, X_4, Y_4, Y_4^\dagger)$. Therefore
    \[ H( X_2, Y_2, Y_2^\dagger, Y_3, Y_3^\dagger | X_1, Y_1, Y_1^\dagger, X_3, X_4, Y_4, Y_4^\dagger) = 0, \]
    implying that
    \begin{align*}
         & H(X_1, Y_1, Y_1^\dagger, X_2, Y_2, Y_2^\dagger, X_3, Y_3, Y_3^\dagger, X_4, Y_4, Y_4^\dagger | X_1, Y_1, Y_1^\dagger, X_3, Y_4) \leq H(X_4 | X_4 - Y_4^\dagger).
    \end{align*}
    Thus, we have
    \begin{align}
        \label{eqn:kat-second}
        H(X_1, Y_1, Y_1^\dagger, X_2, Y_2, Y_2^\dagger, X_3, Y_3, Y_3^\dagger, X_4, Y_4, Y_4^\dagger) \leq H(X_1, Y_1, Y_1^\dagger, X_3, Y_4) + H(X_4 | X_4 - Y_4^\dagger).
    \end{align}

    By using (\ref{eqn:kat-first}) and (\ref{eqn:kat-second}), we have
    \begin{align*}
        0 & \geq 4 H(X, Y, Y^\dagger) - H(X + Y, X + Y^\dagger) - H(Y, Y^\dagger) - H(X + Y, Y^\dagger) - H(X_1, Y_1, Y_1^\dagger, X_3, Y_4) - H(X_4 | X_4 - Y_4^\dagger) \\
          & = 3 H(X, Y, Y^\dagger) - H(X + Y, X + Y^\dagger) - H(Y, Y^\dagger) - H(X + Y, Y^\dagger) - H(X_3, Y_4 | X_1, Y_1, Y_1^\dagger) - H(X_4 | X_4 - Y_4^\dagger).
    \end{align*}
    Now using \eqref{eq:prelimlem1} to replace $H(X,Y,Y^\dagger)$ we have
    \begin{align}
        0 & \geq 6 H(X, Y) - 3 H(X) - H(X + Y, X + Y^\dagger) - H(Y, Y^\dagger) - H(X + Y, Y^\dagger)  - H(X_3, Y_4 | X_1, Y_1, Y_1^\dagger) - H(X_4 | X_4 - Y_4^\dagger)\nonumber \\
          & \geq 6 H(X, Y) - 3 H(X) - 3 H(Y) - 3 H(X + Y) - H(X_3) - H(Y_4) - H(X_4, Y_4^\dagger) + H(X_4 - Y_4^\dagger)\nonumber                                                  \\
          & = 5 H(X, Y) - 4 H(X) - 4 H(Y) - 3 H(X + Y) + H(X - Y)\label{eq:equivform}                                                                                              \\
          & = \frac{1}{2} I(X; X - Y) + \frac{1}{2} I(Y; X - Y) - \frac{3}{2} I(X; X + Y) - \frac{3}{2} I(Y; X + Y) - 3 I(X; Y). \nonumber
    \end{align}
    This completes the proof of the theorem.
\end{proof}

\def\a{\alpha}
\def\b{\beta}
\def\g{\gamma}
\def\d{\delta}
\def\e{\epsilon}
\def\eps{\epsilon}
\def\la{\lambda}
\def\td{\tilde}
\def\o{\omega}
\def\s{\sigma}
\def\ce{\mathfrak{C}}

\def\ty{\mathsf{T}}

\section{Discrete Sanov Theorem}
\label{sec:discrete-sanov}
For the sake of completeness, we will give a detailed proof of discrete Sanov theorem. Let $\Sigma$ be a finite space. For the purposes of this article, let $\Sigma=\{1,2,..,M\}$, or equivalently $\Sigma=[1:M]$. Denote $\Mc(\Sigma)$ as the set of probability mass functions on $\Sigma$. For a given probability mass function $\mu$, let us denote $\Sigma_\mu=\{i:\mu(i)>0\}$ to be the support of $\mu$. Thus $\Sigma_\mu \subseteq \Sigma$.

\subsection{Types}
Given a sequence $y^n \in \Sigma^n$, we define the \textbf{type} of $y^n$, $\ty_{y^n} \in \Mc(\Sigma)$, as the probability mass function given by
\[ \ty_{y^n}(i) := \frac{1}{n} \sum_{k=1}^n 1_{\{y_k=i\}}, ~~ 1\leq i \leq M.\]
It is, equivalently, the empirical measure induced by the sequence $y^n$. Let $\ty_n \subset \Mc(\Sigma)$ denote the collection of all types, i.e.
\[ \ty_{n}:=\{\mu: \mu = \ty_{y^n} ~\mbox{for some}~ y^n \in \Sigma^n \}.\]

\begin{lemma}
\label{lem:type-el}
The following statements hold:
\begin{enumerate}[$(i)$]
    \item $|\ty_n|= \binom{n+M-1}{M-1} \leq (n+1)^{M-1}.$ 
    \item For any $\mu \in \Mc(\Sigma)$, there exists $\nu \in \ty_n$ such that $|\mu(i) - \nu(i)|\leq \frac{1}{n}$. Consequently, $d_{TV}(\mu,\ty_n)  \leq \frac{M}{2n},$ where $d_{TV}(\mu,\ty_n) := \min_{\nu \in \ty_n} d_{TV}(\mu,\nu)$ and $d_{TV}(\mu,\nu) = \frac{1}{2} \sum_{i=1}^M |\mu(i)-\nu(i)|$. Further $\Sigma_\nu \subseteq \Sigma_\mu$.
\end{enumerate}
\end{lemma}
\begin{proof}
Note that every $\mu \in \ty_n$  is in one-to-one correspondence with non-negative integer sequences $\{a_1,..,a_M\}$ such that $\sum_{i=1}^M a_i=n$. The count of the latter is a problem in elementary combinatorics, and the count is essentially a bijection to choosing the identities of $M-1$ dividers from $n+M-1$ locations. Note that $\binom{n+M-1}{M-1} \leq (n+1)^{M-1}$ is an equality for $M=1$ and for $M>1$, we have \[ \binom{n+M-1}{M-1} = \prod_{k=1}^{M-1} \frac{n+k}{k} \leq \prod_{k=1}^{M-1} (n+1) = (n+1)^{M-1}, \]
as $\frac{n+k}{k} \leq n+1, \forall k \geq 1$.

Given a $\mu \in \Mc(\Sigma)$, let us define two non-negative integer sequences according to
\[ k_l(i) = \lfloor n \mu(i) \rfloor, \quad k_u(i) = \lceil n \mu(i) \rceil, \quad 1 \leq i \leq M.\]
The following estimates are clear:
\begin{align*}
n \mu(i)  - 1 \leq k_l(i) \leq n \mu(i) \leq k_u(i) \leq n \mu(i)+1.
\end{align*}
Summing up over $i$, we obtain 
\[n-M \leq \sum_{i=1}^M k_l(i) \leq n \leq \sum_{i=1}^M k_u(i) \leq n+M.\]

Therefore, we can find a sequence of non-negative integers, $k_{int}(i)$ such that $k_l(i) \leq k_{int}(i) \leq k_u(i)$ such that $\sum_{i=1}^M k_{int}(i)=n.$ This is essentially like a discrete intermediate-value-theorem, which a greedy algorithm (starting from $k_l$ and increasing value at each coordinate by one while obeying the bounds) can easily establish. Now define $\nu(i) = \frac{k_{int}(i)}{n}$. Note that $\nu\in \ty_n$. We know that
\[n \mu(i)  - 1\leq  k_l(i) \leq k_{int}(i) \leq k_u(i)\leq n \mu(i)+1.\]
Therefore $|\nu(i)-\mu(i)|\leq \frac{1}{n}$ and $d_{TV}(\mu,\nu) \leq \frac{M}{2n}.$ If $\mu(i)=0$, then observe that $k_u(i)=0$ implying $k_{int}(i)=0$ and hence $\nu(i)=0$. This establishes the relationship between the supports.
\end{proof}

For $\nu \in \ty_n$, we define the $\textbf{type-class}$ by
$\Yc_n(\nu) = \{y^n \in \Sigma^n: \ty_{y^n}=\nu\}.$ Note that $\Yc_n(\nu)$ is the collection of permutations of a generic string $y^n$ whose empirical measure is $\nu$, and the cardinality of $\Yc_n(\nu)$ is the multinomial co-efficient associated with the empirical counts, i.e. $|\Yc_n(\nu) |=\binom{n}{n\nu(1), n\nu(2), \cdots, n\nu(M)}$.

Let $\prob_\mu$ be the probability law associated with an infinite sequence of i.i.d. random variables $Y_1,Y_2,..$, distributed according to $\mu \in \mathcal{M}(\Sigma)$.

In the following:
\begin{align*}
    H(\nu) &= \sum_{i=1}^N -\nu(i) \log_2 \nu(i),\\
    D(\nu\|\mu) &= \sum_{i=1}^N \nu(i) \log_2 \frac{\nu(i)}{\mu(i)},
\end{align*}
with the convention: $0 \log_2 0 = 0$ and if $\nu \centernot{\ll} \mu$, then $D(\nu\|\mu)=\infty$.

\begin{lemma}
\label{lem:typprob}
$$\prob_\mu[(Y_1,Y_2,..,Y_n)=y^n] = 2^{-n (H(\ty_{y^n}) + D(\ty_{y^n}\|\mu)) }. $$
\end{lemma}
\begin{proof}
    Note that
    \begin{align*}
    \prob_\mu[(Y_1,Y_2,..,Y_n)=y^n]&= \prod_{i=1}^M (\mu(i))^{n\ty_{y^n}(i)}\\
    & = 2^{-n\left(-\sum_{i=1}^M \ty_{y^n}(i) \log_2 \ty_{y^n}(i) + \sum_{i=1}^M \ty_{y^n}(i) \log_2 \frac{\ty_{y^n}(i)}{\mu(i)} \right)}\\
    & = 2^{-n (H(\ty_{y^n}) + D(\ty_{y^n}\|\mu)) }.
    \end{align*}
\end{proof}

\begin{lemma}
\label{lem:elcomb}
    Let $m,l \in \mathbb{N}$. Then $\frac{m!}{l!} \leq l^{m-l}.$
\end{lemma}
\begin{proof}
    If $m > l$, then $\frac{m!}{l!} = \prod_{k=l+1}^m k \geq l^{m-l}.$ If $m < l$, then $\frac{m!}{l!} = \prod_{k=m+1}^l \frac{1}{k} \geq \frac{1}{l}^{l-m} = l^{m-l}.$ Finally, equality holds for $m=l$.
\end{proof}

\begin{corollary}
    \label{coro:cometype}
    For $\gamma,\nu \in \ty_n$, 
    \[ \frac{|\Yc_n(\nu) |}{|\Yc_n(\gamma) |}\geq  2^{n(H(\nu) - D(\gamma\|\nu) - H(\gamma))}.\]
\end{corollary}
\begin{proof}
Note that
\[ \frac{|\Yc_n(\nu) |}{|\Yc_n(\gamma) |} = \frac{\binom{n}{n\nu(1), n\nu(2), \cdots, n\nu(M)}}{\binom{n}{n\gamma(1), n\gamma(2), \cdots, n\gamma(M)}} = \prod_{i=1}^M \frac{(n\gamma(i))!}{(n\nu(i))!} \geq \prod_{i=1}^M (n\nu(i))^{n(\gamma(i)-\nu(i))} = \prod_{i=1}^M (\nu(i))^{n(\gamma(i)-\nu(i))}. \]

It is immediate  that,
\[ \prod_{i=1}^M (\nu(i))^{n(\gamma(i)-\nu(i))} =  2^{n(H(\nu) - D(\gamma\|\nu) - H(\gamma))}.\]
\end{proof}

\begin{lemma}
    \label{lem:typecount}
For every $\nu \in \ty_n$,
\[ \frac{1}{|\ty_n|} 2^{nH(\nu)} \leq |\Yc_n(\nu)| \leq 2^{nH(\nu)}\]
\end{lemma}

\begin{proof}
    $\prob_\nu$ be the probability law associated with an infinite sequence of i.i.d. random variables $Y_1,Y_2,..$, distributed according to $\nu$.
    \[ \sum_{y^n \in \Yc_n(\nu)}\prob_\nu[(Y_1,Y_2,..,Y_n)=y^n] \leq 1, \]
    implying (from \Cref{lem:typprob}) that
    \[ |\Yc_n(\nu)| 2^{-nH(\nu)} \leq 1. \]

    Now, we also have
    \begin{align*}
    1&= \sum_{\gamma \in \ty_n} \sum_{y^n \in \Yc_n(\gamma)}\prob_\nu[(Y_1,Y_2,..,Y_n)=y^n]\\
    & = \sum_{\gamma \in \ty_n} |\Yc_n(\gamma)| 2^{-n (H(\gamma) + D(\gamma\|\nu)) }\\
    & \stackrel{(a)}{\leq} \sum_{\gamma \in \ty_n} |\Yc_n(\nu)|2^{-n(H(\nu) - D(\gamma\|\nu) - H(\gamma))} 2^{-n (H(\gamma) + D(\gamma\|\nu)) }\\
    & =  \sum_{\gamma \in \ty_n} |\Yc_n(\nu)| 2^{-n H(\nu)  }\\
    &= |\ty_n| |\Yc_n(\nu)| 2^{-n H(\nu)  }.
    \end{align*}
    Here, $(a)$ follows from \Cref{coro:cometype}.
\end{proof}

\begin{lemma} 
\label{le:typegenprob}
For any $\nu, \mu \in \ty_n$,
    \[\frac{1}{|\ty_n|} 2^{-n D(\nu\|\mu)} \leq \prob_{\mu}(\ty_{Y^n}=\nu) \leq 2^{-n D(\nu\|\mu)}. \]
\end{lemma}
\begin{proof}
    From \Cref{lem:typprob}, we see that
    \[ \prob_{\mu}(\ty_{y^n}=\nu) = |\Yc_n(\nu)| 2^{-n(H(\nu) + D(\nu\|\mu) )}. \]
    The proof is completed by applying \Cref{lem:typecount}.
\end{proof}

\begin{lemma}
\label{le:Dcon}
    Let $\Sigma_{\nu_n}, \Sigma_{\nu} \subseteq \Sigma_\mu$ and $\nu_n \to \nu$. Then $D(\nu_n\|\mu) \to D(\nu\|\mu).$
\end{lemma}
\begin{proof}
This follows as, for $i=1,..,M$, $\nu_n(i) \to \nu(i)$ and hence $\nu_n(i) \log \frac{\nu_{n}(i)}{\mu(i)} \to \nu(i)\log \frac{\nu(i)}{\mu(i)}.$
\end{proof}

\begin{theorem}[Sanov] For every $\Gamma \subseteq \Mc(\Sigma_\mu)$,
\[ -\inf_{\nu \in \Gamma^r} D(\nu\|\mu) \leq \liminf_n \frac{1}{n} \log \prob_{\mu}(\ty_{Y^n}\in \Gamma) \leq \limsup_n \frac{1}{n} \log \prob_{\mu}(\ty_{Y^n}\in \Gamma) \leq -\inf_{\nu \in \Gamma} D(\nu\|\mu). \]
Here, $\Gamma^r = \{ \nu \in \Gamma: \exists \nu_n \in \ty_n \cap \Gamma, \nu_n \to \nu\}$.



\end{theorem}
\begin{proof}
    From \Cref{le:typegenprob}, we have
    \begin{align*}
        \prob_{\mu}(\ty_{Y^n}\in \Gamma) &=\sum_{\gamma \in \ty_n \cap \Gamma} \prob_{\mu}(\ty_{Y^n}= \gamma)\\
        &\leq \sum_{\gamma \in \ty_n \cap \Gamma} 2^{-n D(\gamma\|\mu)}\\
        & \leq \sum_{\gamma \in \ty_n \cap \Gamma} 2^{-n \inf_{\nu \in \Gamma} D(\nu\|\mu)}\\
        & = |\Gamma \cap \ty_n|  2^{-n \inf_{\nu \in \Gamma} D(\nu\|\mu)}\\
        & \leq (n+1)^{(M-1)} 2^{-n \inf_{\nu \in \Gamma} D(\nu\|\mu)}.
    \end{align*}
    Therefore, 
    \[ \frac{1}{n} \log \prob_{\mu}(\ty_{Y^n}\in \Gamma) \leq \frac{M-1}{n} \log (n+1) - \inf_{\nu \in \Gamma} D(\nu\|\mu). \]
    Taking $\limsup_{n}$ on both sides yields the upper bound.

    Given $\nu \in \Gamma^r$. Let $\hat{\nu}_n \in  \ty_n \cap \Gamma$, such that $\hat{\nu}_n \to \nu$. Then
    \begin{align*}
        \prob_{\mu}(\ty_{Y^n}\in \Gamma) &=\sum_{\gamma \in \ty_n \cap \gamma} \prob_{\mu}(\ty_{Y^n}= \gamma)\\
        &\geq \prob_{\mu}(\ty_{Y^n}= \hat{\nu}_n)\\
        & \geq \frac{1}{|\ty_n|} 2^{-n D(\hat{\nu}_n\|\mu)},
    \end{align*}
    where the last inequality follows from \Cref{le:typegenprob}.
    
    Therefore
     \[ \frac{1}{n} \log \prob_{\mu}(\ty_{Y^n}\in \Gamma) \geq -\frac{1}{n} \log |\ty_n|  -  D(\hat{\nu}_n\|\mu) \geq -\frac{M-1}{n} \log(n+1) -  D(\hat{\nu}_n\|\mu). \]
     Taking $\liminf$ and using \Cref{le:Dcon}, we obtain
     \[ \liminf_n \frac{1}{n} \log \prob_{\mu}(\ty_{Y^n}\in \Gamma) \geq  -  D(\nu\|\mu).\]
     Since, this holds for all $\nu \in \Gamma^r$, we get
     \[ \liminf_n \frac{1}{n} \log \prob_{\mu}(\ty_{Y^n}\in \Gamma) \geq  - \inf_{\nu\in \Gamma^r} D(\nu\|\mu).\]
\end{proof}

Given two sets $A$ and $B$, the Hausdorff distance is defined as
$$ d_H(A,B) = \max \{\sup_{x\in A} \inf_{y\in B} d(x,y), \sup_{y\in B} \inf_{x\in A} d(x,y)\}.$$
In the space of probability distributions, let us consider the underlying metric to be the total variation distance.

\begin{lemma}
\label{le:limsaDcon}
Let $\{\Gamma_n\}_{n\geq 1}, \Gamma \subseteq \Mc(\Sigma_\mu)$, and $d_H(\Gamma_n,\Gamma)\to 0$. Then
    \[ \lim_n \inf_{\nu \in \Gamma_n} D(\nu\|\mu) = \inf_{\nu \in \Gamma}  D(\nu\|\mu).\]
\end{lemma}
\begin{proof}
     Let $\nu^* \in \Gamma$ be such that $D(\nu^*||\mu) \leq \inf_{\nu \in \Gamma}  D(\nu||\mu) + \epsilon.$ Note that $d_H(\Gamma_n,\Gamma) \geq \inf_{\hat{\nu} \in \Gamma_n}  d_{TV}(\hat{\nu},\nu^*)$. Since $d_H(\Gamma_n,\Gamma) \to 0$, there exists a sequence $\hat{\nu}_n \in \Gamma_n$ such that $\hat{\nu}_n \to \nu^*. $ Hence from Lemma \ref{le:Dcon}, $\lim_n D(\hat{\nu}_n||\mu) \to D(\nu^*||\mu)$. Now 
    \[ \limsup_n \inf_{\nu \in \Gamma_n} D(\nu||\mu) \leq  \limsup_n D(\hat{\nu}_n||\mu) = D(\nu^*||\mu) \leq \inf_{\nu \in \Gamma}  D(\nu||\mu) + \epsilon, \]
    implying
    $ \limsup_n \inf_{\nu \in \Gamma_n} D(\nu||\mu) \leq \inf_{\nu \in \Gamma}  D(\nu||\mu)$.

    Let $n_k$ be a subsequence such that $\inf_{\nu \in \Gamma_{n_k}} D(\nu\|\mu) \stackrel{k\to\infty}{\to} \liminf_n \inf_{\nu \in \Gamma_n} D(\nu\|\mu)$. Consider $\nu_k \in \Gamma_{n_k}$ such that $D(\nu_k\|\mu) \leq \inf_{\nu \in \Gamma_{n_k}} D(\nu,\mu) + \frac{\epsilon}{k}$.
    Since $\mathcal{M}(\Sigma_\mu)$ is compact, there exists a convergent subsequence $\{k_l\}$, i.e. $\nu_{k_l} \to \nu^*$, for some $\nu^* \in \mathcal{M}(\Sigma_\mu)$. By construction,
    \[\limsup_{l}  D(\nu_{k_l}\|\mu) = \limsup_{l}  D(\nu_{k_l}\|\mu) - \frac{\eps}{k_l} \leq\limsup_{k} D(\nu_{k}\|\mu) - \frac{\eps}{k} \leq \limsup_k \inf_{\nu \in \Gamma_{n_k}} D(\nu,\mu) = \liminf_n \inf_{\nu \in \Gamma_n} D(\nu,\mu).  \]
    Since $\nu_{k_l} \to \nu^*$, \Cref{le:Dcon} yields $D(\nu_{k_l}\|\mu) \to D(\nu^*\|\mu).$ Therefore
    \[D(\nu^*\|\mu) \leq \liminf_n \inf_{\nu \in \Gamma_n} D(\nu,\mu). \]

     As $d_H(\Gamma_{n_k},\Gamma) \geq \inf_{\hat{\nu} \in \Gamma}  d_{TV}(\hat{\nu},\nu_k)$, let $\nu^\dagger_k \in \Gamma$, satisfy $d_{TV}(\nu^\dagger_k,\nu_k) \leq d_H(\Gamma_{n_k},\Gamma) + \frac{\epsilon}{k}.$ 
    Note that
    \[d_{TV}(\nu^\dagger_{k_l},\nu^*) \leq d_{TV}(\nu^\dagger_{k_l},\nu_{k_l}) + d_{TV}(\nu_{k_l},\nu^*) \leq d_H(\Gamma_{n_{k_l}},\Gamma) + \frac{\epsilon}{k_l} + d_{TV}(\nu_{k_l},\nu^*).\]
    Therefore, taking $l \to \infty$, we see that $\nu^\dagger_{k_l} \to \nu^*$. Finally as $\nu^\dagger_{k_l} \in \Gamma$,
    \[ \inf_{\nu \in \Gamma} D(\nu\|\mu) \leq \liminf_l D(\nu^\dagger_{k_l}\|\mu) = D(\nu^*\|\mu).\]
    Putting all this together, we obtain
    \[ \inf_{\nu \in \Gamma} D(\nu|\mu) \leq  D(\nu^*\|\mu) \leq  \liminf_n \inf_{\nu \in \Gamma_n} D(\nu,\mu) \leq \limsup_n \inf_{\nu \in \Gamma_n} D(\nu,\mu) \leq \inf_{\nu \in \Gamma} D(\nu,\mu), \]
    establishing the lemma.
\end{proof}

\begin{theorem}[limiting Sanov]
\label{th:limSan}Let $\{\Gamma_n\}_{n\geq 1}, \Gamma \subseteq \Mc(\Sigma_\mu)$, and $d_H(\Gamma_n,\Gamma)\to 0$.
Then,
\begin{equation} -\inf_{\nu \in \Gamma^r} D(\nu\|\mu) \leq \liminf_n \frac{1}{n} \log \prob_{\mu}(\ty_{Y^n}\in \Gamma_n) \leq \limsup_n \frac{1}{n} \log \prob_{\mu}(\ty_{Y^n}\in \Gamma_n) \leq -\inf_{\nu \in \Gamma} D(\nu\|\mu). 
\label{eq:limSan}
\end{equation}
Here, $\Gamma^r = \{ \nu \in \Gamma: \exists \nu_n \in \ty_n \cap \Gamma_n, \nu_n \to \nu\}$. Clearly $\Gamma^r \supseteq \Gamma^o$, where $\Gamma^o$ is the interior of $\Gamma$ considered as a subset of $\Mc(\Sigma_\mu)$. In particular, if $\Gamma_n \subseteq \ty_n$, then $\Gamma^r=\Gamma$, and 
\[  \lim_n \frac{1}{n} \log \prob_{\mu}(\ty_{Y^n}\in \Gamma_n) = -\inf_{\nu \in \Gamma} D(\nu\|\mu). \]
\end{theorem}

\begin{proof}
    Note that
    \begin{align*}
        \prob_{\mu}(\ty_{Y^n}\in \Gamma_n) &=\sum_{\gamma \in \ty_n \cap \Gamma_n} \prob_{\mu}(\ty_{Y^n}= \gamma)\\
        &\stackrel{(a)}{\leq} \sum_{\gamma \in \ty_n \cap \Gamma_n} 2^{-n D(\gamma\|\mu)}\\
        & \leq \sum_{\gamma \in \ty_n \cap \Gamma_n} 2^{-n \inf_{\nu \in \Gamma_n} D(\nu\|\mu)}\\
        & = |\Gamma_n \cap \ty_n|  2^{-n \inf_{\nu \in \Gamma_n} D(\nu_n\|\mu)}\\
        & \leq (n+1)^{(M-1)} 2^{-n \inf_{\nu \in \Gamma_n} D(\nu\|\mu)}.
    \end{align*}
    Here $(a)$ follows from \Cref{le:typegenprob}.
    Therefore, 
    \[ \frac{1}{n} \log \prob_{\mu}(\ty_{Y^n}\in \Gamma_n) \leq \frac{M-1}{n} \log (n+1) - \inf_{\nu \in \Gamma_n} D(\nu\|\mu). \]
    Taking $\limsup_{n}$ on both sides and using \Cref{le:limsaDcon} yields the upper bound.

    Given $\nu \in \Gamma^r$, let $\hat{\nu}_n \in  \ty_n \cap \Gamma, \hat{\nu}_n \to \nu$. Then
    \begin{align*}
        \prob_{\mu}(\ty_{Y^n}\in \Gamma_n) &=\sum_{\gamma \in \ty_n \cap \Gamma_n} \prob_{\mu}(\ty_{Y^n}= \gamma)\\
        &\geq \prob_{\mu}(\ty_{Y^n}= \hat{\nu}_n)\\
        & \stackrel{(a)}{\geq} \frac{1}{|\ty_n|} 2^{-n D(\hat{\nu}_n\|\mu)}.
    \end{align*}
    Here, again, $(a)$ follows from \Cref{le:typegenprob}.
    Therefore
     \[ \frac{1}{n} \log \prob_{\mu}(\ty_{Y^n}\in \Gamma_n) \geq -\frac{1}{n} \log |\ty_n|  -  D(\hat{\nu}_n\|\mu). \]
     Taking $\liminf$ and using \Cref{le:Dcon}, we obtain
     \[ \liminf_n \frac{1}{n} \log \prob_{\mu}(\ty_{Y^n}\in \Gamma_n) \geq  -  D(\nu\|\mu).\]
     Since, this holds for all $\nu \in \Gamma^r$, we get
     \[ \liminf_n \frac{1}{n} \log \prob_{\mu}(\ty_{Y^n}\in \Gamma_n) \geq  - \inf_{\nu\in \Gamma^r} D(\nu\|\mu).\]
     This proves \eqref{eq:limSan}.

     If $\Gamma_n \subseteq \ty_n$, then as $d_H(\Gamma_n,\Gamma) \to 0$ it is immediate that for every $\nu \in \Gamma, \exists \nu_n \in \Gamma_n = \Gamma_n \cap \ty_n$ such that $d_{TV}(\nu_n,\nu)\to 0$. This implies that $\Gamma^r = \Gamma$. 
\end{proof}

\section{Maximal couplings}
\label{sec:max-coup}
In this section, we will show the derivation of maximal coupling from the discrete Sanov theorem. Let $p_X, p_Y$ be two distributions supported on a finite alphabet $\Sigma$. Let $\{\omega_n\}$ be a non-negative sequence such that $\omega_n \to 0$ as $n \to \infty$ and $\omega_n \sqrt{n} \to \infty$ as $n \to \infty$.
Define $A_n \subseteq \Sigma^n$ as
$$A_n = \{ g^n \in \Sigma^n: |\ty_{g^n}(a)-p_X(a)| \leq  \omega_n p_X(a) , \forall a \in \Sigma\}.$$
\begin{remark}
The set $A_n$ is usually called the typical sequences (or strongly-typical sequences) corresponding to distribution $p_X$. These sets play an important role in network information theory, particularly in the proofs of the channel coding theorems.
\end{remark}
Similarly, let
$$B_n = \{ g^n \in \Sigma^n: |\ty_{g^n}(a)-p_Y(a)| \leq \omega_n p_Y(a) ,  \forall a \in \Sigma\}.$$

Define $\hat{\Gamma}_n = \{\nu \in \mathcal{M}(\Sigma\times \Sigma): \nu=\ty_{(g_1^n,g_2^n)}~\mbox{for some}~(g_1^n,g_2^n) \in A_n \times B_n\}$ and 
$\hat{\Gamma} = \Pi(p_X,p_Y)$, the set of all couplings with the given marginals. 

\begin{lemma}
    \label{le:limtypical} Let $\hat{\Gamma}_n$ and $\hat{\Gamma}$ be as described above. Then
    \[ d_H(\hat{\Gamma}_n,\hat{\Gamma}) \to 0.\]
\end{lemma}
\begin{proof}
    Let $\nu \in \hat{\Gamma}$. Then we know, from \Cref{lem:type-el} that there exists $\hat{\nu}_n \in \ty_n$  such that $|\hat{\nu}_n(a,b) - \nu(a,b)| \leq \frac{1}{n}$ for all $(a,b) \in \Sigma_A \times \Sigma_B$, and if $\nu(a,b)=0$, then $\hat{\nu}_n(a,b)=0$.   Now 
    \[ \left|\sum_{b \in \Sigma_B} (\hat{\nu}_n(a,b) - \nu(a,b))\right| \leq \sum_{b \in \Sigma_B} |\hat{\nu}_n(a,b) - \nu(a,b)| \leq \frac{|\Sigma_B|}{n}.  \]
    Note that $\sqrt{n}\omega_n \nu(a,b) \to \infty$, for all $(a,b): \nu(a,b) > 0$. Therefore,
    for large $n$, $\frac{\Sigma_B}{n} \leq \omega_n p_X(a)= \omega_n \sum_{b}\nu(a,b)$.  Similarly, for large $n$,
     \[ \left|\sum_{a \in \Sigma_A} (\hat{\nu}_n(a,b) - \nu(a,b))\right| \leq \sum_{a \in \Sigma_A} |\hat{\nu}_n(a,b) - \nu(a,b)| \leq \frac{|\Sigma_A|}{n} \leq \omega_n p_Y(b)= \omega_n, \sum_{a}\nu(a,b).  \]
     Therefore, for large $n$, any $(g_1^n,g_2^n)$, such that $\ty_{g_1^n,g_2^n} = \hat{\nu}_n(a,b)$, is an element of $A_n \times B_n$, or that $\hat{\nu}_n \in \hat{\Gamma}_n$. Further, note that, $d_{TV}(\hat{\nu}_n,\nu) \leq \frac{|\Sigma_A|| \Sigma_B|}{2n}$. Since this holds for any $\nu \in \hat{\Gamma}$, we obtain that $\sup_{\nu \in \hat{\Gamma}} \inf_{\nu_n \in \hat{\Gamma}_n} d_{TV}(\nu,\nu_n) \to 0$ as $n \to \infty$.

     Now suppose $\sup_{\nu_n \in \hat{\Gamma}_n} \inf_{\nu \in \hat{\Gamma}} d_{TV}(\nu,\nu_n) \centernot{\to} 0$. There, there is a subsequence $n_k$ and $\epsilon >0$ such that $\sup_{\nu_{k} \in \hat{\Gamma}_{n_k}} \inf_{\nu \in \hat{\Gamma}} d_{TV}(\nu,\nu_{k}) > \epsilon$. Therefore, there is a sequence $\nu_k \in \hat{\Gamma}_{n_k}$ such that $\inf_{\nu \in \hat{\Gamma}} d_{TV}(\nu_k,\nu) > \frac{\epsilon}{2}$.  As $\mathcal{M}(\Sigma_A \times \Sigma_B)$ is a compact set, and $\hat{\Gamma}_{n_k} \subset \mathcal{M}(\Sigma_A \times \Sigma_B)$, we have a convergent subsequence $\nu_{k_l} \to \nu^\dagger$. By definition of $\hat{\Gamma}_{n_{k_l}}$  we have
     \[ \left|\left(\sum_b\nu_{k_l}(a,b)\right) - p_X(a) \right| \leq \omega_{n_{k_l}} p_X(a). \]
     As $\omega_{n_{k_l}} \to 0$, $\sum_b \nu^\dagger(a,b) = p_X(a)$. Similarly,  $\sum_a \nu^\dagger(a,b) = p_Y(b)$. Therefore $\nu^\dagger \in \hat{\Gamma}$. Therefore $d_{TV}(\nu_k,\nu^\dagger) \to 0$, contradicting $\inf_{\nu \in \hat{\Gamma}} d_{TV}(\nu_k,\nu) > \frac{\epsilon}{2}$. This shows that $\sup_{\nu_n \in \hat{\Gamma}_n} \inf_{\nu \in \hat{\Gamma}} d_{TV}(\nu,\nu_n) \to 0$ as desired.
\end{proof}

\begin{lemma}[Data-Processing]
    \label{le:dataproc}
    Let $W_{Y|X}$ be a stochastic mapping (channel). Let $p_X,q_X$ be two distributions on $\mathcal{X}$ and $p_Y = \sum_{x} W_{Y|X}p_X$ and $q_Y = \sum_{X} W_{Y|X} q_Y$ be the two induced distributions on $\mathcal{Y}$. Then $d_{TV}(p_X,q_X) \geq d_{TV}(p_Y,q_Y)$.
\end{lemma}
\begin{proof} Observe the following:
\begin{align*}
    \sum_{y} | p_Y(y) - q_Y(y)| & = \sum_{y} \left|\sum_{x} W_{Y|X} (y|x)(p_X(x) - q_X(x)) \right|\\
    & \leq \sum_{x,y} W_{Y|X}(y|x) |p_X(x) - q_X(x)|\\
    & = \sum_x |p_X(x) - q_X(x)|
  \end{align*}
\end{proof}

\begin{theorem}
Let $p_X$ and $p_Y$ be two distributions having finite support on an Abelian group $\mathbb{G}$.  Let $\{\omega_n\}$ be a non-negative sequence such that $\omega_n \to 0$ as $n \to \infty$ and $\omega_n \sqrt{n} \to \infty$ as $n \to \infty$.
Define $A_n \subseteq \mathbb{G}^n$, as
$$A_n = \{ g^n \in \mathbb{G}^n: |k_{g^n}(a)-np_X(a)| \leq n p_X(a) \omega_n, \forall a \in \mathbb{G}\}.$$
Here $k_{g^n}(a):=|\{i:g_i=a, 1 \leq i \leq n\}|.$
Similarly, let
$$B_n = \{ g^n \in \mathbb{G}^n: |k_{g^n}(a)-np_Y(a)| \leq n p_Y(a) \omega_n,  \forall a \in \mathbb{G}\}.$$
    $$\lim_{n} \frac{1}{n} \log |A_n + B_n| = \max_{q\in \Pi(p_X,p_Y)} H_q(X+Y).$$
    Here $\Pi(p_X,p_Y)$ is the set of joint distributions (couplings) such that the marginals are $p_X$ and $p_Y$, respectively.
\end{theorem}
\begin{proof}
     Let $S$ be a finite subset of $\mathbb{G}$ such that $S$ contains support of $p_X, p_Y$ and $p_{X+Y}$. Let $\mu$ be the uniform distribution on $S$. 
Let $C_n=A_n + B_n$, and $\Gamma_n = \{\nu \in \mathcal{M}(S): \nu=\ty_{g^n}~\mbox{for some}~g^n \in C_n\}$. Let
$\Gamma = \{\nu\in \mathcal{M}(\mathbb{G}): \nu = q_{X+Y}, q_{X,Y} \in \Pi(p_X,p_Y)\}.$ Define $\hat{\Gamma}_n = \{\nu \in \mathcal{M}(S\times S): \nu=\ty_{(g_1^n,g_2^n)}~\mbox{for some}~(g_1^n,g_2^n) \in A_n \times B_n\}$ and 
$\hat{\Gamma} = \Pi(p_X,p_Y)$, the set of all couplings with the given marginals.

For $\nu \in \ty_n$, we had defined the type-class by
$\Yc_n(\nu) = \{y^n \in S^n: \ty_{y^n}=\nu\}.$ Hence, $\nu \in \Gamma_n$ if and only if $\mathcal{Y}_n(\nu) \subseteq C_n$. Note that, by definition, the sets $A_n$, $B_n$, and $C_n$ are permutation invariant. Therefore
\[ \prob_{\mu}(\ty_{Y^n}\in \Gamma_n) = \prob_{\mu}({Y^n}\in C_n) = \frac{|C_n|}{|S|^n}. \]
From Lemma \ref{le:limtypical}, $d_H(\hat{\Gamma}_n,\hat{\Gamma}) \to 0$, and since $\hat{\Gamma}_n \subseteq \ty_n$, $\hat{\Gamma}=\hat{\Gamma}^r$. Considering $(X,Y) \mapsto X+Y$, by Lemma \ref{le:dataproc}, we obtain $d_H(\Gamma_n,\Gamma) \to 0$ and similarly as $\Gamma_n \subseteq \ty_n$, $\Gamma=\Gamma^r$.
Therefore, we apply Theorem \ref{th:limSan} to obtain
\[\lim_n \frac{1}{n} \log \prob_{\mu}(\ty_{Y^n}\in \Gamma_n) = -\inf_{\nu \in \Gamma} D(\nu\|\mu)\]
or equivalently
\[ \lim_n \frac{1}{n} \log \frac{|C_n|}{|S|^n} = -\inf_{\nu \in \Gamma} (\log|S| - H_\nu(X+Y)) = \sup_{\nu \in \Gamma} H_\nu(X+Y) -  \log|S|.\]
Since $\Gamma$ and $\hat{\Gamma}$ is compact, and 
$ \sup_{\nu \in \Gamma} H_\nu(X+Y) =  \max_{q\in \Pi(p_X,p_Y)} H_q(X+Y)$, we are done.
\end{proof}
\end{appendices}
\end{document}